\font\dsrom=dsrom10 scaled 1200
\def \indic{\textrm{\dsrom{1}}}
\newcommand{\nat}{\mathbb{N}}
\newcommand{\real}{\mathbb{R}}
\newcommand{\EvalRP}{{\upshape \textsf{EvalRP}}\xspace}
\newcommand{\EstimRP}{{\upshape \textsf{EstimRP}}\xspace}
\newcommand{\EvalEV}{{\upshape \textsf{EvalER}}\xspace}
\newcommand{\EstimEV}{{\upshape \textsf{EstimER}}\xspace}
\newif\ifARXIV
\title{Beyond Decisiveness of Infinite Markov Chains}
\author{Benoît Barbot}{Univ Paris Est Creteil, LACL, F-94010 Creteil, France}{benoit.barbot@lacl.com}{https://orcid.org/0000-0003-2417-3064}{}
\author{Patricia Bouyer}{Université Paris-Saclay, CNRS, ENS
  Paris-Saclay, Laboratoire Méthodes Formelles, 91190 Gif-sur-Yvette, France}{patricia.bouyer@lmf.cnrs.fr}{0000-0002-2823-0911}{}
\author{Serge Haddad}{Université Paris-Saclay, CNRS, ENS Paris-Saclay,
  Laboratoire Méthodes Formelles, 91190 Gif-sur-Yvette, France}{serge.haddad@lmf.cnrs.fr}{https://orcid.org/0000-0002-1759-1201}{}
\authorrunning{B. Barbot, P. Bouyer and S. Haddad}
\keywords{Markov Chains, Infinite State Systems, Numerical and Statistical Verification}
\begin{document}
\maketitle
\begin{abstract}
  Verification of infinite-state Markov chains is still a challenge
  despite several fruitful numerical or statistical approaches. For
  \emph{decisive} Markov chains, there is a simple numerical algorithm
  that frames the reachability probability as accurately as required
  (however with an unknown complexity). On the other hand when
  applicable, statistical model checking is in most of the cases very
  efficient. Here we study the relation between these two approaches
  showing first that decisiveness is a necessary and sufficient
  condition for almost sure termination of statistical model
  checking. Afterwards we develop an approach with application to both
  methods that substitutes to a non decisive Markov chain a decisive
  Markov chain with the same reachability probability. This approach
  combines two key ingredients: abstraction and importance sampling (a
  technique that was formerly used for efficiency). We develop this
  approach on a generic formalism called layered Markov chain (LMC).
  Afterwards we perform an empirical study on probabilistic pushdown
  automata (an instance of LMC) to understand the complexity factors
  of the statistical and numerical algorithms. To the best of our
  knowledge, this prototype is the first implementation of the
  deterministic algorithm for decisive Markov chains and required us
  to solve several qualitative and numerical issues.
\end{abstract}

\section{Introduction}

\subparagraph{Infinite-state discrete time Markov chains.}

In finite Markov chains, computing reachability probabilities can be
performed in polynomial time using linear algebra
techniques~\cite{Rutten04}.  The case of infinite Markov chains is
much more difficult, and has initiated several complementary
proposals:
\begin{itemize}[nosep]
  \item A first approach consists in analyzing the high-level
        probabilistic model that generates the infinite Markov chains.  For
        instance in~\cite{Esparza06}, the authors study probabilistic
        pushdown automata and show that the reachability probability can be
        expressed in the first-order theory of the reals.  Thus (by a
        dichotomous algorithm) this probability can be approximated within
        an arbitrary precision.
  \item A second approach consists in designing algorithms, whose
        correctness relies on a semantic property of the Markov chains, and
        which outputs an interval of the required precision that contains
        the reachability probability. \emph{Decisiveness}~\cite{ABM07} is
        such a property: given a target set $T$, it requires that almost
        surely, a random path either visits $T$ or some state from which $T$
        is unreachable. In order to be effective, this algorithm needs the
        decidability of the (qualitative) reachability problem. For instance
        finite Markov chains are decisive w.r.t. any set of states.  Several
        other classes of denumerable Markov chains are decisive by
        construction: Petri nets (or equivalently VASS) with constant
        weights\footnote{That is, each transition is assigned a weight, and
          the probability for a transition to be fired is its relative
          weight w.r.t. all enabled transitions.} on transitions w.r.t. any
        upward-closed target set~\cite{ABM07}, lossy channel systems with
        constant weights and constant message loss probability~\cite{ABM07}
        w.r.t. any finite target set, regular Petri nets with arbitrary
        weights w.r.t. any finite target set~\cite{FHY23}.  Also a critical
        associated decision problem is the decidability of decisiveness in
        the high-level model that generates the Markov chains. Decisiveness
        is decidable for several classes of systems: probabilistic pushdown
        automata with constant weights~\cite{Esparza06}, random walks with
        polynomial weights~\cite{FHY23} which can be generalized to
        probabilistic homogeneous one-counter machines with polynomial
        weights~\cite{FHY23}. This class is particularly interesting since
        it extends the well-known model of quasi-birth death processes
        (QBDs).
  \item The statistical model checking (SMC)~\cite{YS02,YS06} approach consists in
        generating numerous random paths and computing an interval of the required
        precision that contains the reachability probability with an arbitrary high
        probability (the confidence level).  As we will show later on, the
        effectiveness of SMC also requires some semantic property,
        which will happen to be  decisiveness.
\end{itemize}

\subparagraph{SMC and importance sampling.} When the reachability
probability is very small, the SMC approach requires a huge number of
random paths, which prohibits its use. In order to circumvent this
problem (called the \emph{rare event problem}), several approaches
have been proposed (see for instance~\cite{RT2009}), among which the
\emph{importance sampling} method. This seems to be in practice, one
of the most efficient approaches to tackle this problem. Importance
sampling consists in sampling the paths in a biased\footnote{In the
  sense that probability values in the biased chain differ from the
  original chain.} Markov chain (w.r.t. the original one) that
increases the reachability probability. In order to take into account
the bias, a likelihood for any path is computed (on-the-fly) and the
importance sampling algorithm returns the empirical average value of
the likelihood. While the expected returned value is equal to the
reachability probability under evaluation, the confidence interval
returned by the algorithm is, without further assumption, only
``indicative'' (i.e., it does not necessary fulfill the features of a
confidence interval) -- boundedness of the likelihood is indeed
required (but hard to ensure). In~\cite{BHP12}, a simple relation
between the biased and the original finite Markov chain is stated that
(1) ensures that the confidence interval returned by the algorithm is
a ``true'' interval and that (2) the variance of the estimator (here
the likelihood) is reduced w.r.t. the original estimator, entailing an
increased efficiency of the SMC.

\subparagraph{Related work when the Markov chain is not decisive.}
Very few works have addressed the effectiveness of SMC for infinite
non decisive Markov chains. The main proposal~\cite{YounesCZ10}
consists in stopping the computation at each step with some fixed
(small) probability.  The successful paths are equipped with a
numerical value, whose average over the paths is returned by the
algorithm. It turns out that it is an importance sampling method,
which has surprisingly not been pointed out by the authors. However
here again, since the likelihood is (in general) not bounded, the
interval returned by the algorithm is not a confidence interval.  An
alternative notion called divergence has been proposed
in~\cite{FHY23b} to partly cover the case of non-decisive Markov
chains.

\subparagraph{Our contributions.} We introduce the \emph{reward
  reachability problem} (a slight generalization of the reachability
problem) by associating a reward with every successful path and
looking for the expected reward. We first establish that decisiveness
is a necessary and sufficient condition for the almost-sure
termination of SMC for bounded rewards.
\begin{itemize}[nosep]
  \item Our major contribution consists in establishing a relation
        between a non-decisive Markov chain and an auxiliary Markov chain,
        called an \emph{abstraction}, with the following property: they
        can be combined into a biased Markov chain, which happens to be
        \textbf{decisive};  the SMC with importance sampling on this chain
        provides a confidence interval for the reachability probability of
        the original Markov chain.
  \item We furthermore show that importance sampling can be applied to
        adapt (based on the abstraction) the deterministic algorithm
        of~\cite{ABM07}.
  \item Afterwards we illustrate the interest of
        this approach, by exhibiting a generic model called \emph{layered Markov
          chains} (LMC), which can be instantiated for instance by
        probabilistic pushdown automata with polynomial weights. These
        automata cannot be handled with the technique of~\cite{Esparza06}.
  \item Finally we present several experiments, based on the tool
        Cosmos~\cite{refcosmos}, which compare the SMC and the deterministic
        approaches. It allows to identify how various factors impact the
        efficiency of the  algorithms. We provide within the tool
        Cosmos the first implementation of the deterministic approach for
        decisive Markov chains, which required us to solve several
        numerical issues. As a rough summary, at the price of a confidence
        level against certainty, the computing time of SMC is generally
        several magnitude orders smaller than the one of the deterministic
        algorithm.
\end{itemize}

\subparagraph{Organization.} In section~\ref{sec:preliminaries}, we
introduce the numerical and statistical specification of the reward
reachability problem, and we recall the notion of decisiveness. In
section~\ref{sec:decisive}, we focus on the decisiveness property
establishing that decisiveness is a necessary and sufficient condition
for almost sure termination of statistical model checking.
Section~\ref{sec:beyond} contains our main contribution: the
specification of an abstraction of a Markov chain, its use for solving
the reward reachability problems for non decisive Markov chains via
importance sampling and the development of this method for LMCs.
Afterwards in Section~\ref{sec:experiments}, we present some
implementation details and experimentally compare the deterministic
and the statistical approaches. We conclude and give some perspectives
to this work in Section~\ref{sec:conclusion}.

\ifARXIV Missing proofs and more details on the implementation can
  found in the appendix.
\else Some missing proofs and more
  details on the implementation can be found in the appendix. \pat{Full
    proofs are given in \fbox{ARXIV}.}  \fi

\section{Preliminaries}
\label{sec:preliminaries} \label{sec:prelim}


In this preliminary section we define Markov chains, and the
decisiveness property.

\begin{definition}
  \label{def:dtmc}
  A \emph{discrete-time Markov chain} (or simply \emph{Markov chain})
  $\calC = (S,\Pt)$ is defined by a countable set of states
  $S$ and a transition probability matrix $\Pt$ of size $S \times S$.
  Given an initial state $s_0 \in S$, the state of the chain at time $n$ is a
  random variable (r.v. in short) $X^{\calC,s_0}_n$ defined by:
  $\Pr\big(X^{\calC,s_0}_0=s_0\big)=1$ and
  $\Pr\big(X^{\calC,s_0}_{n+1}=s' \mid \bigwedge_{i\leq
    n}X^{\calC,s_0}_i=s_i\big)=\Pt(s_n,s')$.
\end{definition}

If $\calC = (S,\Pt)$ is a Markov chain, we write
$E_\calC = \{(s,s') \in S \times S \mid \Pt(s,s')>0\}$ for the set of
edges of $\calC$, and $\to_\calC$ for the corresponding edge
relation. A state $s$ is \emph{absorbing} if $\Pt(s,s)=1$.
A Markov chain $\calC$ is said \emph{effective} whenever for every
$s\in S$, the support of $\Pt(s,\cdot)$ is finite and computable, and
for every $s,s' \in S$, $\Pt(s,s')$ is computable. A target set
$T \subseteq S$ is said \emph{effective} whenever its membership
problem is decidable.  In the following, we will always consider
effective Markov chains and effective target sets when speaking of
algorithms, without always specifying it.

A finite (resp. infinite) path is a finite (resp. infinite) sequence
of states $\rho = s_0 s_1 s_2 \ldots \in S^+$ (resp. $S^\omega$) such
that for every $0 \le i$, $(s_i,s_{i+1}) \in E_\calC$. We write
$\first(\rho)$ for $s_0$, and whenever $\rho \in S^+$, we write
$\last(\rho)$ for the last state of $\rho$. For every $n \in \bbN$, we
write $\rho[n] \eqdef s_n$ and
$\rho_{\le n} \eqdef s_0 s_1 s_2 \ldots s_n$.  If
$\rho=s_0\ldots s_n$, $\Pr(\rho)$ is equal to
$\prod_{i<n}\Pt(s_i,s_{i+1})$ and corresponds to the probability that
this path is followed when starting from its initial state $s_0$.

The random infinite path generated by process
$(X^{\calC,s_0}_n)_{n \in \bbN}$ will be denoted
$\varrho^{\calC,s_0}$.  Note that $s \to^*_{\calC} s'$ if and only if
$\Pr \big(\varrho^{\calC,s} \models \Diamond \{s'\} \big)>0$ (we use
the $\Diamond$ modality of temporal logics, which expresses
\textit{Eventually}, and later, we will also write $\Diamond_{>0}$ for
the \textit{strict Eventually} modality --eventually but not now--, as
well as $\Diamond_{\le n}$ for \textit{$n$-steps Eventually}).
Finally, for every $s,s' \in S$, we define the time from $s$ to $s'$
as the random variable
$\tau^{\calC,s,s'} = \min\{i \in \bbN \mid i>0\ \text{and}\
X_i^{\calC,s} = s'\}$, with values in $\bbN_{>0} \cup \{+\infty\}$.
To ease the reading, we will omit subscripts $_{\calC}$ and $_T$, or
superscripts $^{\calC}$ in the various notations, whenever it is
obvious in the context.

In this paper we are interested in evaluating the probability to reach
a designed target set $T$ from an initial state $s_0$ in a Markov
chain $\calC$, that is,
$\mu_{\calC,T}(s_0) \eqdef \Pr \big(\varrho^{\calC,s_0} \models
\Diamond T\big)$. \ifARXIV
In general, it might be difficult to compute such a
value, which will often not even be a rational number (see
Appendix~\ref{app:termination}). \else 
In general, it might be difficult to compute such a
value, which will often not even be a rational number.
\fi
That is why like many other research works
we will show how to compute accurate approximations (surely or with
a high level of confidence).  We present our solutions in a more general
setting which would anyway be necessary in the following developments.

\begin{definition}
  Let $T\subseteq S$ and $\rho \in S^\omega$. We let
  $\first_T(\rho) := \min \{i \in \bbN \mid \rho[i] \in T\} \in \bbN
  \cup \{\infty\}$.  Let $L : S^+ \to \bbR$ be a function.  The
  function $f_{L,T} : S^\omega \to \bbR$ is then defined
  by:\footnote{This function is measurable as a pointwise limit of
    measurable functions.}
  \[
    f_{L,T}(\rho) := \left\{\begin{array}{ll} L\big(\rho_{\le
                             \first_T(\rho)}\big) & \text{if}\ \first_T(\rho) \in \bbN \\
                           0 & \text{otherwise} \end{array} \right.
  \]
\end{definition} 

We say that $f_{L,T}(\rho)$ is the \emph{reward} of $\rho$. The
function $f_{L,T}$ is called the \emph{$T$-function for $L$}; let
$B \in \bbR_{>0}$, $f_{L,T}$ is said \emph{$B$-bounded} whenever
$\max(|f_{L,T}(\rho)|\mid \rho \in S^\omega) \leq B$. Observe that
$f_{L,T}$ could be $B$-bounded for some $B$ even if $L$ is unbounded.

We will be interested in evaluating the expected reward
$\nu_{\calC,L,T}(s_0) \eqdef
\Es\big(f_{L,T}(\varrho^{\calC,s_0})\big)$.\footnote{$\Es$ denotes the
  expectation.} Note that if $L$ is constant equal to $1$, then
$f_{L,T} = \indic_{\Diamond T}$ is the indicator function for paths
that visit $T$, in which case
$\nu_{\calC,L,T}(s_0) = \mu_{\calC,T}(s_0)$.

We define two problems related to the accurate
estimation of these values:
\begin{itemize}[nosep]
\item The \EvalEV problem (\EvalEV stands for ``Evaluation of the
  Expected Reward'') asks for \emph{a deterministic algorithm}, which:
  \begin{enumerate}[nosep]
  \item takes as input a Markov chain $\calC$, an initial state $s_0$,
    a computable function $L : S^+ \to \bbR_{\ge 0}$, a target set
    $T$, a precision $\varepsilon>0$, and
  \item outputs an interval $I \subseteq \bbR$ of length bounded by
    $\varepsilon$ such that $\nu_{\calC,L,T}(s_0) \in I$.
  \end{enumerate} 
  The particular case of the
  reachability probability (when $L$ is constant equal to $1$) is
  denoted \EvalRP.
\item The \EstimEV problem (\EstimEV stands for ``Estimation of the
  Expected Reward'') asks for a \emph{probabilistic Las Vegas algorithm},
  which:
  \begin{enumerate}[nosep] 
  \item takes as input a Markov chain $\calC$, an initial state $s_0$,
    a computable function $L : S^+ \to \bbR_{\ge 0}$, a target set
    $T$, a precision $\varepsilon>0$, a confidence value $\delta>0$, and
  \item outputs a random interval $I \subseteq \bbR$ of length bounded
    by $\varepsilon$ such that
    $\Pr\big(\nu_{\calC,L,T}(s_0) \notin I\big) \leq \delta$, and
    $\Es(\textsf{mid}(I)) = \nu_{\calC,L,T}(s_0)$, where
    $\textsf{mid}(I)$ is the middle of interval $I$.\footnote{The last
      condition on the middle of $I$ means that the estimator is
      unbiased.}
  \end{enumerate}  
  The particular case of
  the reachability probability (when $L$ is constant equal to $1$) is
  denoted \EstimRP.
\end{itemize}

\medskip In~\cite{ABM07}, the concept of decisiveness for Markov
chains was introduced. Roughly, decisiveness allows to lift some
``good'' properties of finite Markov chains to countable Markov
chains. We recall this concept here.  Let $T\subseteq S$ and denote
the ``avoid set'' of $T$ by
$\Av_{\calC}(T) \eqdef \{s \in S \mid \Pr(\varrho^{\calC,s} \models
\Diamond T) = 0\}$.
\begin{definition}
  The Markov chain $\calC$ is \emph{decisive w.r.t. $T$ from $s_0$} if
  $\Pr\big(\varrho^{\calC,s_0} \models \Diamond T \vee \Diamond
  \Av_{\calC}(T)\big) =1$.
\end{definition}


\section{Analysis of decisive Markov chains}
\label{sec:decisive}


We fix for this section a Markov chain $\calC = (S,\Pt)$, an initial
state $s_0$, a computable function $L : S^+ \to \bbR_{\ge 0}$ and a
target set $T$, and we assume w.l.o.g. that $T$ is a set of absorbing
states.  We present two approaches (extended from the original ones)
to compute the expected value of the function $f_{L,T}$ that require
$\calC$ to be decisive w.r.t. $T$ from $s_0$.

\subsection{Decisiveness and approximation algorithm}

In the original paper proposing the concept of
decisiveness~\cite{ABM07}, ``theoretical'' approximation schemes were
designed. We slightly extend the one designed for reachability
objectives in our more general setting, see
Algorithm~\ref{algo:approx}.

\begin{algorithm}
  \SetKwInOut{Input}{input}\SetKwInOut{Output}{output}
  \Input{$\calC=(S,\Pt)$ a countable Markov chain, $s_0 \in S$ an
    initial state, $L : S^+ \to \bbR_{\ge 0}$ a computable function,
    $T \subseteq S$ a target set s.t. $\Av_\calC(T)$ is effective and
    $f_{L,T}$ is $B$-bounded, $\varepsilon>0$ a precision.}  $e := 0$,
  $p_{\text{fail}}:=0$, $p_{\text{succ}}:=0$;
  $\textit{set} := \{(1,s_0)\}$\;
  \While{$1-(p_{\text{succ}}+ p_{\text{fail}})>\varepsilon/2B$} {
    $(p,\rho) := {\sf fair\_extract}(\textit{set})$;
    $s := \last(\rho)$\;
    \lIf {$s \in T$} {$e := e+p \cdot L(\rho)$;
      $p_{\text{succ}} := p_{\text{succ}} + p$}
    \lElseIf {$s \in \Av(T)$} {$p_{\text{fail}} :=
      p_{\text{fail}} + p$}
    \uElse {
      \lFor {$s \to_\calC s'$}
      {${\sf insert}(set,(p \cdot \Pt(s,s'),\rho s'))$}
    }
  }
  \Return {$[e-\varepsilon/2,e+ \varepsilon/2]$}
  \caption{Approximation scheme for the \EvalEV problem; the
    $ {\sf fair\_extract}$ operation ensures that any element put in
    the set cannot stay forever in an execution including an infinite
    number of extractions; a simple implementation can be done with a
    queue.\label{algo:approx}}
\end{algorithm}

The termination and correctness of this algorithm is established by
the following proposition, whose proof \ifARXIV is postponed to
Appendix~\ref{app:termination} \else \pat{is given in~\cite{ARXIV}}
\fi and a special case of which is given in~\cite{ABM07}.

\begin{restatable}[Termination and correctness of
  Algorithm~\ref{algo:approx}]{proposition}{algodec}
  \label{coro:algo:approx}
  Algorithm~\ref{algo:approx} solves the \EvalEV problem if and only
  if $\calC$ is decisive w.r.t. $T$ from $s_0$.
\end{restatable}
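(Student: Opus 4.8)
The plan is to analyze the deterministic loop through a single conservation invariant and then treat the two implications of the equivalence separately. First I would establish that the total probability mass is preserved throughout the execution:
\[
  p_{\text{succ}} + p_{\text{fail}} + \sum_{(p,\rho)\in \textit{set}} p = 1,
\]
which holds initially and is maintained by each iteration: expanding $(p,\rho)$ into its children preserves mass because $\sum_{s'}\Pt(s,s')=1$, while resolving a path moves its mass to $p_{\text{succ}}$ or $p_{\text{fail}}$. Hence the loop guard measures exactly the \emph{frontier mass} $1-(p_{\text{succ}}+p_{\text{fail}})=\sum_{(p,\rho)\in\textit{set}} p$. I would also record that, because $T$ is absorbing and a path is never expanded once its last state lies in $T\cup\Av(T)$, each $\rho$ contributing to $e$ is a \emph{minimal} path to $T$ with $p=\Pr(\rho)$, and $f_{L,T}=L(\rho)$ on its cylinder; thus $e$ is the partial sum $\sum \Pr(\rho)L(\rho)$ over the minimal $T$-paths extracted so far, to be compared with $\nu_{\calC,L,T}(s_0)=\sum_{\sigma\text{ minimal to }T}\Pr(\sigma)L(\sigma)$.

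For the ``if decisive'' direction, correctness is then short. Every not-yet-extracted minimal $T$-path $\sigma$ extends a unique current frontier element (its maximal explored prefix), and distinct such $\sigma$ give disjoint cylinders included in that prefix's cylinder; with $B$-boundedness giving $|L(\sigma)|\le B$, I obtain $|\nu_{\calC,L,T}(s_0)-e|\le B\sum_{\textit{set}}\Pr(\rho)=B\,(1-p_{\text{succ}}-p_{\text{fail}})$, which is $\le\varepsilon/2$ once the guard fails. Since the returned $[e-\varepsilon/2,e+\varepsilon/2]$ has length $\varepsilon$ and contains $\nu_{\calC,L,T}(s_0)$, the output is valid. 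For termination I would argue by contradiction: if the loop ran forever there would be infinitely many extractions, so by fairness every inserted path is eventually extracted; as $\calC$ is effective (finite branching) there are finitely many paths of each length, so for any $n$ all frontier paths of length $\le n$ are eventually resolved. At that moment every remaining frontier element extends a length-$n$ prefix avoiding $T\cup\Av(T)$, so the frontier mass is at most $\Pr(U_n)$, where $U_n$ is the decreasing event of avoiding $T\cup\Av(T)$ during the first $n$ steps. Decisiveness gives $\Pr(\bigcap_n U_n)=0$, hence $\Pr(U_n)\downarrow 0$; picking $n$ with $\Pr(U_n)\le\varepsilon/2B$ makes the guard fail, a contradiction.

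For the converse I would prove the contrapositive. If $\calC$ is not decisive, set $q:=1-\Pr(\varrho^{\calC,s_0}\models\Diamond T\vee\Diamond\Av(T))>0$. At every stage the finite paths resolved so far (into $p_{\text{succ}}$ or $p_{\text{fail}}$) are distinct leaves of the exploration tree, hence define pairwise disjoint cylinders, each included in the event $\Diamond T\vee\Diamond\Av(T)$; therefore $p_{\text{succ}}+p_{\text{fail}}\le\Pr(\Diamond T\vee\Diamond\Av(T))=1-q$ always, so the frontier mass stays $\ge q$. Taking the instance $L\equiv 1$ (so $B=1$) with any $\varepsilon<2q$ keeps the guard $1-(p_{\text{succ}}+p_{\text{fail}})>\varepsilon/2$ permanently true, so the algorithm never halts on this input and thus does not solve \EvalEV.

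The main obstacle is the termination argument, namely converting the purely semantic decisiveness hypothesis into the operational statement that the frontier mass vanishes: one must combine fairness with finite branching to guarantee that all bounded-length frontier paths are eventually resolved, bound the residual frontier mass by $\Pr(U_n)$, and invoke continuity from above to get $\Pr(U_n)\downarrow 0$. The remaining steps (the invariant, the $B$-bounded error estimate, and the disjoint-cylinder bound in the converse) are then routine.
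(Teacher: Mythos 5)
Your proposal is correct and follows essentially the same route as the paper's proof: both directions rest on the observation that the unresolved frontier mass equals $1-(p_{\text{succ}}+p_{\text{fail}})$, that fairness plus finite branching forces every bounded-depth prefix to be resolved eventually (so decisiveness drives the guard below $\varepsilon/2B$), that $B$-boundedness bounds the error by $B$ times the frontier mass, and that non-decisiveness keeps the frontier mass bounded away from $0$ so the loop never exits for small $\varepsilon$. Your invariant-based write-up is a somewhat cleaner packaging of the paper's argument via the truncated computation trees $\text{Tr}_{\le d}$, but it is not a genuinely different proof.
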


Up to our knowledge, the version of this algorithm for computing
reachability probabilities has not been implemented, hence the
terminology ``theoretical'' scheme above. Also, there is no known
convergence speed. Later in section~\ref{sec:experiments}, we briefly
describe an efficient implementation of this scheme
by designing some tricks.

\subsection{Decisiveness and (standard) statistical model-checking}
\label{subsec:decisive_smc}

The standard \emph{statistical model-checking} (SMC in short) consists
in sampling a large number of paths to simulate the random variables
$X^{s_0} = (X^{s_0}_n)_{n \ge 0}$; a sampling is stopped when it hits
$T$ or $\Av(T)$, and a value $1$ (resp. $0$) is assigned when $T$
(resp. $\Av(T)$) is hit; finally the average of all the values is
computed.  This requires that almost-surely a path hits $T$ or
$\Av(T)$, which is precisely decisiveness of the Markov chain
w.r.t. $T$ from $s_0$.  This allows to compute an \emph{estimate} of
the probability to reach $T$. We describe more precisely the approach
and extend the context to allow the estimation of the expected value
of $f_{L,T}$.

\begin{algorithm}[h]
  \SetKwInOut{Input}{input}\SetKwInOut{Output}{output}
  \Input{$\calC=(S,\Pt)$ a countable Markov chain, $s_0 \in S$ an
    initial state, $L : S^+ \to \bbR$ a computable function,
    $T \subseteq S$ a target set s.t. $\Av_\calC(T)$ is effective and
    $f_{L,T}$ is $B$-bounded, $\varepsilon>0$ a precision, $\delta>0$
    a confidence value.}
  $N := \Big\lceil \frac{8B^2}{\varepsilon^2}
  \log\left(\frac{2}{\delta}\right) \Big\rceil$; $\hat{f} := 0$\; \For
  {$i$ {\bf from} $1$ \textbf{{\upshape to}} $N$} { $\rho := s_0$;
    $s:=s_0$\; \lWhile{$s \notin T \cup \Av(T)$} {
      $s' := {\sf sample}(\Pt(s,\cdot))$; $\rho := \rho s'$; $s:=s'$ }
    \lIf {$s \in T$} {$\hat{f} := \hat{f} + L(\rho)$} }
  $\hat{f} := \frac{\hat{f}}{N} $; \Return
  {$[\hat{f}-\varepsilon/2, \hat{f}+\varepsilon/2]$}
  \caption{Statistical model-checking for the \EstimEV
    problem\label{algo:smc}}
\end{algorithm}

The SMC approach is presented as Algorithm~\ref{algo:smc}
(where $\Av(T)$  is assumed to be effective). This is in
general a semi-algorithm, since it may happen that the \textbf{while}
loop is never left at some iteration $i$. Nevertheless, decisiveness
ensures almost sure (a.s.) termination:

\begin{restatable}{lemma}{equivnumstat}
  \label{lemma:while}
  The \textbf{while} loop a.s. terminates if and only if
  $\calC$ is decisive w.r.t. $T$ from $s_0$.
\end{restatable}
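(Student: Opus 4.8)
The plan is to identify the termination event of a single pass through the \textbf{while} loop with the decisiveness event, and then observe that a.s.\ termination of the whole algorithm reduces to a.s.\ termination of one such pass.

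First I would analyze a single iteration $i$ of the outer \textbf{for} loop. The body of the \textbf{while} loop builds a path $\rho$ by repeatedly drawing a successor via ${\sf sample}(\Pt(s,\cdot))$, starting from $s_0$. By construction this sampling realizes the law of the process $(X^{\calC,s_0}_n)_{n\ge0}$, so the random path it produces is distributed as $\varrho^{\calC,s_0}$. The loop is left precisely when the current state $s$ first belongs to $T\cup\Av(T)$; hence the iteration halts after finitely many steps iff the first hitting time $\min\{n\mid X^{\calC,s_0}_n\in T\cup\Av(T)\}$ is finite, which is exactly the event $\varrho^{\calC,s_0}\models\Diamond(T\cup\Av(T))$. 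Consequently the probability that a single iteration terminates equals $\Pr\big(\varrho^{\calC,s_0}\models\Diamond(T\cup\Av(T))\big)$.

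Next I would note that $\Diamond(T\cup\Av(T))$ and $\Diamond T\vee\Diamond\Av(T)$ denote the very same set of infinite paths: a path eventually meets $T\cup\Av(T)$ iff it eventually meets $T$ or eventually meets $\Av(T)$. Therefore the one-iteration termination probability equals $\Pr\big(\varrho^{\calC,s_0}\models\Diamond T\vee\Diamond\Av(T)\big)$, and this probability equals $1$ exactly when $\calC$ is decisive w.r.t.\ $T$ from $s_0$, by the definition of decisiveness.

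Finally I would lift this to the full algorithm. The outer \textbf{for} loop runs a fixed finite number $N$ of independent, identically distributed iterations. A finite conjunction of probability-$1$ events again has probability $1$, so all $N$ iterations terminate a.s.\ exactly when one of them does; conversely, if one iteration fails to terminate with positive probability then so does the whole run. Thus a.s.\ termination of the \textbf{while} loop over the entire execution is equivalent to the one-iteration termination probability being $1$, i.e.\ to decisiveness, which yields both directions of the equivalence simultaneously. The only point requiring care is the measure-theoretic identification of the law of the sampling procedure with that of $\varrho^{\calC,s_0}$; once this is established, the statement is a matter of unwinding the definition of decisiveness, and there is no substantial obstacle beyond this bookkeeping.
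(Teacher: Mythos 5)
Your proposal is correct and follows exactly the same route as the paper's (much terser) proof: identify non-termination of a single pass of the \textbf{while} loop with the event that the sampled path $\varrho^{\calC,s_0}$ never meets $T\cup\Av(T)$, observe that this event is the complement of $\Diamond T\vee\Diamond\Av(T)$, and conclude via the definition of decisiveness. The extra bookkeeping you add (the identification of the sampler's law with that of $(X^{\calC,s_0}_n)_{n\ge 0}$ and the lift over the $N$ i.i.d.\ iterations of the \textbf{for} loop) is sound and merely makes explicit what the paper leaves implicit.
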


\ifARXIV The proof of this lemma is in Appendix~\ref{app:AA}. \fi The
correctness of the algorithm will rely on this proposition that can be
straightforwardly deduced from the Hoeffding
inequality~\cite{Hoeffding63}.
\begin{proposition}
  \label{prop:hoeffding}
  Let $V_1, \ldots, V_N$ be $B$-bounded independent random variables
  and let $V = \frac{1}{N} \sum_{i=1}^N V_i$.  Let
  $\varepsilon,\delta>0$ be such that
  $N \ge \frac{8B^2}{\varepsilon^2}
  \log\left(\frac{2}{\delta}\right)$. Then:
  $\Pr\left( \left| V - \Es \big(V\big) \right| \geq
    \frac{\varepsilon}{2} \right) \leq \delta$.
\end{proposition}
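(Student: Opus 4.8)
The plan is to invoke Hoeffding's inequality directly and then check that the stated lower bound on $N$ is exactly what is needed to push the resulting exponential tail below $\delta$. Recall that Hoeffding's inequality asserts that for independent random variables $X_1,\dots,X_N$ with $a_i \le X_i \le b_i$ almost surely, the sum $S = \sum_{i=1}^N X_i$ satisfies $\Pr\big(|S - \Es(S)| \ge t\big) \le 2\exp\big(-2t^2/\sum_{i=1}^N (b_i-a_i)^2\big)$ for every $t>0$.

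First I would apply this with $X_i = V_i$. Since each $V_i$ is $B$-bounded we have $V_i \in [-B,B]$, so we may take $a_i = -B$ and $b_i = B$, which gives $b_i - a_i = 2B$ and hence $\sum_{i=1}^N (b_i - a_i)^2 = 4NB^2$.

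Next I would translate the event on $V$ into an event on $S = NV$. Since $V - \Es(V) = \frac{1}{N}\big(S - \Es(S)\big)$, the event $\{|V - \Es(V)| \ge \varepsilon/2\}$ coincides with $\{|S - \Es(S)| \ge N\varepsilon/2\}$. Applying the inequality with $t = N\varepsilon/2$ then yields
\[
  \Pr\big(|V - \Es(V)| \ge \varepsilon/2\big) \le 2\exp\left(-\frac{2\,(N\varepsilon/2)^2}{4NB^2}\right) = 2\exp\left(-\frac{N\varepsilon^2}{8B^2}\right).
\]

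Finally I would verify that the right-hand side is at most $\delta$. After taking logarithms this is equivalent to $N\varepsilon^2/(8B^2) \ge \log(2/\delta)$, i.e.\ to $N \ge \frac{8B^2}{\varepsilon^2}\log(2/\delta)$, which is precisely the hypothesis. There is really no substantive obstacle here: the only care needed is the bookkeeping of constants --- correctly accounting for the factor $2B$ in the range of each $V_i$ and for the $1/N$ scaling between $V$ and $S$ --- and this is exactly what makes the constant $8B^2$ (rather than, say, $2B^2$) appear in the bound on $N$.
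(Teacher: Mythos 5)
Your proof is correct and is exactly the argument the paper intends: the paper gives no detailed proof, merely noting that the proposition "can be straightforwardly deduced from the Hoeffding inequality," and your instantiation (range $[-B,B]$, so $\sum_i (b_i-a_i)^2 = 4NB^2$, threshold $t = N\varepsilon/2$, yielding the tail bound $2\exp(-N\varepsilon^2/(8B^2)) \le \delta$ under the stated hypothesis on $N$) is the standard way to carry that out. The constant bookkeeping checks out.
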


We can now state the following important result.

\begin{proposition}[Termination and correctness of
  Algorithm~\ref{algo:smc}]
  \label{coro:algo:estim}
  Algorithm~\ref{algo:smc} solves the \EstimEV problem
  if and only if $\calC$ is decisive w.r.t. $T$ from $s_0$.
\end{proposition}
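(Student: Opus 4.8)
The plan is to prove the biconditional by separating the two directions, leaning on the already-established almost-sure termination characterization (Lemma~\ref{lemma:while}) and the Hoeffding bound (Proposition~\ref{prop:hoeffding}).

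For the ``if'' direction, assume $\calC$ is decisive w.r.t.\ $T$ from $s_0$. First I would invoke Lemma~\ref{lemma:while} to conclude that each inner \textbf{while} loop terminates almost surely, so that with probability one the algorithm produces $N$ genuine samples and halts; this settles termination and shows the algorithm is Las Vegas. Next I would identify the single-sample estimator: for each iteration $i$, let $V_i$ be the value added to $\hat f$, i.e.\ $V_i = L(\rho)$ when the sampled path hits $T$ and $V_i = 0$ when it hits $\Av(T)$. Because the sampled path is distributed as $\varrho^{\calC,s_0}$ conditioned on the (probability-one) event of hitting $T \cup \Av(T)$, and because $f_{L,T}$ assigns reward $0$ exactly on paths that never reach $T$, the variable $V_i$ is precisely $f_{L,T}(\varrho^{\calC,s_0})$ restricted to the terminating run; hence $\Es(V_i) = \nu_{\calC,L,T}(s_0)$. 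The $V_i$ are i.i.d.\ and, since $f_{L,T}$ is $B$-bounded, each $V_i$ is $B$-bounded. Setting $V = \hat f = \frac1N\sum_i V_i$, the choice $N = \lceil \frac{8B^2}{\varepsilon^2}\log(2/\delta)\rceil$ meets the hypothesis of Proposition~\ref{prop:hoeffding}, which yields $\Pr(|\hat f - \nu_{\calC,L,T}(s_0)| \ge \varepsilon/2) \le \delta$. Since the returned interval is $[\hat f - \varepsilon/2,\, \hat f + \varepsilon/2]$, this is exactly $\Pr(\nu_{\calC,L,T}(s_0)\notin I)\le\delta$, and its length is $\varepsilon$. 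Finally, $\Es(\textsf{mid}(I)) = \Es(\hat f) = \nu_{\calC,L,T}(s_0)$ gives the unbiasedness condition, so the three output requirements of \EstimEV{} all hold.

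For the ``only if'' direction, I would argue contrapositively: if $\calC$ is \emph{not} decisive, then by Lemma~\ref{lemma:while} the \textbf{while} loop fails to terminate almost surely, so with positive probability an iteration loops forever and the algorithm never halts. A non-halting algorithm does not solve \EstimEV{} (a Las Vegas algorithm must, by definition, return an output with probability one), so decisiveness is necessary.

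The main obstacle I anticipate is not the Hoeffding application but the careful justification of $\Es(V_i) = \nu_{\calC,L,T}(s_0)$: one must verify that stopping the simulation at the first hit of $T \cup \Av(T)$ and reading off $L(\rho)$ (versus $0$) genuinely realizes the random variable $f_{L,T}(\varrho^{\calC,s_0})$ with the right distribution. The point is that, under decisiveness, reaching $\Av(T)$ certifies that $T$ is never visited afterwards (so reward $0$ is correct), while the path prefix up to the first visit to $T$ determines $L$ by definition of $f_{L,T}$; conditioning on the measure-one terminating event does not distort the expectation. Once this identification is in place, both the unbiasedness and the confidence-interval guarantees follow immediately from Proposition~\ref{prop:hoeffding}, and the termination equivalence is inherited directly from Lemma~\ref{lemma:while}.
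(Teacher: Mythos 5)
Your proof is correct and follows essentially the same route as the paper's: termination (in both directions) is derived from Lemma~\ref{lemma:while}, and correctness from the Hoeffding bound of Proposition~\ref{prop:hoeffding} applied to $V_i = f_{L,T}\big(\varrho^{\calC,s_0}\big)$. The extra care you take in justifying that the algorithm's per-iteration value genuinely realizes $f_{L,T}\big(\varrho^{\calC,s_0}\big)$ (in particular that hitting $\Av(T)$ almost surely certifies reward $0$) is exactly the step the paper leaves implicit.
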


\begin{proof}
  The termination is a consequence of
  Lemma~\ref{lemma:while}. The correctness is a consequence of
  Proposition~\ref{prop:hoeffding}, by taking random variable $V_i$ as
  $f_{L,T}\big(\varrho^{\calC,s_0}\big)$.  In this case, $V$ is equal
  to the value of $\hat{f}$ at the end of the algorithm, which
  completes the argument.
\end{proof}

While termination is guaranteed by the previous corollary the (time)
efficiency of the simulation remains a critical factor. In particular,
the expected value $D$ of the random time $\tau^{s_0,T\cup\Av(T)}$ to
reach $T \cup \Av(T)$ from $s_0$ should be finite; in this case, the
average simulation time will be $D$ and therefore the complexity of
the whole approach will be linear in the number of
simulations. Decisiveness does not ensure this; so a dedicated
analysis needs to be done to ensure efficiency of the approach.


\section{Beyond decisiveness}
\label{sec:beyond}


In the previous section, we have presented two generic approaches for
analyzing infinite (denumerable) Markov chains. They both only apply
to \textbf{decisive} Markov chains. In this section, we twist the
previous approaches, so that they will be applicable to analyze some
\textbf{non decisive} Markov chains as well. Our proposition follows
the following steps:
\begin{itemize}
  \item based on the \textit{importance sampling} approach, we explain
        how the analysis of the original Markov chain can be transferred to
        that of a \textit{biased} Markov chain
        (Subsection~\ref{subsec:biased});
  \item we explain how a biased Markov chain can be automatically
        constructed via an \textit{abstraction}, and give conditions
        ensuring that the obtained biased Markov chain can  be
        analyzed (Subsection~\ref{subsec:abstraction});
  \item we give a generic framework based on \emph{layered Markov
          chains} and \emph{random walks}, with conditions on various
        parameters to safely apply the designed approach
        (Subsection~\ref{subsec:lmc}).
\end{itemize}

We fix an effective countable Markov chain $\calC = (S,\Pt)$,
$s_0 \in S$ an initial state, $L : S^+ \to \bbR_{\ge 0}$ a computable
function, and $T \subseteq F \subseteq S$ two effective sets, with
both $\Av_{\calC}(F)$ and $\Av_{\calC}(T)$ being effective (note that
$\Av_{\calC}(F)\subseteq \Av_{\calC}(T)$).  Since we are interested in
the probability to reach $T$, from now on, we assume that $T$ is
absorbing in $\calC$ and that $s_0\notin \Av_{\calC}(F)$.

\subsection{Model-checking via a biased Markov chain}
\label{subsec:biased}

Importance sampling has been introduced in the fifties~\cite{KH1951}
to evaluate rare-event probabilities (see the book~\cite{RT2009} for
more details). We revisit the approach in our more general setting of
reward reachability, with the extra set $F$.\footnote{The standard
  importance sampling method is recovered when $F=T$.}  The role of
$F$ will be discussed page~\pageref{roleF}.  This approach applies the
standard SMC approach with a correction factor, called
\emph{likelihood}, to another Markov chain.

\begin{definition}[biased Markov chain and likelihood]
  Let $\calC= (S,\Pt)$ with $T\subseteq F\subseteq S$ and
  $\calC' = (S',\Pt')$ be Markov chains such that:
  \begin{itemize}
    \item $S' = (S \setminus \Av_\calC(F)) \uplus \{s_-\}$, where
          $s_- \notin S$;
    \item  In $\calC'$ all states of $ T\cup\{s_-\} $ are absorbing;
    \item
          $\forall s,s' \in S \setminus \Av_\calC(F),\ \Pt(s,s')>0
            ~\Rightarrow~\Pt'(s,s')>0 \hfill
            \refstepcounter{equation}(\theequation)\label{eq:restriction} $
  \end{itemize}
  Then $\calC'$ is a \emph{biased Markov chain of $(\calC,T,F)$} and
  the \emph{likelihood} $\gamma_{\calC,\calC'}$ is the non negative
  function defined for finite paths $\rho\in {S'}^+$
  s.t. $\Pr'(\rho)>0$ by:
  $\gamma_{\calC,\calC'}(\rho)\eqdef \frac{\Pr(\rho)}{\Pr'(\rho)}$ if
  $\rho$ does not visit $s_-$, and
  $\gamma_{\calC,\calC'}(\rho)\eqdef 0$ otherwise.
\end{definition}

Eqn.~(\ref{eq:restriction}) ensures that this modification cannot
remove transitions between states of $S \setminus \Av_\calC(F)$, but
it can add transitions. So, $\Av_{\calC'}(F) = \{s_-\}$.  We fix a
biased Markov chain $\calC'$ for the rest of this subsection and omit
the subscripts for the likelihood function $\gamma$.  The likelihood
can be computed greedily from the initial state: if $\rho \cdot s$ is
a finite path of $\calC'$ such that $\gamma(\rho)$ has been computed,
then $\gamma(\rho \cdot s)$ is equal to $0$ if $s =s_-$, and
$\gamma(\rho) \cdot
  \frac{\Pt\big(\last(\rho),s\big)}{\Pt'\big(\last(\rho),s\big)}$
otherwise.

\begin{example}
  Figure~\ref{fig:ex1}(c) depicts a Markov chain which is a biased
  Markov chain of Figure~\ref{fig:ex1}(a) with $T=F=\{q_0\}$ and
  $\Av_\calC(F)=\Av_\calC(T)=\{p_0\}$.
\end{example}

\begin{figure}
  \begin{center}
    \begin{tikzpicture}[y=0.8cm,loc/.style={draw,inner sep=3pt,rounded corners=6pt},
locb/.style={draw,inner sep=1pt,circle,minimum size=0.6cm},
acol/.style={color=orange!20!black},bcol/.style={color=blue!30!black},
alpha/.style={rounded corners=2mm,dashed,draw, minimum width=3.5cm,
            minimum height=0.6cm,anchor=west,fill=gray,opacity=0.3}]

\def\xa{0.8}
\def\xab{2.6}

\begin{scope}[xshift=-0.5cm]
\node[draw,rounded corners=10pt, fill=lightgray,
      minimum width=3.7cm,minimum height=5.0cm,anchor=west] at (0.0,3.4) () {}; 

\node[alpha,minimum width= 1.7cm] 
    at (1.9,0.8) (alphm0) {};
\node[alpha] 
    at (0.1,2) (alphm1) {};
\node[alpha] 
    at (0.1,3.2) (alphm2) {};
\node[alpha] 
    at (0.1,4.5) (alphm3) {};
\node[alpha] 
    at (0.1,5.7) (alphm4) {};

\node at (2,-0.0) {\textbf{(a)} $\mathcal C$};
 
\node[loc,acol] at (\xa,0.8) (0a) {$p_0$}; 
\node[loc,accepting,bcol] at (\xab,0.8) (0b) {$q_0$}; 

\node[loc,acol] at (\xa,2) (1a) {$p_1$}; 
\node[loc,bcol] at (\xab,2) (1b) {$q_1$}; 

\path[draw,->,thick,acol] (1a) -- node[right] {0.3} (0a);
\path[draw,->,thick,bcol] (1b) -- node[right] {0.2} (0b);

\path[draw,->,thick,acol] (0a) to[loop right] node[right] {1} (0a);
\path[draw,->,thick,bcol] (0b) to[loop right] node[right] {1} (0b);

\path[draw,->,thick] ([xshift=-0.3cm]1a.west) -- (1a);

\node at (\xa,2.6) {$\cdots$};
\node at (\xab,2.6) {$\cdots$};

\node[loc,acol] at (\xa,3.2) (ima) {$p_{i\!-\!1}$}; 
\node[loc,bcol] at (\xab,3.2) (imb) {$q_{i\!-\!1}$}; 

\node[loc,acol] at (\xa,4.5) (ia) {$p_i$}; 
\node[loc,bcol] at (\xab,4.5) (ib) {$q_i$}; 

\path[draw,->,thick,acol] (ia) to[bend right=15] node[left] {0.3} (ima);
\path[draw,->,thick,bcol] (ib) to[bend left=15] node[right] {0.2} (imb);

\node[loc,acol] at (\xa,5.7) (ipa) {$p_{i+1}$}; 
\node[loc,bcol] at (\xab,5.7) (ipb) {$q_{i+1}$}; 

\path[draw,->,thick,acol] (ia) to[bend left=15] node[left] {0.3} (ipa);
\path[draw,->,thick,acol] (ia) to[bend left=35] node[right,pos=0.25,yshift=-0.1cm,xshift=-0.1cm] {0.4} (ipb);
\path[draw,->,thick,bcol] (ib) to[bend right=15] node[right] {0.4} (ipb);
\path[draw,->,thick,bcol] (ib) to[bend right=35] node[left,pos=0.25,yshift=-0.1cm,xshift=0.1cm] {0.4} (ipa);

\path[draw,->,dotted,acol] (ipa) to[bend left=15] (ia);
\path[draw,->,dotted,bcol] (ipb) to[bend right=15]  (ib);
\path[draw,->,dotted,acol] (ima) to[bend left=10] (ib);
\path[draw,->,dotted,bcol] (imb) to[bend right=10] (ia);

\path[draw,->,dotted,acol] (ima) to[bend right=15] (ia);
\path[draw,->,dotted,bcol] (imb) to[bend left=15] (ib);

\node at (\xa,6.2) {$\cdots$};
\node at (\xab,6.2) {$\cdots$};

\end{scope}

\begin{scope}[xshift=4.8cm]

\def\xb{0}

\node at (\xb,-0.0) {\textbf{(b)} $\mathcal C^\bullet$};

\node[draw,rounded corners=10pt, fill=lightgray,
      minimum width=2.0cm,minimum height=5.0cm] at (0.2,3.4) () {}; 

\node at (\xb,6.2) {$\cdots$};
\node[locb] at (\xb,0.8) (b0) {$0$}; 

\path[draw,->,thick] (b0) to[loop right] node[right] {1} (b0);
\node[locb] at (\xb,2) (b1) {$1$}; 
\path[draw,->,thick] (b1) to node[right] {$0.4$} (b0);

\node[locb] at (\xb,3.2) (bim) {\scriptsize$i\!-\!1$}; 
\node[locb] at (\xb,4.5) (bi) {$i$}; 
\node[locb] at (\xb,5.7) (bip) {\scriptsize$i\!+\!1$}; 

\path[draw,->,thick] (bi) to[bend right=15] node[left] {$0.4$} (bim);
\path[draw,->,thick] (bi) to[bend right=15] node[right] {$0.6$} (bip);

\path[draw,->,dotted] (bim) to[bend right=15] (bi);
\path[draw,->,dotted] (bip) to[bend right=15] (bi);
\node at (\xb,2.6) {$\cdots$};

\end{scope}

\draw[color=gray,->, thick] (alphm0) -> node[above,pos=0.4] {$\alpha$} (b0);
\draw[color=gray,->,thick] (alphm1) -> node[above,pos=0.4] {$\alpha$} (b1);
\draw[color=gray,->, thick] (alphm2) -> node[above,pos=0.4] {$\alpha$} (bim);
\draw[color=gray,->,thick] (alphm3) -> node[above,pos=0.4] {$\alpha$} (bi);
\draw[color=gray,->, thick] (alphm4) -> node[above,pos=0.4] {$\alpha$} (bip);

\begin{scope}[xshift=6.3cm]
\node[draw,rounded corners=10pt, fill=lightgray,
      minimum width=5.1cm,minimum height=5.0cm,anchor=west] at (0.0,3.4) () {};

\def\xab2{4.0}
\def\xc{2.4}

\node at (\xc,-0.0) {\textbf{(c)} $\mathcal C'$};

\node[loc,accepting,bcol] at (\xab2,0.8) (0b) {$q_0$}; 

\node[loc] at (\xc,0.8) (0s) {$s_-$}; 

\node[loc,acol] at (\xa,2) (1a) {$p_1$}; 
\node[loc,bcol] at (\xab2,2) (1b) {$q_1$}; 

\path[draw,->,thick,bcol] (1b) -- node[right] {$\frac{18}{60}$} (0b);

\path[draw,->,thick] (0s) to[loop right] node[right] {1} (0s);
\path[draw,->,thick,bcol] (0b) to[loop right] node[right] {1} (0b);

\path[draw,->,thick] ([xshift=-0.3cm]1a.west) -- (1a);

\node at (\xa,2.6) {$\cdots$};
\node at (\xab2,2.6) {$\cdots$};

\node[loc,acol] at (\xa,3.2) (ima) {$p_{i\!-\!1}$}; 
\node[loc,bcol] at (\xab2,3.2) (imb) {$q_{i\!-\!1}$}; 

\node[loc,acol] at (\xa,4.5) (ia) {$p_i$}; 
\node[loc,bcol] at (\xab2,4.5) (ib) {$q_i$}; 

\path[draw,->,thick,acol] (ia) to[bend right=15] node[left] {$\frac{27}{60}$} (ima);
\path[draw,->,thick,bcol] (ib) to[bend left=15] node[right] {$\frac{18}{60}$} (imb);

\node[loc,acol] at (\xa,5.7) (ipa) {$p_{i+1}$}; 
\node[loc,bcol] at (\xab2,5.7) (ipb) {$q_{i+1}$}; 

\path[draw,->,thick,acol] (ia) to[bend left=15] node[left] {$\frac{12}{60}$} (ipa);
\path[draw,->,thick,acol] (ia) to[bend left=22] node[right,pos=0.20,yshift=-0.15cm] {$\frac{16}{60}$} (ipb);
\path[draw,->,thick,bcol] (ib) to[bend right=15] node[right] {$\frac{16}{60}$} (ipb);
\path[draw,->,thick,bcol] (ib) to[bend right=21] node[left,pos=0.20,yshift=-0.15cm] {$\frac{16}{60}$} (ipa);

\path[draw,->,dotted,acol] (ipa) to[bend left=15] (ia);
\path[draw,->,dotted,bcol] (ipb) to[bend right=15]  (ib);
\path[draw,->,dotted,acol] (ima) to[bend left=10] (ib);
\path[draw,->,dotted,bcol] (imb) to[bend right=10] (ia);

\path[draw,->,dotted,acol] (ima) to[bend right=15] (ia);
\path[draw,->,dotted,bcol] (imb) to[bend left=15] (ib);

\path[draw,->,thick,bcol] (ib) .. controls (\xc,4.5) and (\xc,4.5) ..
    node[right,pos=0.7] {$\frac{10}{60}$} (0s);

\path[draw,->,thick,acol] (ia) .. controls (\xc,4.5) and (\xc,4.5) .. 
    node[left,pos=0.7] {$\frac{5}{60}$} (0s);

\path[draw,->,thick,bcol] (1b) .. controls (\xc,2) and (\xc,2) ..
    node[below,pos=0.4] {$\frac{10}{60}$} (0s);

\path[draw,->,thick,acol] (1a) .. controls (\xc,2) and (\xc,2) .. 
    node[below,pos=0.4] {$\frac{32}{60}$} (0s);

\path[draw,->,dotted,bcol] (ipb) .. controls (\xc,5.7) and (\xc,5.7) .. (0s); 
\path[draw,->,dotted,acol] (ipa) .. controls (\xc,5.7) and (\xc,5.7) .. (0s);    
\path[draw,->,dotted,bcol] (imb) .. controls (\xc,3) and (\xc,3) .. (0s); 
\path[draw,->,dotted,acol] (ima) .. controls (\xc,3) and (\xc,3) .. (0s);

\node at (\xa,6.2) {$\cdots$};
\node at (\xab2,6.2) {$\cdots$};

\end{scope}

\end{tikzpicture}
    \caption{ $\mathcal{C}, \mathcal{C^\bullet},\mathcal{C'}$ are
      three Markov chains and $\alpha$ is defined by $\alpha(q_0)=0$
      and for all $n>0$, $\alpha(p_n)=\alpha(q_n)=n$. $\mathcal{C'}$
      is a biased Markov chain of
      $(\calC,\{q_0\}) \stackrel{\alpha}{\hookrightarrow}
        (\calC^\bullet,\{0\})$.}
    \label{fig:ex1}
  \end{center}
\end{figure}

Using the likelihood, we can define the new function of interest in
Markov chain $\calC'$. We let
$L' \stackrel{\text{def}}{=} L \cdot \gamma$ and we realize that the
expected reward of $f_{L',T}$ in $\calC'$ from $s_0$ coincides with
the expected reward of $f_{L,T}$ in $\calC$ from $s_0$, as stated
below.

\begin{restatable}{proposition}{correction}
  \label{lemma:correction}
  $\Es\big(f_{L',T}\big(\varrho^{\calC',s_0}\big)\big) =
    \Es\big(f_{L,T}\big(\varrho^{\calC,s_0}\big)\big)$.
\end{restatable}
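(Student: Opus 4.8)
The plan is the standard importance-sampling change-of-measure argument: reduce both expectations to sums over finite first-hitting paths and match them termwise via the likelihood identity.

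Since $T$ is absorbing in both $\calC$ and $\calC'$, the reward $f_{L,T}(\rho)$ of an infinite path depends only on its prefix up to the first visit to $T$. I would therefore introduce the set $R_\calC$ (resp. $R_{\calC'}$) of finite paths $\rho=s_0\cdots s_n$ of $\calC$ (resp. $\calC'$), i.e. with positive probability in that chain, such that $s_n\in T$ and $s_i\notin T$ for all $i<n$. The cylinders over the elements of $R_\calC$ partition, up to a null set, the event $\Diamond T$, and $f_{L,T}$ is constant on each such cylinder, equal to the reward of its defining path. Since $L\ge 0$ and $\gamma\ge 0$, every term below is non-negative, so Tonelli's theorem lets me write $\Es\big(f_{L,T}(\varrho^{\calC,s_0})\big)=\sum_{\rho\in R_\calC}\Pr(\rho)\,L(\rho)$ and likewise $\Es\big(f_{L',T}(\varrho^{\calC',s_0})\big)=\sum_{\rho\in R_{\calC'}}\Pr'(\rho)\,L'(\rho)$, without any convergence concern.

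Next I would establish the pointwise identity $\Pr'(\rho)\,L'(\rho)=\Pr(\rho)\,L(\rho)$ for every $\rho\in R_{\calC'}$. A path in $R_{\calC'}$ ends in $T$, hence never visits the sink $s_-$ (which is absorbing and not in $T$); by definition of the likelihood this gives $\gamma(\rho)=\Pr(\rho)/\Pr'(\rho)$, so $\Pr'(\rho)\,L'(\rho)=\Pr'(\rho)\,L(\rho)\,\gamma(\rho)=\Pr(\rho)\,L(\rho)$. The only point needing care is a transition of $\rho$ that was newly added in $\calC'$ and is absent in $\calC$: then $\Pr(\rho)=0$, hence $\gamma(\rho)=0$ and both sides vanish, so the identity still holds.

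It remains to reconcile the index sets $R_\calC$ and $R_{\calC'}$, which is the genuinely delicate step. Every $\rho\in R_\calC$ stays inside $S\setminus\Av_\calC(F)$: if it visited some state of $\Av_\calC(F)$, then, since $T\subseteq F$ gives $\Av_\calC(F)\subseteq\Av_\calC(T)$, it could not reach $T$ afterwards, contradicting $\last(\rho)\in T$. By Eqn.~(\ref{eq:restriction}) each edge of such a path, being positive in $\calC$, is also positive in $\calC'$, so $\rho$ is a path of $\calC'$ and $R_\calC\subseteq R_{\calC'}$. Conversely, an element of $R_{\calC'}$ has $\Pr(\rho)>0$ exactly when all its edges exist in $\calC$, i.e. when $\rho\in R_\calC$; the remaining elements of $R_{\calC'}$ have $\Pr(\rho)=0$ and contribute nothing. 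Summing the pointwise identity over $R_{\calC'}$ then yields $\sum_{\rho\in R_{\calC'}}\Pr'(\rho)\,L'(\rho)=\sum_{\rho\in R_{\calC'}}\Pr(\rho)\,L(\rho)=\sum_{\rho\in R_\calC}\Pr(\rho)\,L(\rho)$, which is exactly the desired equality of expectations. The main obstacle is thus not the likelihood cancellation, which is immediate, but verifying that first-hitting paths of $\calC$ avoid $\Av_\calC(F)$ and that the transitions added in $\calC'$ contribute zero, so that the two sums range over matching supports.
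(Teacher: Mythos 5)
Your proof is correct and follows essentially the same route as the paper's: both reduce the two expectations to sums over first-hitting finite paths, cancel the likelihood termwise, and then match the supports of the two sums using Eqn.~(\ref{eq:restriction}) (the paper packages this as Lemma~\ref{lemma2}) together with the observation that $\Av_\calC(F)\subseteq\Av_\calC(T)$ forces first-hitting paths of $T$ to stay in $S\setminus\Av_\calC(F)$. Your treatment of the support-matching and of the degenerate terms with $\Pr(\rho)=0$ is, if anything, slightly more explicit than the paper's parenthetical remarks.
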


The proof of this proposition is given in Appendix~\ref{app:A}. The
idea is that the likelihood in $\calC'$ compensates for the bias in
the probabilities in $\calC'$ w.r.t. original probabilities in
$\calC$.
Thanks to this result, the computation of the expected value of
$f_{L,T}$ in $\calC$ can be reduced to the computation of the expected
value of $f_{L',T}$ in $\calC'$.  Thus, as soon as $\calC'$ and
$f_{L',T}$ satisfy the hypotheses of
Proposition~\ref{coro:algo:approx}
(resp. Proposition~\ref{coro:algo:estim}) for the \EvalEV
(resp. \EstimEV) problem, Algorithm~\ref{algo:approx}
(resp. Algorithm~\ref{algo:smc}) can be applied to $\calC'$, which
will solve the corresponding problem in $\calC$.  Specifically, the
second method is what is called the \emph{importance sampling} of
$\calC$ via $\calC'$. Observe the following facts:
\begin{itemize}[nosep]
  \item the decisiveness hypothesis only applies to the biased Markov
        chain $\calC'$, not to the original Markov chain $\calC$;
  \item the requirement that $f_{L',T}$ be $B$-bounded (for some $B$)
        does not follow from any hypothesis on $f_{L,T}$ since the
        likelihood might be unbounded.
\end{itemize}

\subsection{Construction of a biased Markov chain via an abstraction}
\label{subsec:variancegarantie}\label{subsec:abstraction}

The approach designed in the previous subsection requires the
decisiveness of the biased Markov chain and the effective
boundedness of the function which is evaluated.
We now deport these various assumptions on another Markov chain, for
which numerical (or symbolical) computations can be done, and which
will serve as an \emph{abstraction}.  This approach
generalizes~\cite{BHP12} in several directions: first,~\cite{BHP12}
was designed for finite Markov chains; then, we consider a superset
$F$ of $T$ which will allow us to relax conditions over $S\setminus T$
to its subset $S\setminus F$.

\begin{definition}
  \label{definition:abstraction}
  A Markov chain $\calC^\bullet = (S^\bullet,\Pt^\bullet)$ together
  with a set $F^\bullet$ is called an \emph{abstraction} of $\calC$
  with set $F$ by function
  $\alpha \colon S \setminus \Av_{\calC}(F) \to S^\bullet$ whenever, the
  following conditions hold:
  \begin{enumerate}[label=\emph{(\Alph*)}]
    \item for all $s \in F$, $\alpha(s) \in F^\bullet$;
    \item for all $s \in S \setminus  (F \cup \Av_{\calC}(F))$,
          ${\displaystyle \sum_{s' \notin \Av_{\calC}(F)} \Pt(s,s') \cdot
            \mu_{\calC^\bullet,F^\bullet}(\alpha(s')) \leq
            \mu_{\calC^\bullet,F^\bullet}(\alpha(s))}$.
  \end{enumerate}
\end{definition}
Condition (B) is called \emph{monotony} and is only required outside
$F \cup \Av_{\calC}(F)$.  We write more succinctly that
$(\calC^\bullet,F^\bullet)$ is an $\alpha$-abstraction of $(\calC,F)$,
denoted
$(\calC,F) \stackrel{\alpha}{\hookrightarrow}
  (\calC^\bullet,F^\bullet)$ and
$\mu_{F^\bullet} \stackrel{\text{def}}{=}
  \mu_{\calC^\bullet,F^\bullet}$ and
$\mu_F \stackrel{\text{def}}{=} \mu_{\calC,F}$.

\begin{example}
  We claim that the Markov chain $\calC^\bullet$ in
  Figure~\ref{fig:ex1}(b) with $F^\bullet=\{0\}$ is an abstraction of
  $\calC$ in Figure~\ref{fig:ex1}(a) with $s_0=p_1$. Indeed, the
  monotony condition is satisfied: for all $n>0$:
  \begin{itemize}[nosep]
    \item in $p_n$ :
          $0.3 \left(\frac{2}{3}\right)^{n+1} + 0.4
            \left(\frac{2}{3}\right)^{n+1}+0.3
            \left(\frac{2}{3}\right)^{n-1}$=
          $\frac{55}{60} \left(\frac{2}{3}\right)^{n} <
            \left(\frac{2}{3}\right)^{n}$;
    \item in $q_n$ :
          $0.4 \left(\frac{2}{3}\right)^{n+1} + 0.4
            \left(\frac{2}{3}\right)^{n+1}+0.2
            \left(\frac{2}{3}\right)^{n-1}$=
          $\frac{25}{30} \left(\frac{2}{3}\right)^{n} <
            \left(\frac{2}{3}\right)^{n}$.
  \end{itemize}
  Observe that
  $\mu_{F^\bullet}(n) = \left(\frac{0.4}{0.6}\right)^n =
    \left(\frac{2}{3}\right)^n$.
\end{example}

As will be explicit in the next lemma, an abstraction is a stochastic
bound of the initial Markov chain outside $\Av_{\calC}(F)$.

\begin{lemma}
  \label{lemma:increasereach} \label{corollary:unreachtounreach} Let
  $(\calC,F) \stackrel{\alpha}{\hookrightarrow}
    (\calC^\bullet,F^\bullet)$. Then for all
  $s \in S \setminus \Av_{\calC}(F)$,
  $\mu_F(s) \le \mu_{F^\bullet}(\alpha(s))$. In particular, for all
  $s \in S \setminus \Av_{\calC}(F)$, $\mu_{F^\bullet}(\alpha(s))>0$.
\end{lemma}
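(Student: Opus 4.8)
The plan is to compare the two reachability functions via the finite-horizon truncations of $\mu_F$ and an induction on the horizon, the key point being to recognise the monotony condition~(B) as the inductive step of a standard least-fixpoint argument. For $n\in\bbN$ and $s\in S$, write $\mu_F^{(n)}(s):=\Pr(\varrho^{\calC,s}\models\Diamond_{\le n}F)$ for the probability of reaching $F$ from $s$ in at most $n$ steps. Because the events $\{\Diamond_{\le n}F\}$ form an increasing sequence whose union is $\{\Diamond F\}$, continuity of the measure from below yields $\mu_F^{(n)}(s)\uparrow\mu_F(s)$. Hence it suffices to prove the bound $\mu_F^{(n)}(s)\le\mu_{F^\bullet}(\alpha(s))$ for every $n$ and every $s\in S\setminus\Av_{\calC}(F)$, and then pass to the limit $n\to\infty$.

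I would establish this bound by induction on $n$. For the base case $n=0$ we have $\mu_F^{(0)}(s)=\indic_{s\in F}$; if $s\in F$ then $\alpha(s)\in F^\bullet$ by condition~(A), so $\mu_{F^\bullet}(\alpha(s))=1$ (a state of $F^\bullet$ reaches $F^\bullet$ immediately), while if $s\notin F$ the left-hand side is $0$, so the inequality holds in both cases. For the inductive step, assume the bound at level $n$. If $s\in F$ then $\mu_F^{(n+1)}(s)=1=\mu_{F^\bullet}(\alpha(s))$ as before. Otherwise $s\in S\setminus(F\cup\Av_{\calC}(F))$ (recall we only treat $s\notin\Av_{\calC}(F)$), and the one-step recurrence gives $\mu_F^{(n+1)}(s)=\sum_{s'}\Pt(s,s')\,\mu_F^{(n)}(s')$. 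Since $\mu_F^{(n)}(s')=0$ for $s'\in\Av_{\calC}(F)$, the sum may be restricted to $s'\notin\Av_{\calC}(F)$; on those states $\alpha$ is defined and the induction hypothesis gives $\mu_F^{(n)}(s')\le\mu_{F^\bullet}(\alpha(s'))$. Substituting and then applying monotony~(B) yields
\[
\mu_F^{(n+1)}(s)=\sum_{s'\notin\Av_{\calC}(F)}\Pt(s,s')\,\mu_F^{(n)}(s')\le\sum_{s'\notin\Av_{\calC}(F)}\Pt(s,s')\,\mu_{F^\bullet}(\alpha(s'))\le\mu_{F^\bullet}(\alpha(s)),
\]
which closes the induction.

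Taking $n\to\infty$ gives $\mu_F(s)\le\mu_{F^\bullet}(\alpha(s))$ for all $s\in S\setminus\Av_{\calC}(F)$, as required. The ``in particular'' claim is then immediate: by definition of the avoid set, $s\notin\Av_{\calC}(F)$ means $\mu_F(s)=\Pr(\varrho^{\calC,s}\models\Diamond F)>0$, so $\mu_{F^\bullet}(\alpha(s))\ge\mu_F(s)>0$. The only genuinely delicate points are bookkeeping rather than conceptual: one must check that the summation index is legitimately restricted to $S\setminus\Av_{\calC}(F)$ so that $\alpha(s')$ is defined and condition~(B) applies verbatim, and one must invoke monotone convergence (not a mere termwise limit) when passing from the finite-horizon inequality to the horizon-free statement. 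Everything else is the routine least-fixpoint characterisation of reachability probabilities.
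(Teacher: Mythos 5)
Your proof is correct and follows essentially the same route as the paper's: the same finite-horizon truncation $\mu_F^{(n)}$, the same induction on $n$ using condition (A) for the base/absorbing case and condition (B) for the inductive step after restricting the sum to $s'\notin\Av_{\calC}(F)$, and the same passage to the limit. The only (welcome) refinement is that you state the induction hypothesis only for $s\in S\setminus\Av_{\calC}(F)$, where $\alpha$ is actually defined, whereas the paper phrases it for all $s\in S$.
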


\begin{proof}
  Let $\mu_F^{(n)}(s) \eqdef \Pr(s \models \Diamond_{\le n}
    F)$. Observe that $\mu_F(s) = \lim_{n \to +\infty} \mu_F^{(n)}(s)$. We
  show by induction on $n$ that for all $s \in S$ and all $n \in \bbN$,
  $\mu_F^{(n)}(s) \le \mu_{F^\bullet}(\alpha(s))$.
  \begin{itemize}[nosep]
    \item Case $n=0$:
          \begin{itemize}[nosep]
            \item $s \in F$ implies $\alpha(s) \in F^\bullet$ (condition
                  (A)). Hence $\mu_{F^\bullet}(\alpha(s)) = 1 = \mu^{(0)}_F(s)$.
            \item $s \in S \setminus F$:
                  $\mu_F^{(0)}(s) = 0 \le \mu_{F^\bullet}(\alpha(s))$.
          \end{itemize}
    \item Inductive case:
          \begin{itemize}[nosep]
            \item $s\in F$ implies $\alpha(s) \in F^\bullet$ (condition
                  (A)). Hence $\mu_{F^\bullet}(\alpha(s)) = 1 = \mu^{(n+1)}_F(s)$.
            \item $s \in S \setminus F$:
                  \begin{footnotesize}
                    $\mu_F^{(n+1)}(s) = \sum_{s'} \Pt(s,s')
                      \cdot \mu_F^{(n)}(s') = \sum_{s' \notin \Av_{\calC}(F)} \Pt(s,s') \cdot \mu_F^{(n)}(s')
                      \le  \sum_{s' \notin \Av_{\calC}(F)} \Pt(s,s') \cdot
                      \mu_{F^\bullet}(\alpha(s'))$
                  \end{footnotesize}
                  by induction
                  hypothesis. Hence
                  $\mu_F^{(n+1)}(s) \le \mu_{F^\bullet}(\alpha(s))$ by
                  condition (B).
          \end{itemize}
  \end{itemize}
\end{proof}

Given an abstraction \label{h}
$(\calC,F) \stackrel{\alpha}{\hookrightarrow}
  (\calC^\bullet,F^\bullet)$ and $s \in S \setminus \Av_{\calC}(F)$, let
$h(s)$ be the \emph{decreasing ratio} at $s$:
${\displaystyle h(s) \stackrel{\text{def}}{=}
  \frac{1}{\mu_{F^\bullet}(\alpha(s))} \cdot \sum_{s' \in S \setminus
    \Av_{\calC}(F)} \Pt(s,s') \cdot \mu_{F^\bullet}(\alpha(s'))}$.
For all $s \in S$, $h(s)\leq 1$: this is obvious when
$s \in S \setminus F \cup \Av_{\calC}(F)$ by the monotony condition
(B); if $s \in F$, then $\alpha(s) \in F^\bullet$ by condition (A),
and hence $\mu_{F^\bullet}(\alpha(s)) = 1$.

\medskip We now define a biased Markov chain based on the above
abstraction, which will be interesting for both methods (approximation
and estimation).

\begin{definition}
  \label{def:echprefeff}
  Let
  $(\calC,F) \stackrel{\alpha}{\hookrightarrow}
    (\calC^\bullet,F^\bullet)$. Then
  $\calC' = \left(\left(S \setminus \Av_{\calC}(F)\right) \uplus
    \{s_-\},\Pt'\right)$ is the Markov chain, where $s_-$ is absorbing
  and for all $s,s' \in S \setminus \Av_{\calC}(F)$,
  ${\displaystyle \Pt'(s,s')=\Pt(s,s') \cdot
    \frac{\mu_{F^\bullet}(\alpha(s'))}{\mu_{F^\bullet}(\alpha(s))}}$
  and $\Pt'(s,s_-)=1- h(s)$.
\end{definition}
By assumption, for all $s\in T$, $s$ is absorbing in $\calC$. This
implies in particular that $s$ is also absorbing in $\calC'$.  Also,
notice that $\Pt'$ coincides with $\Pt$ within $F$, which means that
there is no bias in the zone $F$ in $\calC'$ w.r.t.  $\calC$.

\begin{lemma}
  Let
  $(\calC,F) \stackrel{\alpha}{\hookrightarrow}
    (\calC^\bullet,F^\bullet)$. Then the Markov chain $\calC'$ defined
  in Definition~\ref{def:echprefeff} is a biased Markov chain of
  $(\calC,T,F)$.
\end{lemma}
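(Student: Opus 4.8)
The plan is to verify that the object $\calC'$ of Definition~\ref{def:echprefeff} is first a genuine Markov chain, and then that it satisfies each of the three clauses defining a biased Markov chain of $(\calC,T,F)$.

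First I would check that $\Pt'$ is a well-defined stochastic matrix. For every $s \in S \setminus \Av_\calC(F)$, Lemma~\ref{lemma:increasereach} guarantees $\mu_{F^\bullet}(\alpha(s))>0$, so the ratios in Definition~\ref{def:echprefeff} are well-defined and every entry $\Pt'(s,s')$ is nonnegative; moreover $\Pt'(s,s_-)=1-h(s)\ge 0$ since $h(s)\le 1$ (as established just before Definition~\ref{def:echprefeff}). For the row sums I would plug in the definition of $h(s)$:
\begin{align*}
  \sum_{s' \notin \Av_\calC(F)} \Pt'(s,s') + \Pt'(s,s_-)
  &= \frac{1}{\mu_{F^\bullet}(\alpha(s))} \sum_{s' \notin \Av_\calC(F)} \Pt(s,s')\, \mu_{F^\bullet}(\alpha(s')) + 1 - h(s) \\
  &= h(s) + 1 - h(s) = 1.
\end{align*}
Together with $s_-$ being absorbing, this makes $\calC'$ a Markov chain.

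It then remains to check the three clauses. The state-space condition $S'=(S\setminus\Av_\calC(F))\uplus\{s_-\}$ holds by construction. For the absorbing condition, $s_-$ is absorbing by definition; and for $s\in T$ we have $s\in F\subseteq S\setminus\Av_\calC(F)$ (indeed $F\cap\Av_\calC(F)=\emptyset$, since any state of $F$ reaches $F$ immediately), so condition (A) yields $\alpha(s)\in F^\bullet$, hence $\mu_{F^\bullet}(\alpha(s))=1$; as $T$ is absorbing in $\calC$ we have $\Pt(s,s)=1$, whence $\Pt'(s,s)=\Pt(s,s)\cdot\frac{\mu_{F^\bullet}(\alpha(s))}{\mu_{F^\bullet}(\alpha(s))}=1$. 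Finally, for the support-preservation condition~(\ref{eq:restriction}), take $s,s'\in S\setminus\Av_\calC(F)$ with $\Pt(s,s')>0$: the three factors $\Pt(s,s')$, $\mu_{F^\bullet}(\alpha(s'))$ and $1/\mu_{F^\bullet}(\alpha(s))$ are all strictly positive (the latter two by Lemma~\ref{lemma:increasereach}), so $\Pt'(s,s')>0$.

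None of these steps is a real obstacle; the one point deserving care is the well-definedness of the ratios, i.e.\ that $\mu_{F^\bullet}(\alpha(s))$ never vanishes on $S\setminus\Av_\calC(F)$, which is precisely the ``in particular'' clause of Lemma~\ref{lemma:increasereach}. Everything else is a direct unfolding of Definition~\ref{def:echprefeff} and of the definition of $h$.
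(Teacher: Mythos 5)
Your proposal is correct and follows essentially the same route as the paper: well-definedness of the probabilities via the bound $h(s)\le 1$, and positivity of $\Pt'(s,s')$ via the ``in particular'' clause of Lemma~\ref{lemma:increasereach}. You are merely more explicit than the paper on the routine clauses (row sums summing to $h(s)+1-h(s)=1$, and the absorbing condition on $T$, which the paper dispatches in the remark following Definition~\ref{def:echprefeff}).
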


\begin{proof}
  First probabilities are well-defined, thanks to the remark on $h$
  being bounded by $1$.  The only thing which needs to be checked is
  the following: if $s,s' \notin \Av_{\calC}(F)$ and $\Pt(s,s')>0$,
  then $\Pt'(s,s')>0$. Since
  $\Pt'(s,s')=\Pt(s,s') \cdot
    \frac{\mu_{F^\bullet}(\alpha(s'))}{\mu_{F^\bullet}(\alpha(s))}$ and
  $s' \notin \Av_{\calC}(F)$ using
  Lemma~\ref{corollary:unreachtounreach},
  $\mu_{F^\bullet}(\alpha(s'))\geq \mu_{F}(s')>0$.  So $\calC'$ is a
  biased Markov chain of $(\calC,T,F)$.
\end{proof}

Since the only transitions added to $\calC$, when defining $\calC'$,
lead to $s_-$, the (qualitative) reachability of $T$ is unchanged and
so
$\Av_{\calC'}(T) = (\Av_{\calC}(T)\setminus\Av_{\calC}(F)) \cup
  \{s_-\}$.  Furthermore $\calC'$ does not depend on $T$. So we call
$\calC'$ the \emph{biased Markov chain of
  $(\calC,F) \stackrel{\alpha}{\hookrightarrow}
    (\calC^\bullet,F^\bullet)$}.  As above, we define the likelihood
$\gamma$, and accordingly the function $L' = L \cdot \gamma$. So the
approach of Subsection~\ref{subsec:biased} can be applied, provided
$\calC'$ satisfies the required properties (decisiveness and
boundedness of the evaluated function).  In
subsection~\ref{subsec:lmc}, we will be more specific and give a
generic framework guaranteeing those properties.

\medskip\noindent\textit{Role of $F$.} \label{roleF}In the original
importance sampling method, there was no superset $F \supseteq T$, and
the monotony condition was imposed on $S \setminus T$. However, in
practice, the monotony condition may not be satisfied in
$F \setminus T$ while being satisfied in $S \setminus F$; hence the
formulation with a superset $F \supseteq T$ widens the applicability
of the approach.  It should be noted that once a set $F$ has been
found, which ensures the monotony condition, any of its supersets will
also do the work. Its choice will impact the efficiency of the
approach, as will be illustrated in Section~\ref{sec:experiments},
and will therefore serve as a parameter of the approach that can be
adjusted for improving efficiency.

\medskip We end up this subsection with some property of the reward
function that is to be analyzed in the biased Markov chain obtained
using an abstraction.

\begin{restatable}{proposition}{likelihood}
  \label{prop:garantie}  Let
  $(\calC,F) \stackrel{\alpha}{\hookrightarrow}
    (\calC^\bullet,F^\bullet)$ and $L$ a computable function from $S^+$ to $\real$ such that $ f_{L,T}$ is $B$-bounded. Let $\calC'$ be the biased Markov chain of
  $(\calC,F) \stackrel{\alpha}{\hookrightarrow}
    (\calC^\bullet,F^\bullet)$ and $L'=L \cdot \gamma_{\calC,\calC'}$. Let $s_0 \in S$,
  then for every infinite path $\rho$ in $\calC'$ starting at $s_0$:
  \[
    f_{L',T}(\rho) = \left\{\begin{array}{ll}
      L(\rho_{\le
        \first_T(\rho)}) \cdot
      \mu_{F^\bullet}(\alpha(s_0)) & \text{if}\
      \rho
      \models
      \Diamond T
      \\
      0                            & \text{otherwise}
    \end{array}\right.
  \]
  Thus $f_{L',T}$ is $B$-bounded.
\end{restatable}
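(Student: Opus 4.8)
The plan is to compute the likelihood $\gamma = \gamma_{\calC,\calC'}$ explicitly on the prefix of $\rho$ that ends at the first visit to $T$, and to exploit the fact that the biased transition probabilities of Definition~\ref{def:echprefeff} telescope. First I would dispose of the trivial case: if $\rho \not\models \Diamond T$, then $\first_T(\rho) = \infty$ and $f_{L',T}(\rho) = 0$ directly from the definition of the $T$-function, matching the second line of the claimed formula. So assume $\rho \models \Diamond T$ and set $n := \first_T(\rho) \in \bbN$, writing $\rho_{\le n} = s_0 s_1 \cdots s_n$ with $s_n \in T$.

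A preliminary observation, which I would state explicitly, is that none of $s_0,\dots,s_n$ equals the sink $s_-$: if $s_i = s_-$ for some $i \le n$, then absorbingness of $s_-$ in $\calC'$ forces $s_j = s_-$ for all $j \ge i$, so $\rho$ could never reach $T$, a contradiction. Consequently every $s_i$ lies in $S \setminus \Av_\calC(F)$, so $\alpha(s_i)$ is defined, $\Pr'(\rho_{\le n})>0$, and $\gamma(\rho_{\le n}) = \Pr(\rho_{\le n})/\Pr'(\rho_{\le n})$ is well-defined and equal to the telescoping product rather than to $0$.

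Next I would carry out the telescoping. By Definition~\ref{def:echprefeff}, for each $i<n$ we have $\Pt(s_i,s_{i+1})/\Pt'(s_i,s_{i+1}) = \mu_{F^\bullet}(\alpha(s_i))/\mu_{F^\bullet}(\alpha(s_{i+1}))$, the denominators being positive by Lemma~\ref{lemma:increasereach}. Taking the product over $i<n$ collapses to $\gamma(\rho_{\le n}) = \mu_{F^\bullet}(\alpha(s_0))/\mu_{F^\bullet}(\alpha(s_n))$. Since $s_n \in T \subseteq F$, condition $(A)$ yields $\alpha(s_n) \in F^\bullet$, hence $\mu_{F^\bullet}(\alpha(s_n)) = 1$, leaving $\gamma(\rho_{\le n}) = \mu_{F^\bullet}(\alpha(s_0))$. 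Unfolding $L' = L \cdot \gamma$ then gives $f_{L',T}(\rho) = L'(\rho_{\le n}) = L(\rho_{\le n}) \cdot \gamma(\rho_{\le n}) = L(\rho_{\le \first_T(\rho)}) \cdot \mu_{F^\bullet}(\alpha(s_0))$, which is the first line of the formula. For the final boundedness claim, I would note that $L(\rho_{\le \first_T(\rho)}) = f_{L,T}(\rho)$ whenever $\rho \models \Diamond T$, so $|L(\rho_{\le \first_T(\rho)})| \le B$, while $\mu_{F^\bullet}(\alpha(s_0)) \le 1$ as a probability; multiplying gives $|f_{L',T}(\rho)| \le B$ in all cases.

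No step here is genuinely deep. The one point that deserves care, and that I would not gloss over, is the absorbingness argument ruling $s_-$ out of the prefix: it is precisely what guarantees both that $L$ is being applied to a bona fide path of $\calC$ (so $L(\rho_{\le \first_T(\rho)})$ is meaningful) and that $\gamma$ on that prefix is the clean telescoping quotient, and not the degenerate value $0$ assigned to paths visiting $s_-$.
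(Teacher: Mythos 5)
Your proof is correct and follows essentially the same route as the paper's: the paper packages the telescoping computation of $\Pr(\rho')/\Pr'(\rho')$ into an auxiliary lemma (Lemma~\ref{lemma:T-2}, proved by induction on the path length) and then concludes exactly as you do, using that the prefix avoids $s_-$ and that $\last(\rho')\in T\subseteq F$ forces $\mu_{F^\bullet}(\alpha(\last(\rho')))=1$. Your inline telescoping product is the same calculation, and your explicit absorbingness argument for excluding $s_-$ and the final boundedness step are details the paper leaves implicit.
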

The proof of this proposition is given in Appendix~\ref{app:C}.  Thus
in addition to be a biased Markov chain of $\calC$, $\calC'$ preserves
a necessary condition for applying algorithms of
Section~\ref{sec:decisive}: the boundedness of the reward function.
Furthermore, when $f_{L,T} = \indic_{\Diamond T}$ (corresponding to
the standard reachability property), $f_{L',T}$ for paths starting at
$s_0$ is a bivaluated function:
$f_{L',T} = \mu_{F^\bullet}(\alpha(s_0)) \cdot \indic_{\Diamond T}$
which does not need to be computed on the fly by the algorithms.

\subsection{A generic framework based on random walks}
\label{subsec:lmc}

Our objective is to apply the algorithms of Section~\ref{sec:decisive}
to the biased Markov chain $\calC'$ defined in the previous subsection
via an abstraction, and to exploit
Proposition~\ref{prop:garantie}. This requires $\calC'$ to be
effective and to be decisive w.r.t. $T$.  The effectiveness will be
obtained via the numerical or symbolic computation (since
$\calC^\bullet$ is infinite) of $\mu_{F^\bullet}(\alpha(s))$. To that
purpose, we use random walks as abstractions since they have closed
forms for the reachability probabilities and \emph{layered Markov
  chains} as generic models.
\ifARXIV
  The proofs of this section are either omitted or sketched and full
  proofs can be found in Appendix.
\else
  The proofs of this section are only partly given in the core and
  in the appendix of this paper, \pat{but are fully given
    in~\cite{ARXIV}.}
\fi

\begin{definition}
  \label{def:lmc}
  A \emph{layered Markov chain} (LMC in short) is a tuple $(\calC, \lambda)$
  where $\calC = (S,\Pt)$ is a countable Markov chain, $\lambda : S  \rightarrow
    \bbN$ is a mapping such that for all $s\rightarrow_\calC s'$,
  $\lambda(s)-\lambda(s') \le 1$, and for all $n \in \mathbb{N}$,
  $\lambda^{-1}(n)$ is finite.
\end{definition}

Given $s\in S$, $\lambda(s)$ is the \emph{level} of $s$. In words
there are two requirements on $\lambda$: (1) after one step the level
can be decreased by at most one unit while it can be arbitrarily
increased, and (2) for any level $\ell$, the set of states with level
$\ell$ is finite. We define $\Pt^+(s)$, $\Pt^-(s)$ and $\Pt^=(s)$
(with $\Pt^+(s) + \Pt^-(s) + \Pt^=(s) =1$) as follows:
\begin{eqnarray*}
  \Pt^+(s)  =  \hspace*{-.5cm}\sum_{\begin{array}{c} {\scriptstyle s'
        \in S\
      \text{s.t.}} \\[-4pt]
      {\scriptstyle \lambda(s')\ge  \lambda(s)+1}\end{array}}  \hspace*{-.5cm}\Pt(s,s'),\quad
  \Pt^-(s) =  \hspace*{-.5cm}\sum_{\begin{array}{c} {\scriptstyle s'
        \in S\
      \text{s.t.}} \\[-4pt]  {\scriptstyle
      \lambda(s') = \lambda(s)-1}\end{array}}
  \hspace*{-.5cm} \Pt(s,s'),\quad
  \Pt^=(s) =  \hspace*{-.5cm}\sum_{\begin{array}{c} {\scriptstyle s'
        \in S\
      \text{s.t.}} \\[-4pt]  {\scriptstyle
      \lambda(s') = \lambda(s)}\end{array}}
  \hspace*{.5cm}\Pt(s,s')
\end{eqnarray*}

In the sequel we fix an LMC $(\calC, \lambda)$ and we consider a
finite target set $T$. We want to apply the previous approach to
$\calC$ using an $\alpha$-abstraction
$(\calC,F) \stackrel{\alpha}{\hookrightarrow}
  (\calC^\bullet,F^\bullet)$, where $\calC^\bullet$ is the random walk
$\calW^p = (\mathbb{N}, \Pt_p)$ with some probability parameter
$0<p<1$ defined as follows: $\Pt_p(0,0)=1$; for every $i >0$,
$\Pt_p(i,i+1)=p$ and $\Pt_p(i,i-1) = 1-p$ (it is depicted in
Figure~\ref{fig:ex1}(b) for $p = 0.6$). We define
$\kappa \eqdef \frac{1-p}{p}$ and recall this folk result.

\begin{restatable}{proposition}{rw}
  \label{prop:rw}
  In $\calW^{p}$, the probability to reach state $0$ from state $m$ is
  $1$ when $p\leq \frac 1 2$ and $\kappa^m$ otherwise.
\end{restatable}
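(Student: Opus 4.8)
The plan is to determine, by first-step analysis, the probability $r_m$ of ever reaching state $0$ from state $m$, exploiting that in $\calW^p$ the level changes by exactly one unit at each step. The first step is to observe the multiplicative structure $r_m = q^m$, where $q := r_1$ is the probability of reaching $0$ from state $1$. Indeed, any trajectory from $m$ to $0$ must pass through $m-1$, then $m-2$, and so on down to $0$; moreover the one-step dynamics are identical at every state $i \ge 1$ (up with probability $p$, down with probability $1-p$), so by the strong Markov property and this translation invariance the successive first passages from level $k$ to level $k-1$ are independent events, each of probability $q$. Hence $r_m = q^m$ and it remains only to compute $q$.

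Next I would set up the equation satisfied by $q$. Conditioning on the first step out of state $1$ gives $q = (1-p) + p\,r_2 = (1-p) + p\,q^2$, since reaching $0$ from $2$ means descending two levels, each of probability $q$. This is the quadratic $p\,q^2 - q + (1-p) = 0$, which factors as $(q-1)\bigl(p\,q-(1-p)\bigr) = 0$, so $q \in \{1,\kappa\}$ with $\kappa = \frac{1-p}{p}$.

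It then remains to select the correct root, and this is where the real content lies. When $p < \tfrac12$ we have $\kappa > 1$, so the root $\kappa$ is not a valid probability and $q = 1$; when $p = \tfrac12$ the two roots coincide at $1$, so again $q = 1$; in both cases $r_m = 1$, as claimed. The delicate case is $p > \tfrac12$, where both $1$ and $\kappa < 1$ are admissible and first-step analysis alone cannot distinguish recurrence from transience; the main obstacle is precisely to exclude $q = 1$. I would resolve it through the standard finite-barrier computation: solving $u_i = p\,u_{i+1} + (1-p)\,u_{i-1}$ on $\{0,1,\dots,N\}$ with boundary values $u_0 = 1$ and $u_N = 0$ yields $u_m = \frac{\kappa^m - \kappa^N}{1 - \kappa^N}$ for the probability of reaching $0$ before level $N$. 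Since the event ``reach $0$ before level $N$'' increases with $N$ to the event ``reach $0$ eventually'' (a walk confined to $\{1,\dots,N-1\}$ forever has probability $0$), continuity from below gives $r_m = \lim_{N\to\infty} u_m$, which equals $\kappa^m$ when $\kappa < 1$ and $1$ when $\kappa > 1$, matching and reinforcing the two previous cases. In particular this forces $q = \kappa$ when $p > \tfrac12$. Alternatively one may invoke the strong law of large numbers: the drift $2p-1$ is positive when $p > \tfrac12$, so the walk tends to $+\infty$ almost surely, whence $q < 1$ and the root $q = 1$ is excluded directly.
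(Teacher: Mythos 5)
Your proof is correct. There is in fact nothing in the paper to compare it against: the proposition is introduced with ``we recall this folk result'' and, unlike the other restatable statements, it is never restated or proved in the appendix, so the authors leave it entirely to the reader. Your argument is the standard gambler's-ruin derivation: the multiplicative identity $r_m = q^m$ via the strong Markov property and translation invariance of the walk on $\{1,2,\dots\}$, the quadratic $p\,q^2 - q + (1-p) = 0$ from first-step analysis with roots $1$ and $\kappa$, and then a genuine argument to select the root in the only non-trivial case $p>\tfrac12$. That last step is exactly where care is needed, since first-step analysis alone admits both roots, and both of your resolutions are sound: the finite-barrier solution $u_m = \frac{\kappa^m - \kappa^N}{1-\kappa^N}$ with $u_0=1$, $u_N=0$, followed by continuity from below as $N\to\infty$ (the union over $N$ of the events ``reach $0$ before level $N$'' is the event ``reach $0$ eventually'', since any path hitting $0$ does so in finite time and hence below some finite level), or alternatively the positive drift $2p-1>0$ forcing transience. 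The computation of $u_m$ and the limits in both regimes $\kappa<1$ and $\kappa>1$ check out.
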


Here we introduce a subclass of LMC useful for our aims.
\begin{definition}
  A LMC $(\calC, \lambda)$ is said \emph{$(p^+,N_0)$- divergent} with
  $p^+ > \frac 1 2$ and $N_0\in \bbN$ if letting
  $F\eqdef \lambda^{-1}([0,N_0])$, for every $s \in S\setminus F$,
  $\Pt^=(s)<1$ implies $\frac {\Pt^+(s)}{\Pt^-(s)+\Pt^+(s)}\geq p^+$.
\end{definition}

The $(p^+,N_0)$-divergence constrains states of levels larger than
$N_0$, and imposes that, from those states that do not stay at the
same level, the relative proportion of successors increasing their
levels compared with those decreasing their levels is at least the
value $p^+$ (itself larger than $\frac 1 2$).  Note that a
$(p^+,N_0)$-divergent LMC is also $(p'^+,N'_0)$-divergent for all
$\frac{1}{2} < p'^+ \le p^+$ and $N'_0 \ge N_0$. This will allow
to adjust the corresponding set $F$ that will be used in the
approach, as will be seen in the experiments
(Section~\ref{sec:experiments}).

To be able to apply the previous approach, it remains to examine under
which conditions starting from a $(p^+,N_0)$-\emph{divergent} LMC
$\calC$: (1) $\calW^{p}$ is an abstraction, and (2) $\calC'$ obtained
via this abstraction is decisive w.r.t. $F$ from $s_0$.  The next
proposition shows that $\calW^{p}$ is an abstraction as soon as
$1/2<p<p^+$.

\begin{restatable}{proposition}{propabs}
  \label{proposition:layered1}
  Let $(\calC, \lambda)$ be a $(p^+,N_0)$-divergent LMC and write
  $F\eqdef \lambda^{-1}([0,N_0])$.  We define $\alpha$ as the
  restriction of $\lambda$ to $S \setminus \Av(F)$, and we let
  $\frac 1 2<p< p^+$.  Then
  $(\calC,F) \stackrel{\alpha}{\hookrightarrow} (\calW^{p},[0,N_0])$
  is an abstraction.
\end{restatable}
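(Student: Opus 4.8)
The plan is to verify the two defining conditions of an abstraction (Definition~\ref{definition:abstraction}) for $\calC^\bullet = \calW^{p}$, $F^\bullet = [0,N_0]$ and $\alpha = \lambda|_{S\setminus\Av(F)}$. The first task is to obtain a closed form for the reachability function $\mu_{F^\bullet}(m) = \mu_{\calW^{p},[0,N_0]}(m)$. Since the walk moves by $\pm 1$ on $\{N_0,N_0+1,\dots\}$ and its only absorbing state $0$ lies strictly below $N_0$, reaching $[0,N_0]$ from a state $m>N_0$ amounts to first reaching $N_0$; by the homogeneity of the walk above $N_0$ and Proposition~\ref{prop:rw} (with $N_0$ playing the role of the absorbing barrier), this first-passage probability is $\kappa^{m-N_0}$. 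Hence $\mu_{F^\bullet}(m)=\kappa^{\max(m-N_0,0)}$, which is non-increasing in $m$ because $p>1/2$ forces $\kappa<1$.

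Condition (A) is immediate: as $F=\lambda^{-1}([0,N_0])$, any $s\in F$ satisfies $\alpha(s)=\lambda(s)\in[0,N_0]=F^\bullet$, and $\alpha$ is indeed defined on $F$ since $F\cap\Av(F)=\emptyset$. For the monotony condition (B), fix $s\in S\setminus(F\cup\Av(F))$ and set $n=\lambda(s)>N_0$. Because all summands are nonnegative and $\mu_{F^\bullet}(\lambda(s'))$ is defined for every successor $s'$, I would first enlarge the sum over $s'\notin\Av(F)$ to a sum over all successors (which only weakens the bound we must prove), and then split according to the level change. The LMC constraint forces $\lambda(s')\in\{n-1\}\cup\{n\}\cup\{n+1,n+2,\dots\}$, so the three groups contribute exactly $\Pt^-(s)\,\mu_{F^\bullet}(n-1)$, $\Pt^=(s)\,\mu_{F^\bullet}(n)$, and $\sum_{\lambda(s')\ge n+1}\Pt(s,s')\,\mu_{F^\bullet}(\lambda(s'))$.

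Using $\mu_{F^\bullet}(n-1)=\kappa^{n-1-N_0}$, $\mu_{F^\bullet}(n)=\kappa^{n-N_0}$, and the bound $\mu_{F^\bullet}(n+k)=\kappa^{n-N_0}\kappa^{k}\le\kappa^{n-N_0+1}$ for every $k\ge 1$ (here $\kappa<1$ is essential), the whole sum is at most $\kappa^{n-N_0}\bigl(\kappa^{-1}\Pt^-(s)+\Pt^=(s)+\kappa\,\Pt^+(s)\bigr)$. Since the target value is $\mu_{F^\bullet}(\alpha(s))=\kappa^{n-N_0}$, condition (B) reduces to the scalar inequality $\kappa^{-1}\Pt^-(s)+\Pt^=(s)+\kappa\,\Pt^+(s)\le 1$, which, after substituting $\Pt^=(s)=1-\Pt^-(s)-\Pt^+(s)$ and dividing by $1-\kappa>0$, is equivalent to $\Pt^-(s)\le\kappa\,\Pt^+(s)$. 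If $\Pt^=(s)=1$ this is trivial (both sides vanish, and the monotony inequality is in fact an equality); otherwise $(p^+,N_0)$-divergence gives $\tfrac{\Pt^+(s)}{\Pt^-(s)+\Pt^+(s)}\ge p^+$, i.e. $\Pt^-(s)\le\tfrac{1-p^+}{p^+}\Pt^+(s)$, and since $x\mapsto\tfrac{1-x}{x}$ is decreasing and $p<p^+$ we get $\tfrac{1-p^+}{p^+}<\tfrac{1-p}{p}=\kappa$, closing the argument.

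The main obstacle is conceptual rather than computational: the abstraction $\calW^{p}$ only takes unit steps, whereas $\calC$ may jump up by arbitrarily many levels in a single transition, so the level sets of the two chains do not correspond step-by-step. The resolution — that the geometric decay of $\mu_{F^\bullet}$ with $\kappa<1$ makes large upward jumps even more favorable to the inequality, so that bounding $\kappa^{k}\le\kappa$ for $k\ge 1$ suffices — is precisely where the two hypotheses $p>1/2$ and $p<p^+$ are both consumed. A minor point to keep straight is that the monotony sum ranges only over $s'\notin\Av(F)$, which is harmless since the excluded terms are nonnegative and are simply added back when passing to the sum over all successors.
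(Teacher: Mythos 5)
Your proof is correct and follows essentially the same route as the paper's: the closed form $\mu_{F^\bullet}(m)=\kappa^{\max(m-N_0,0)}$ from Proposition~\ref{prop:rw}, the case split on $\Pt^=(s)=1$ versus $\Pt^=(s)<1$, the bounding of all upward jumps by $\mu_{F^\bullet}(n+1)$ via monotonicity of $\kappa^k$, and the final appeal to $(p^+,N_0)$-divergence together with $\tfrac{1-p^+}{p^+}<\kappa$. The only difference is that the paper pushes the computation further to extract the explicit lower bound $1-h(s)\ge\frac{2p-1}{(1-p)p}(\Pt^-(s)+\Pt^+(s))(p^+-p)$ (Corollary~\ref{coro:stronger}), which is not needed for monotony itself but is reused in the later decisiveness proofs, whereas your cleaner reduction to $\Pt^-(s)\le\kappa\,\Pt^+(s)$ stops at exactly what the proposition requires.
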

The only point that needs to be checked is the monotony condition
defining an abstraction. The proof is given in Appendix~\ref{app:B}
and distinguishes the states that almost-surely stay within the same
level, and the other states; the rest is just calculation. The
condition which is satisfied is even stronger than monotony: for all
$s \in S\setminus (F \cup \Av(F))$ such that $\alpha(s)=n>N_0$ and
$\Pt^=(s)<1$:
$1-h(s) \ge \frac{2p-1}{(1-p)p}\cdot (\Pt^-(s)+\Pt^+(s)) \cdot
  (p^+-p)$, where $h(s)$ is the decreasing ratio at $s$, see
page~\pageref{h}.

It remains to understand under which conditions the biased Markov
chain of
$(\calC,F) \stackrel{\alpha}{\hookrightarrow} (\calW^{p},[0,N_0])$ is
decisive w.r.t. $T$. To do that, let us introduce the key notion of
\emph{attractor}~\cite{ABM07}: given a Markov chain $\calC= (S,\Pt)$
and $R\subseteq S$, $R$ is an attractor if for all $s\in S$,
$ \Pr\big(\varrho^{\calC,s} \models \Diamond R\big) =1$.  There is a
relation between attractor and decisiveness, stated as follows:
if $R$ is a \emph{finite} attractor and $B\subseteq R$, then $\calC$
is decisive w.r.t. $B$.

The next theorem gives a simple condition for a set $R$ to be an
attractor in a Markov chain, using a \emph{Lyapunov function}.

\begin{restatable}{theorem}{theoremattractor}
  \label{theorem:cnsdec}
  Let $\calC=(S,\Pt)$ be a Markov chain and $R\subseteq S$ s.t. for
  all $s\in S$,
  $\Pr\left(\varrho^{\calC,s} \models \Diamond R\right)>0$, and let
  $\mathcal L: S \rightarrow \bbR^+$ be a Lyapunov function s.t.  (1)
  for all $n\in \bbN$, $\mathcal L^{-1}([0,n])$ is finite, and (2) for
  all $s\in S\setminus R$,
  $\sum_{s'\in S}\Pt(s,s') \cdot \mathcal{L}(s')\leq \mathcal{L}(s)$.
  Then for all $s \in S$,
  $\Pr\left(\varrho^{{\calC,s}} \models \Diamond R\right)=1$.
\end{restatable}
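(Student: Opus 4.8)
The plan is to read condition (2) as a supermartingale inequality, use condition (1) to confine escaping paths to a finite set, and then use the positive-reachability hypothesis to rule out being trapped in such a finite set. Fix a starting state $s$, write $(X_n)_n$ for the chain $(X^{\calC,s}_n)_n$ with its natural filtration $(\mathcal{F}_n)_n$, and let $\tau_R \eqdef \inf\{n \in \bbN : X_n \in R\}$; the goal is exactly $\Pr(\tau_R < \infty) = 1$. First I would consider the stopped process $V_n \eqdef \mathcal{L}(X_{n \wedge \tau_R})$. On $\{\tau_R > n\}$ we have $X_n \in S \setminus R$, so condition (2) gives $\Es[\mathcal{L}(X_{n+1}) \mid \mathcal{F}_n] \le \mathcal{L}(X_n)$, while on $\{\tau_R \le n\}$ the process is frozen; since $\mathcal{L} \ge 0$, this makes $(V_n)_n$ a nonnegative supermartingale, which by Doob's supermartingale convergence theorem converges almost surely to a finite limit.

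Second, I would combine this convergence with condition (1). On the event $\{\tau_R = \infty\}$ one has $V_n = \mathcal{L}(X_n)$ for every $n$, so $\mathcal{L}(X_n)$ converges and in particular $\sup_n \mathcal{L}(X_n) < \infty$ almost surely on that event. Setting $A_k \eqdef \mathcal{L}^{-1}([0,k]) \setminus R$, which is finite by (1), this yields $\{\tau_R = \infty\} \subseteq \bigcup_{k \in \bbN} \{X_n \in A_k \text{ for all } n\}$ up to a null set. It therefore suffices to show, for each fixed $k$, that a chain started in $A_k$ stays in $A_k$ forever with probability $0$.

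Third, this is where the hypothesis $\Pr(\varrho^{\calC,t} \models \Diamond R) > 0$ enters. Because $A_k$ is finite, each $t \in A_k$ admits a finite path to $R$ of positive probability; taking the maximal length $m$ and the minimal probability $q$ over the finite set $A_k$ gives $q > 0$, and from any state of $A_k$ the chain reaches $R$ within $m$ steps with probability at least $q$. Since remaining in $A_k$ throughout a window of $m$ steps forbids hitting $R$, applying the Markov property at times $0, m, 2m, \dots$ gives $\Pr(X_n \in A_k \text{ for all } n \le jm) \le (1-q)^j \to 0$. Hence each event $\{X_n \in A_k \text{ for all } n\}$ is null, and by countable union over $k$ we conclude $\Pr(\tau_R = \infty) = 0$, as required.

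The two conditions play complementary roles: (1) keeps the Lyapunov-bounded path from escaping to infinity, trapping it in a finite set, while positive reachability forbids it from looping forever inside that set. The hard part is the coupling between the last two steps: the confining set $A_k$ delivered by the supermartingale argument is effectively random, since its size depends on the realized bound on $\mathcal{L}(X_n)$, so the geometric-decay estimate must be established for each deterministic level $k$ and then summed; the uniform lower bound $q > 0$ on the per-window reaching probability is available precisely because $A_k$ is finite, which is exactly what condition (1) guarantees.
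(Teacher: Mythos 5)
Your proof is correct and follows essentially the same route as the paper's: a nonnegative-supermartingale convergence argument for $\calL(X_n)$ (the paper makes $R$ absorbing where you stop the process at $\tau_R$), finiteness of the sublevel sets $\calL^{-1}([0,k])$ to confine a non-hitting trajectory to a finite set, and a geometric-decay estimate from positive reachability within that finite set (this is exactly the paper's Lemma~\ref{lemma:csdec}). The only difference is presentational: where you decompose $\{\tau_R=\infty\}$ into a countable union of null events indexed by the level $k$, the paper argues by contradiction, applying Markov's inequality to $\Es(Y_\infty)$ to contradict $\Es(Y_\infty)\le\calL(s_0)<\infty$.
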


The full proof is rather involved and partly relies on martingale
theory; it is given in Appendix~\ref{app:B}.

Using the previous theorem, we show that choosing $\calW^{p}$ as an
abstraction with $1/2<p<p^+$ ensures decisiveness of $\calC'$. The
Lyapunov function will be obtained via the level function.

\begin{restatable}{proposition}{propdecisive}
  \label{prop:layereddecisive}
  Let $(\calC, \lambda)$ be a $(p^+,N_0)$-divergent LMC, write
  $F\eqdef \lambda^{-1}([0,N_0])$, let $\alpha$ be the restriction of
  $\lambda$ to $S \setminus \Av(F)$, and fix $\frac{1}{2}<p< p^+$.
  Then the biased Markov chain $\calC'$ of
  $(\calC,F) \stackrel{\alpha}{\hookrightarrow} (\calW^{p},[0,N_0])$
  is decisive w.r.t. any $T\subseteq F$.
\end{restatable}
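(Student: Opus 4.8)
The plan is to exhibit a \emph{finite attractor} of $\calC'$ that contains $T$ and then invoke the stated relation between finite attractors and decisiveness. Since $F=\lambda^{-1}([0,N_0])$ is a finite union of (finite) level sets, $F$ is finite, hence so is $R\eqdef F\cup\{s_-\}$. As $T\subseteq F\subseteq R$, it suffices to prove that $R$ is an attractor of $\calC'$, which I would obtain from Theorem~\ref{theorem:cnsdec} applied to $\calC'$ and $R$. Three things then have to be checked: the positive-reachability hypothesis, finiteness of the sublevel sets of a Lyapunov function, and the supermartingale inequality outside $R$.

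For positive reachability, let $s\in S'$. If $s\in R$ there is nothing to prove, so assume $s\in S\setminus(\Av_\calC(F)\cup F)$. Since $s\notin\Av_\calC(F)$ we have $\mu_{\calC,F}(s)>0$, so there is a finite path $s=s_0\to_\calC\cdots\to_\calC s_k\in F$ in $\calC$ whose intermediate states avoid $F$; each such state still allows reaching $F$, hence lies in $S\setminus\Av_\calC(F)$. By Definition~\ref{def:echprefeff} together with Lemma~\ref{corollary:unreachtounreach} (which guarantees $\mu_{F^\bullet}(\alpha(\cdot))>0$), every edge of this path is preserved with positive probability in $\calC'$, so $\Pr(\varrho^{\calC',s}\models\Diamond R)\ge\Pr(\varrho^{\calC',s}\models\Diamond F)>0$.

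The crucial ingredient is the choice of Lyapunov function. The naive candidate $\mathcal{L}=\lambda$ is tempting but fails: the level may jump upwards by an arbitrary amount, so the conditional expectation of $\lambda$ need not decrease near level $N_0$. Instead I would take $\mathcal{L}(s)\eqdef 1/\mu_{F^\bullet}(\alpha(s))$ for $s\in S\setminus\Av_\calC(F)$ and $\mathcal{L}(s_-)\eqdef 0$. The supermartingale inequality of Theorem~\ref{theorem:cnsdec}(2) is then \emph{immediate} from the shape of $\calC'$ in Definition~\ref{def:echprefeff}: for $s\notin R$,
\[
  \sum_{s'\in S'}\Pt'(s,s')\,\mathcal{L}(s') = \frac{1}{\mu_{F^\bullet}(\alpha(s))}\sum_{s'\in S\setminus\Av_\calC(F)}\Pt(s,s') \le \frac{1}{\mu_{F^\bullet}(\alpha(s))} = \mathcal{L}(s),
\]
since the edge to $s_-$ contributes $0$ and each factor $\mu_{F^\bullet}(\alpha(s'))$ cancels against $\tfrac{1}{\mu_{F^\bullet}(\alpha(s'))}$. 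Note that this step uses only that $\calC'$ is the biased chain of the abstraction, and neither monotony nor divergence.

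It remains to verify Theorem~\ref{theorem:cnsdec}(1), and this is exactly where the random-walk abstraction and the hypothesis $p>\tfrac12$ enter. By Proposition~\ref{prop:rw} one has $\mu_{F^\bullet}(\alpha(s))=\kappa^{(\lambda(s)-N_0)^+}$ with $\kappa=\tfrac{1-p}{p}<1$, so $\mathcal{L}(s)=\kappa^{-(\lambda(s)-N_0)^+}$ grows geometrically with the level and tends to $+\infty$. Consequently, for $n\ge 1$, $\mathcal{L}^{-1}([0,n])$ consists of $s_-$ together with the states of level at most $N_0+\log_{1/\kappa} n$; as each level set $\lambda^{-1}(k)$ is finite, this is a finite set (and for $n<1$ it is just $\{s_-\}$). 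With all three hypotheses in hand, Theorem~\ref{theorem:cnsdec} shows that $R$ is an attractor of $\calC'$; being finite and containing $T$, it yields that $\calC'$ is decisive w.r.t. $T$. Here Proposition~\ref{proposition:layered1}, which rests on divergence and on $p<p^+$, is used only beforehand, to guarantee that $\calW^{p}$ is a legitimate abstraction so that $\calC'$ is well defined. The main obstacle is thus conceptual rather than computational: spotting that the reciprocal harmonic weight $1/(\mu_{F^\bullet}\circ\alpha)$ is the right Lyapunov function, whereas the geometric decay of $\mu_{F^\bullet}$ (forced by $p>\tfrac12$) is what rescues the finiteness condition~(1).
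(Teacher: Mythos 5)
Your proof is correct, but it takes a genuinely different route from the paper's. The paper also applies Theorem~\ref{theorem:cnsdec} to $\calC'$, but with the Lyapunov function $\calL=\alpha$ (the level itself, with $\calL(s_-)=0$) and with the \emph{enlarged} attractor $\alpha^{-1}([0,N_1])\cup\{s_-\}$ for a sufficiently large $N_1\ge N_0$; the supermartingale inequality then requires a case split on $\Pt^=(s)=1$ versus $\Pt^=(s)<1$, the quantitative strengthening of monotony from Corollary~\ref{coro:stronger}, and a computation bounding the upward-jump contribution by $B=\sup_{x\ge0}x\kappa^x$, after which $N_1$ is chosen so that $-B+N_1\tfrac{2p-1}{(1-p)p}(p^+-p)\ge0$. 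Your choice $\calL=1/(\mu_{F^\bullet}\circ\alpha)$ short-circuits all of this: the inequality collapses to $\frac{1}{\mu_{F^\bullet}(\alpha(s))}\sum_{s'\notin\Av_\calC(F)}\Pt(s,s')\le\calL(s)$ by pure cancellation against Definition~\ref{def:echprefeff}, you can keep $R=F\cup\{s_-\}$ as the attractor, and the random-walk structure is needed only for the well-definedness of $\calC'$ (via Proposition~\ref{proposition:layered1}) and for the geometric decay $\mu_{F^\bullet}(m)=\kappa^{(m-N_0)^+}$ that makes the sublevel sets of $\calL$ finite. Your argument is shorter and strictly more general — it shows that the biased chain of \emph{any} abstraction is decisive w.r.t.\ any $T\subseteq F$ with $F$ finite, provided $\mu_{F^\bullet}\circ\alpha$ tends to $0$ outside finite sets — and you also verify the positive-reachability hypothesis of Theorem~\ref{theorem:cnsdec} explicitly, which the paper leaves implicit. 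What the paper's heavier computation buys is the uniform drift estimate on the level function, which is then reused in Proposition~\ref{prop:layereddecisivefinite} together with Theorem~\ref{foster} to bound the \emph{expected} hitting time linearly in the initial level; your Lyapunov function gives drift $\Pt(s,\Av_\calC(F))/\mu_{F^\bullet}(\alpha(s))$, which can vanish, so it would not serve that later quantitative purpose. One cosmetic remark: your claim that ``$\calL=\lambda$ fails'' is accurate only for the attractor $F\cup\{s_-\}$; the paper rescues it by enlarging the attractor rather than changing the function.
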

The detailed proof is given in Appendix~\ref{app:B}; we explain here
the rough idea. This proposition will be an application of
Theorem~\ref{theorem:cnsdec} to $\calC'$ with Lyapunov function
$\calL$ given by $\alpha$ (and additionally $\calL(s_-)=0$). So there
will be some $N_1\geq N_0$ such that
$R\eqdef \mathcal L^{-1}\left([0,N_1]\right)$ is a finite attractor in
$\calC'$. Condition (2) of Theorem~\ref{theorem:cnsdec} is ensured by
the fact that the level is unchanged after a transition from $s$ if
$\Pt^=(s)=1$, and by the stronger condition given after
Proposition~\ref{proposition:layered1} otherwise.

This proposition allows to apply the analysis of
Subsection~\ref{subsec:biased} to the biased Markov chain of
$(\calC,F) \stackrel{\alpha}{\hookrightarrow} (\calW^{p},[0,N_0])$,
yielding approximation and estimation algorithms for the original
Markov chain. Nevertheless, as argued in
Subsection~\ref{subsec:decisive_smc}, decisiveness is enough to ensure
correctness of the SMC, but not enough for efficiency. Efficiency
can be ensured, if the expected time for reaching $T \cup \Av(T)$ is
finite. We will do so by strengthening the divergence condition of LMC.

To do so we present another theorem for the existence of an attractor,
inspired by Foster's theorem~\cite{10.1214/aoms/1177728976}, whose
proof is given in Appendix~\ref{app:B}. Observe that here the
requirement becomes: the average level decreases by some fixed
$\varepsilon>0$, and the other requirements are no more necessary.

\begin{restatable}{theorem}{theoremfoster}
  \label{theo:foster-modifie}\label{foster}
  Let $\calC= (S,\Pt)$ be a Markov chain and $R \subseteq S$. If there
  exists $\calL : S \rightarrow \mathbb{R}_{\ge 0}$ and
  $\varepsilon>0$ such that for all $s \notin R$,
  $\calL(s) - \sum_{s' \in S} \Pt(s,s') \cdot \calL(s') \geq
    \varepsilon$, then for all $s \notin R$ the expected time to reach
  $R$ is finite and bounded by $\frac{\calL(s)}{\varepsilon}$; in
  particular, $R$ is an attractor of $\calC$.
\end{restatable}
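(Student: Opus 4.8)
The plan is to read the drift hypothesis as a supermartingale condition and run the classical Foster--Lyapunov argument. Fix a state $s \notin R$, write $X_n \eqdef X_n^{\calC,s}$ for the underlying process, and let $\tau \eqdef \min\{n \in \bbN \mid X_n \in R\}$ be the first hitting time of $R$ (with $\tau = +\infty$ if $R$ is never reached). The central object is the process
\[
  Y_n \eqdef \calL(X_{n \wedge \tau}) + \varepsilon \cdot (n \wedge \tau),
\]
which glues the stopped Lyapunov value to a term accumulating the guaranteed per-step drift $\varepsilon$. Because $\calL \ge 0$ and $\varepsilon, n \ge 0$, the process $(Y_n)_n$ is nonnegative.

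First I would verify that $(Y_n)_n$ is a supermartingale for the history $X_0, \ldots, X_n$. On the event $\{n < \tau\}$, which is determined by $X_0,\dots,X_n$, we have $X_n \notin R$, so the hypothesis together with the Markov property gives $\Es\big(\calL(X_{n+1}) \mid X_0,\dots,X_n\big) = \sum_{s'} \Pt(X_n,s')\,\calL(s') \le \calL(X_n) - \varepsilon$; since there $(n{+}1)\wedge\tau = n{+}1$ and $n\wedge\tau = n$, this yields $\Es(Y_{n+1} \mid X_0,\dots,X_n) \le \calL(X_n) + \varepsilon n = Y_n$. On the complementary event $\{n \ge \tau\}$ the process is frozen, so $Y_{n+1} = Y_n$ deterministically. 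Hence $(Y_n)_n$ is a nonnegative supermartingale, and iterating the inequality gives $\Es(Y_n) \le \Es(Y_0) = \calL(s)$ for every $n$.

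Next, discarding the nonnegative Lyapunov term leaves $\varepsilon \cdot \Es(n \wedge \tau) \le \Es(Y_n) \le \calL(s)$, so $\Es(n \wedge \tau) \le \calL(s)/\varepsilon$ uniformly in $n$. Since $n \wedge \tau \uparrow \tau$ as $n \to \infty$, monotone convergence gives $\Es(\tau) = \lim_n \Es(n \wedge \tau) \le \calL(s)/\varepsilon < \infty$, which is precisely the claimed bound. Finiteness of $\Es(\tau)$ forces $\tau < \infty$ almost surely, i.e. $\Pr(\varrho^{\calC,s} \models \Diamond R) = 1$; for $s \in R$ this is trivial, so $R$ is an attractor of $\calC$.

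The only genuine subtlety, and the step to get right, is the bookkeeping of the stopping time inside $Y_n$: one must add the drift term $\varepsilon(n\wedge\tau)$ so that the strict decrease $-\varepsilon$ is exactly cancelled before $R$ is hit, and freeze the process at $\tau$ so the supermartingale inequality also holds (trivially) on $\{n \ge \tau\}$. Once the supermartingale is set up this way, the remainder is the routine combination of the bound $\Es(Y_n) \le \Es(Y_0)$ with monotone convergence; crucially, no integrability issue arises precisely because $\calL \ge 0$ keeps the truncated expectations $\Es(n\wedge\tau)$ bounded, so passing to the limit in $n$ is unproblematic.
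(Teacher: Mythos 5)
Your proof is correct and is essentially the paper's argument in different packaging: the paper makes $R$ absorbing and telescopes the expected increments of $\calL$, obtaining $\varepsilon \sum_{k\le n}\Pr\left(X_k(s)\notin R\right)\le \calL(s)$, which is exactly your bound $\varepsilon\,\Es(n\wedge\tau)\le\calL(s)$ since $\Es(n\wedge\tau)=\sum_{k<n}\Pr(\tau>k)$, and both routes then conclude by letting $n\to\infty$. Recasting the computation as the nonnegative supermartingale $\calL(X_{n\wedge\tau})+\varepsilon(n\wedge\tau)$ is a clean, standard equivalent, and your remark that nonnegativity of $\calL$ removes any integrability worry is exactly the point that makes both versions go through.
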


We are now in a position to establish a sufficient condition for the
biased LMC $\calC'$ of
$(\calC,F) \stackrel{\alpha}{\hookrightarrow} (\calW^{p},[0,N_0])$ to
be decisive with finite expected time to reach some finite target $T$.

\begin{restatable}{proposition}{propdecisivefinite}
  \label{prop:layereddecisivefinite}
  Let $(\calC, \lambda)$ be a $(p^+,N_0)$-divergent LMC such that
  $\inf_{s \in \lambda^{-1}\left(]N_0,\infty[\right)} \Pt^+(s) >0$,
  and write $F\eqdef \lambda^{-1}([0,N_0])$.  We define $\alpha$ as
  the restriction of $\lambda$ to $S \setminus \Av(F)$, and we fix
  $\frac 1 2<p< p^+$. Then the biased Markov chain $\calC'$ of
  $(\calC,F) \stackrel{\alpha}{\hookrightarrow} (\calW^{p},[0,N_0])$
  is decisive w.r.t. $T\subseteq F$ with finite expected time to reach
  $T\cup \Av_{\calC'}(T)$.
\end{restatable}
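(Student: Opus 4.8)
The statement has two parts: decisiveness of $\calC'$ w.r.t.\ $T$, and finiteness of the expected time to reach $R^\ast := T \cup \Av_{\calC'}(T)$. Decisiveness is exactly Proposition~\ref{prop:layereddecisive} (whose attractor-based proof in fact yields a.s.\ reachability of $R^\ast$ from \emph{every} state of $\calC'$, a fact I reuse below). So the real work is the finite-expected-time bound, and the plan is twofold: (i) use the extra hypothesis $c := \inf_{s \in \lambda^{-1}(]N_0,\infty[)} \Pt^+(s) > 0$ to upgrade the non-strict supermartingale behaviour of the level function (which sufficed for decisiveness) into a \emph{strict} downward drift at high levels, so that Foster's theorem (Theorem~\ref{theo:foster-modifie}) applies; then (ii) bootstrap from ``finite expected time to reach a finite low-level set'' to ``finite expected time to reach $R^\ast$''.

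For step~(i) I take the Lyapunov function $\calL := \lambda$ on $S \setminus \Av_\calC(F)$ and $\calL(s_-) := 0$, and compute its one-step drift under $\Pt'$. Writing $a_k(s) := \sum_{\lambda(s')=\lambda(s)+k}\Pt(s,s')$ (so $\Pt^+(s)=\sum_{k\ge 1}a_k(s)$) and recalling $\Pt'(s,s')=\Pt(s,s')\,\kappa^{\lambda(s')-\lambda(s)}$ together with $\Pt'(s,s_-)=1-h(s)$, a direct rearrangement gives, for $\lambda(s)=n$,
\[
  \calL(s)-\textstyle\sum_{s'}\Pt'(s,s')\,\calL(s') \;=\; n\,(1-h(s)) \;+\; \kappa^{-1}\Pt^-(s) \;-\; \sum_{k\ge 1} k\,\kappa^{k}\,a_k(s).
\]
Here $\kappa^{-1}\Pt^-(s)\ge 0$ and $\sum_{k\ge1}k\kappa^k a_k(s)\le M:=\sup_{k\ge 1}k\kappa^{k}<\infty$ (finite since $\kappa<1$). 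The role of the new hypothesis is that for every $s$ with $\lambda(s)>N_0$ we now have $\Pt^=(s)\le 1-c<1$, so the ``stronger condition'' established after Proposition~\ref{proposition:layered1} applies and yields $1-h(s)\ge \frac{2p-1}{(1-p)p}(\Pt^-(s)+\Pt^+(s))(p^+-p)\ge \frac{2p-1}{(1-p)p}\,c\,(p^+-p)=:c'>0$ \emph{uniformly}. Hence the drift is at least $n c'-M$, which exceeds $\varepsilon:=1$ as soon as $n>N_1$ for a suitable $N_1\ge N_0$. Setting $R:=\lambda^{-1}([0,N_1])\cup\{s_-\}$ (finite, by the LMC property), Foster's theorem gives $\Es_s\big(\tau^{\calC',s,R}\big)\le \lambda(s)$ for every $s$, so $R$ is reached from any state in finite expected time.

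For step~(ii) I combine this with decisiveness. Since $R$ is finite and $R^\ast$ is reached a.s.\ from every state, $\Pr_s(\text{$R^\ast$ hit within $K$ steps})\to 1$ as $K\to\infty$ for each $s\in R$; finiteness of $R$ then provides a single $K$ and $\eta>0$ with $\min_{s\in R}\Pr_s(\text{$R^\ast$ hit within $K$ steps})\ge \eta$. Moreover, because transition supports are finite and $R$ is finite, $L_K:=\max_{s\in R}\Es_s\big(\lambda(X^{\calC',s}_K)\big)<\infty$. Setting $g:=\sup_{s\in R}\Es_s\big(\tau^{\calC',s,R^\ast}\big)$ and conditioning, from a state of $R$, on whether $R^\ast$ is hit within the first $K$ steps, the strong Markov property together with the Foster bound $\Es_{s'}\big(\tau^{\calC',s',R}\big)\le \lambda(s')$ (applied at $X_K$ and averaged) yields $g\le K+L_K+(1-\eta)\,g$, whence $g\le (K+L_K)/\eta<\infty$ (made rigorous by first truncating $\tau^{\calC',s,R^\ast}$ at $n$ and letting $n\to\infty$ via monotone convergence, to avoid manipulating a possibly infinite $g$). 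Finally $\Es_{s_0}\big(\tau^{\calC',s_0,R^\ast}\big)\le \lambda(s_0)+g<\infty$.

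The main obstacle is step~(ii): excursions of $\calC'$ above $R$ can jump to arbitrarily high levels in a single step, since the LMC condition bounds only the per-step \emph{decrease} of $\lambda$, not its increase, so the cost of an excursion cannot be bounded uniformly by hand. The resolution is that finiteness of $R$ and of the transition supports makes the $K$-step expected level $L_K$ finite, while Foster's theorem controls the return time to $R$ in terms of the level actually reached; together these make the renewal estimate converge. A secondary point worth stressing is that the hypothesis $\inf_{s\in\lambda^{-1}(]N_0,\infty[)}\Pt^+(s)>0$ is used twice and is exactly what is needed: it rules out $\Pt^=(s)=1$ at high levels (the states that would only make the drift non-strict), and it turns the level-dependent lower bound on $1-h(s)$ into a \emph{uniform} positive one.
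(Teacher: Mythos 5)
Your proof is correct. The first half --- extending the level function to a Lyapunov function with $\calL(s_-)=0$, using the hypothesis $\hat p:=\inf_{s\in\lambda^{-1}(]N_0,\infty[)}\Pt^+(s)>0$ both to rule out $\Pt^=(s)=1$ at high levels and to turn the lower bound on $1-h(s)$ from Corollary~\ref{coro:stronger} into a uniform one, then invoking Theorem~\ref{theo:foster-modifie} with $\varepsilon=1$ to get $\Es\big(\tau^{\calC',s,R}\big)\le\calL(s)$ for a finite set $R=\lambda^{-1}([0,N_1])\cup\{s_-\}$ --- is exactly the paper's argument (the paper keeps the factor $\Pt^+(s)\ge\hat p$ in front of the entire drift bound rather than splitting it into $nc'-M$ as you do, but this is immaterial). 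Where you genuinely diverge is the bootstrap from $R$ to $\tilde T:=T\cup\Av_{\calC'}(T)$. The paper tracks the successive visit times $\gamma(0)<\gamma(1)<\cdots$ to $R$, bounds each inter-visit time by $1+h_{\max}$, where $h_{\max}$ is the maximal level reachable in \emph{one} step from the finite set $R$ (again via the Foster bound applied to the state just after leaving $R$), and obtains geometric decay of $\Pr\big(X_{\gamma(n)}\notin\tilde T\big)$ from the finitely many shortest paths from $R$ to $\tilde T$ (parameters $\ell_{\max}$ and $p_{\min}$). You instead work in blocks of $K$ steps, extract a uniform success probability $\eta$ over the finite set $R$ from almost-sure reachability of $\tilde T$, and pay $K+L_K$ per failed block, where $L_K$ is the $K$-step expected level started from $R$; the fixed-point inequality $g\le K+L_K+(1-\eta)g$ then closes the argument. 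The two bookkeepings are interchangeable: both rest on the same three pillars (finiteness of $R$, finiteness of the transition supports to control the overshoot level on leaving $R$, and the Foster return-time bound), and both yield a bound on $\Es\big(\tau^{\calC',s_0,\tilde T}\big)$ that is linear in $\lambda(s_0)$. Your version avoids reasoning about shortest paths, at the cost of the truncation/monotone-convergence step needed to solve for a possibly infinite $g$, which you correctly flag; your closing remark on the double role of the hypothesis $\hat p>0$ matches the paper's use of it.
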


The full proof is given in Appendix~\ref{app:B}; the idea is as
follows. We use the same Lyapunov function as before, and the stronger
condition mentioned after Proposition~\ref{prop:layereddecisive}
together with the constraint on $\Pt^+$: applying
Theorem~\ref{foster}, we are able to find a finite attractor
$R\eqdef \mathcal L^{-1}([0,N_1])\cup \{s_-\}$ for some $N_1 \ge N_0$,
reachable in finite expected time (given by $\alpha$). By analyzing
the successive visits of $R$ before reaching $T\cup \Av_{\calC'}(T)$,
we derive a bound on the expected time to reach
$T\cup \Av_{\calC'}(T)$, which (linearly) depends on the level of the
initial state.

\section{Applications and experiments}
\label{sec:experiments}


\subparagraph*{Probabilistic pushdown automata.}  Our method is
applied to the setting of probabilistic pushdown automaton (pPDA)
using the height of the stack as the level function $\lambda$.  We
only provide an informal definition for pPDA (see~\cite{Esparza06} for
a formal definition).

A pPDA configuration consists of a stack of letters from an alphabet
$\Sigma$ and a state of an automaton. A set of rules describes how the
top of the stack is modified. A rule $(q,a) \xrightarrow{w} (q',u)$
applies if the top of the stack matches the letter $a$ and the current
state is $q$. Then it replaces $a$ by the word $u$ and $q$ by $q'$.
The weight $w$ of the rule is a polynomial in $n$, the size of the
stack. Probability rules are defined with the relative weight of the
rule which applies w.r.t. all rules that could apply.  If the target
$T$ is defined as a regular language on the stack $\Av(T)$ is also a
regular language (see~\cite{Esparza06}) that can be computed: the
membership of a configuration to $T$ and $\Av(T)$ is effective and not
costly.

\begin{example}
  We consider the pPDA with a single (omitted) state with stack
  alphabet $\{\texttt{A},\texttt{B},\texttt{C}\}$ defined by the set
  of rules: \label{example:pta}
  $ \{ \texttt{A} \xrightarrow{1} \texttt{C}, \texttt{A}
  \xrightarrow{n} \texttt{BB}, \texttt{B} \xrightarrow{5} \varepsilon
  , \texttt{B} \xrightarrow{n} \texttt{AA} , \texttt{C}
  \xrightarrow{1} \texttt{C} \}$.  Starting with the stack containing
  only $\texttt{A}$, the target set $T = \{\varepsilon\}$ is the
  configuration with the empty stack and $\Av(T)$ is the set of
  configurations containing a $\texttt{C}$.  Let us describe some
  possible evolutions. From the initial configuration two rules apply
  by reading $\texttt{A}$: the new stack is $\texttt{C}$ with
  probability $\frac 1 2$ or $\texttt{BB}$ with probability
  $\frac 1 2$. From the stack $\texttt{BB}$, two rules apply by
  reading the first $\texttt{B}$: the new stack is then $\texttt{B}$
  with probability $\frac 5 7$ ($7$ is the sum of the weight of
  $\texttt{B} \xrightarrow{5} \varepsilon$ and of
  $\texttt{B} \xrightarrow{n} \texttt{AA}$, with $n=2$), and
  $\texttt{BAA}$ with probability $\frac 2 7$.
\end{example}

The approach described previously applies to pPDA, as soon as the LMC
defined by the pPDA can be proven to be $(p^+,N_0)$-divergent for some
$p^+>\frac 1 2$ and $N_0 \in \bbN$. This condition can be ensured by
some syntactical constraints on the pPDA.



\subparagraph*{Implementation.}
\label{subsec:impl}
Since SMC with importance sampling is already present in the tool
Cosmos~\cite{BHP12}, we only added the mapping function $\lambda$ in
order to apply our method.
We focus here on the
implementation details of Algorithm~\ref{algo:approx}, which
(to the best of our knowledge) has never been done.

Algorithm~\ref{algo:approx} requires to sum up a large number of
probabilities accurately while those probabilities are of different
magnitudes.  We have experimentally observed that without dedicated
summation algorithms, the implementation of this algorithm does
not converge.  We therefore propose a data structure with better
accuracy when summing up positive values at the cost of increased
memory consumption and time. This data structure encodes a floating
point number $r$ as a table of integers of size $512$ where the cell
$c$ at index $i$ stores the value $c2^{-i}$, with $c$ being a small
enough integer to be represented exactly. The probability $r$ is the
sum of the values of the table.

We specialized Algorithm~\ref{algo:approx} (called AlgoDec in the
following), when the function to be evaluated satisfies the following
\emph{monoidal property}: for all $\rho = \rho_1\rho_2$, $f(\rho) =
f(\rho_1) \cdot f(\rho_2)$; this property is in particular satisfied
by the likelihood related to an importance sampling.  It is thus
possible to merge paths leading to the same state and store only for
each state the probability to reach it and the weighted average
likelihood of the merged paths. In practice, this leads to a large
improvement. Another improvement is the use of a heap where states are
ordered by their probability to be reached: the algorithm will
converge faster. The termination of the algorithm still holds
\ifARXIV
as the
heap management is fair, see Appendix~\ref{app:termination}.
\else
as the heap management happens to be fair, \pat{see~\cite{ARXIV} for details.}
\fi


\subparagraph*{Experimental studies.}
\label{subsec:exp}
We first ran experiments\footnote{All the experiments are run with a
  timeout of 1 hour and a confidence level set to $0.99$.} on the
example depicted on Figure~\ref{fig:ex1}. As there are only two states per
level, the numerical algorithm (AlgoDec) with important sampling is
very efficient and computes an interval of $0.0258657 \pm 10^{-8}$ in
10ms. The SMC approach computes a confidence interval of
$0.02586 \pm 10^{-4}$ in 135s.  As expected the SMC approach is much
slower on such a small toy example.

The pPDA of Example~\ref{example:pta} is both decisive and a
$(p,N_0)$-divergent LMC for $1/2 <p \leq \frac{N_0}{N_0+5}$ so that
$(\mathcal{W}^p,[0,N_0])$ defines an abstraction.  We compare the use
of importance sampling with different values of $p$ to standard SMC
and AlgoDec. In Figure~\ref{fig:decif} each point is the result of a
computation with or without importance sampling. The value $0.3151$ is
contained in the intervals returned by all numerical computations and
all but one confidence intervals of SMC (consistent with 120
experiments and a confidence level of $0.99$).

Figure~\ref{fig:decifepsilon} depicts the computation time w.r.t.  the
width of the confidence interval for the two algorithms over three
Markov chains: the initial Markov chain, the importance sampling using
$\mathcal{W}^{0.6}$ as abstraction and the importance sampling with
$\mathcal{W}^{0.51}$ as abstraction. Looking only at SMC (dotted line
on the figure) the computation time scales the same way on the three
curves with the standard SMC taking more time. Looking at the AlgoDec
curves (solid line) with a well-chosen value of $p=0.6$ this algorithm
is very fast but with another value of $p$ or without importance
sampling the performance quickly degrades.

To better understand how the computation time increases w.r.t. $p$ we
plot it in Figure~\ref{fig:decifp}. The SMC is barely sensible to the
value of $p$ while the computation time of AlgoDec reaches a minimum
at around $p=0.6$ and becomes intractable when $p$ moves away from this
value.

\begin{figure}
  \begin{subfigure}[t]{0.5\textwidth}
    \includegraphics[width=1.0\textwidth]{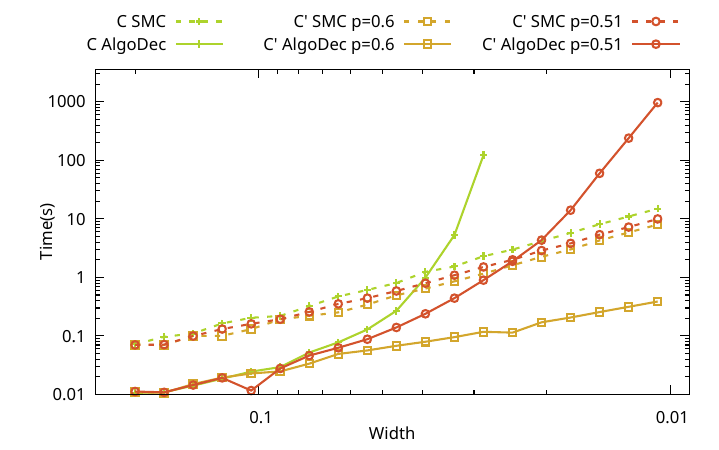}
    \caption{Computation time as a function of the precision, the
      width is given in logarithmic scale. \label{fig:decifepsilon}}
  \end{subfigure}
  \begin{subfigure}[t]{0.5\textwidth}
    \includegraphics[width=1.0\textwidth]{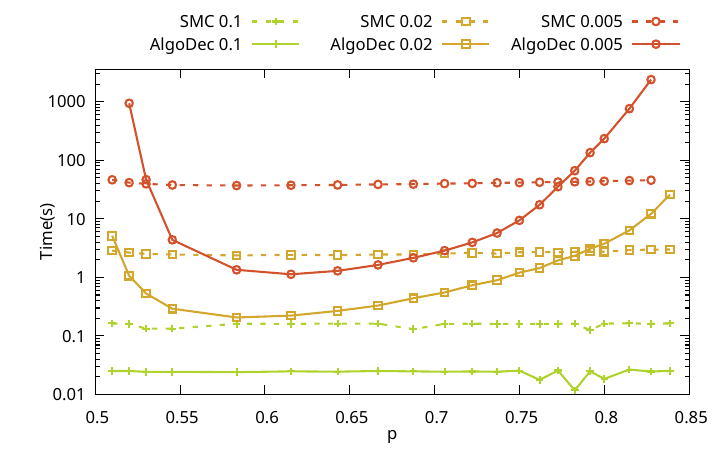}
    \caption{Computation time as a function of $p$ \label{fig:decifp}}
  \end{subfigure}
  \caption{Computation time for Example~\ref{example:pta} in logarithmic scale.  Given a value for $p$, the threshold $N_0$ is chosen as  the smallest integer such that $(\mathcal{W}^p,[0,N_0])$ defines an $\alpha$-abstraction.
    \label{fig:decif}}
\end{figure}

\begin{example}
  \label{example:pta7}
  We consider the pPDA with a single state with stack alphabet
  $\{\texttt{A},\texttt{B},\texttt{C}\}$ defined by the set of rules:
  $\{\texttt{A} \xrightarrow{1} \texttt{B}, \texttt{A} \xrightarrow{1}
  \texttt{C}, \texttt{B} \xrightarrow{10} \varepsilon, \texttt{B}
  \xrightarrow{10+n} \texttt{AA}, \texttt{C} \xrightarrow{10}
  \texttt{A}, \texttt{C} \xrightarrow{10+n} \texttt{BB}\}$
  starting with stack $\texttt{A}$, target configuration
  $T= \{\varepsilon\}$ and $\Av(T)= \emptyset$.
\end{example}

Example~\ref{example:pta7} is not decisive but is a
$(p,N_0)$-divergent LMC for $1/2<p \leq \frac{10+N_0}{20+N_0}$ thus
$(\mathcal{W}^p,[0,N_0])$ defines an abstraction. In
Figure~\ref{fig:decifepsilon2} we plot the computation time
w.r.t. $p$. The probability $0.516318$ is contained in all the
results. As in Example~\ref{example:pta}, AlgoDec is very sensitive to
the value of $p$ while SMC is not. In this example SMC is always
faster than AlgoDec with similar computation times for a well-chosen
value of $p$.

\begin{figure}
  \begin{center}
    \includegraphics[width=0.8\textwidth]{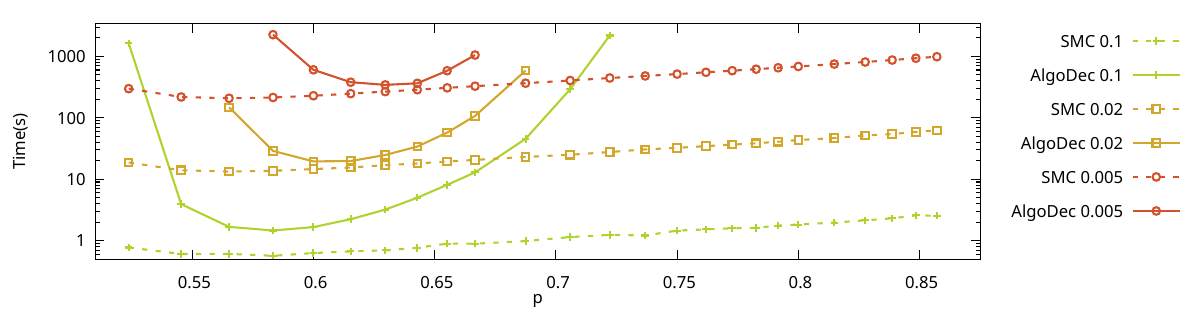}
    \vspace{-5mm}
  \end{center}
  \caption{Computation time as a function of $p$ for Example~\ref{example:pta7}.\label{fig:decifepsilon2}}
\end{figure}

From our experiments we observe that while importance sampling can be
applied both to AlgoDec and SMC, as soon as the size of the state
space grows, AlgoDec is not tractable.

Additionally, the few experiments that we have conducted suggest the
following methodology to analyze Markov chains: apply SMC with
importance sampling  for various values of $p$; find
the ``best'' $p$; apply AlgoDec with that value of $p$ (when possible).


\section{Conclusion}
\label{sec:conclusion}


We have recalled two standard approaches to the analysis of
reachability properties in infinite Markov chains, a deterministic
approximation algorithm, and a probabilistic algorithm based on
statistical model checking. For their correctness or termination, they
both require the Markov chain to satisfy a \emph{decisiveness}
property. Analyzing non decisive Markov chains is therefore a
challenge.

In this work, we have introduced the notion of abstraction for a
Markov chain and developed a theoretical method based on importance
sampling to ``transform'' a non decisive Markov chain into a decisive
one, allowing to transfer the analysis of the non decisive Markov
chain to the decisive one. Then we have presented a concrete framework
where the Markov chain is a layered Markov chain (LMC), the
abstraction is done via a random walk, and given conditions that
ensure that this abstract chain is decisive. Finally we have
implemented the two algorithms within the tool Cosmos, and compared
their respective performances on some examples given as probabilistic
pushdown automata (which are specific LMCs).

There are several further research directions that could be
investigated. First while (one-dimensional) random walks have closed
forms for reachability probabilities, other (more complex) models also
enjoy such a property, and could therefore be used for abstractions.
Second, the divergence requirements are based on conditions for
one-step transitions and could be relaxed to an arbitrary (but fixed)
number of steps. Finally, more systematic, and even automatic,
approaches could be investigated, that would compute adequate
abstractions to adequate classes of Markov chains allowing to use our
approach.

\bibliographystyle{plainurl}
\bibliography{main}

\newpage
\appendix


\ifARXIV
  \section{Missing proofs of Sections~\ref{sec:prelim} and~\ref{sec:decisive}}
  \label{app:termination}

  We give an example of a Markov chain, for which the probability to
  reach some target is irrational.  Let us consider the Markov chain
  whose set of states is $\nat$, $0$ is an absorbing state and for all
  $n>0$, $1-\Pt(n,n+1)=\Pt(n,0)=\frac 1 {n(n+1)}$.  Then the probability
  to reach 0 from 1 is equal to
  $1-\prod_{n\geq 1}1-\frac 1 {n(n+1)}=1+\frac {\cos(\sqrt{5}\pi/2)}
    {\pi}$.

  \algodec*
  \begin{proof}
    Let $\text{Tr}$ be the (possibly infinite) computation tree of the
    Markov chain, and for every depth $d$, let $\text{Tr}_{\le d}$ the
    prefix of $\text{Tr}$ of depth $d$ (it is a finite tree since the
    number of successors of each state is finite). We define
    $p_{\text{succ}}^{(d)}$ (resp. $p_{\text{fail}}^{(d)}$) the sum
    of path probabilities of successful (resp. lost) paths of length at
    most $d$. The Markov chain $\calC$ is decisive w.r.t. $T$ from $s_0$
    if and only if
    $\lim_{d \to \infty} p_{\text{succ}}^{(d)} +
      p_{\text{fail}}^{(d)} = 1$.

    \begin{itemize}
      \item Assume that $\calC$ is not decisive w.r.t. $T$ from $s_0$, and
            fix $\varepsilon>0$ such that
            $\frac{\varepsilon}{2B} < 1- \lim_{d \to \infty}
              p_{\text{succ}}^{(d)} + p_{\text{fail}}^{(d)}$. This implies
            that $\text{Tr}$ will be entirely visited. Since it is potentially infinite,
            one concludes that the algorithm will not terminate in this case.
      \item Assume that $\calC$ is decisive w.r.t. $T$ from $s_0$.  Let
            $d_\varepsilon$ be such that
            $p_{\text{succ}}^{(d_\varepsilon)} +
              p_{\text{fail}}^{(d_\varepsilon)} \ge 1-\frac{\varepsilon}{2B}$.
            Towards a contradiction, assume that the algorithm does not
            terminate. Due to the fair extraction, there is a round
            $r_\varepsilon$ of the while loop of the algorithm such that all
            vertices of $\text{Tr}_{\le d_\varepsilon}$ have been
            visited. This implies that
            $p_{\text{succ}}^{(d_\varepsilon)} +
              p_{\text{fail}}^{(d_\varepsilon)} \ge 1-\frac{\varepsilon}{2B}$,
            which contradicts the test of the while loop, and therefore the
            fact that this round has been executed.

            The set of
            infinite paths can be partitioned into three categories: (1) the ones
            whose explored prefix entering the $(n+1)^{\text{th}}$ iteration has
            reached $T$, whose set is denoted $R_n$; (2) the ones whose explored
            prefix entering the $(n+1)^{\text{th}}$ iteration has reached
            $\Av(T)$, whose set is denoted $R^-_n$; and (3) the others. Define $p_n^-$ the probability
            of the first kind of paths which corresponds to the value of $p_{\text{succ}}$
            when entering the  $(n+1)^{\text{th}}$ iteration and $p_n^+$ the sum of the probability
            of the first and third kinds of paths which corresponds to the value of $1-p_{\text{fail}}$
            when entering the  $(n+1)^{\text{th}}$ iteration.
            Using this
            decomposition, we can write:
            $ \Es\big(f_{L,T}\big(\varrho^{\calC,s_0}\big)\big) - e_n =
              (p_n^+-p_n^-) \cdot \Es\big(f_{L,T}\big(\varrho^{\calC,s_0}\big)
              \mid \varrho^{\calC,s_0} \notin R_n \cup R^-_n) $. Thus, since
            $f_{L,T}$ is $B$-bounded:
            $ |\Es\big(f_{L,T}\big(\varrho^{\calC,s_0}\big)\big) -
              e_n|\leq(p_n^+-p_n^-) \cdot B $.  
            We deduce that $\Es\big(f_{L,T}\big(\varrho^{\calC,s_0}\big)\big)$
            belongs to the interval $[e_n-B(p_n^+-p_n^-),e_n+B(p_n^+-p_n^-)]$.
            This interval has length at most $\varepsilon$ since the loop is
            left when $|p_n^+-p_n^-| \le \varepsilon/2B$, which allows us to
            conclude.
    \end{itemize}
  \end{proof}

  \label{app:AA}

  \equivnumstat*
  \begin{proof}
    The probability of non termination of the {\bf while} loop is the
    probability that an infinite random path never meets $T\cup
      \Av(T)$. By definition, this probability is null if and only if
    $\calC$ is decisive w.r.t. $T$.
  \end{proof}

\else We report here some of the missing proofs and
  discussions. \pat{More details are given in~\cite{ARXIV}.}  \fi

\section{Some missing proofs of Section~\ref{sec:beyond}}

\ifARXIV
  \subsection{Few elements of martingale theory}

  We recall here some results on martingales, which are useful for our
  work.

  \begin{definition}
    Let $(\Omega, \mathcal F, \Pr)$ be a probabilistic space,
    $\mathcal H\subseteq \mathcal F$ be a $\sigma$-algebra, $X$ be a
    random variable $\mathcal F$-measurable with $\Es(|X|)<\infty$. Then
    there exists a $\mathcal H$-measurable random variable (r.v.)
    $\Es\left(X \mid \mathcal H\right)$, called the conditional
    expectation of $X$ w.r.t. $\mathcal{H}$, s.t. for all
    $H\in \mathcal H$,
    $\int _{H}X d\Pr=\int _{H}\Es\left(X \mid\mathcal H\right) d\Pr$.

    Furthermore for all $\mathcal H$-measurable r.v. $Y$ satisfying the
    condition $\int _{H}Xd\Pr=\int _{H}Yd\Pr$ for all $H\in \mathcal H$,
    one has $\Pr\left(Y\neq \Es(X \mid \mathcal H)\right)=0$.
  \end{definition}

  \begin{definition}
    A \emph{filtered space}
    $(\Omega,\mathcal F, (\mathcal F_n)_{n\in \bbN}, \Pr)$ is defined
    by:
    \begin{itemize}[nosep]
      \item $(\Omega, \mathcal F, \Pr)$ be a probabilistic space;
      \item $(\mathcal F_n)_{n\in \bbN}$, a sequence of $\sigma$-algebras
            s.t. for all $n\in \bbN$,
            $\mathcal F_n\subseteq \mathcal F_{n+1}\subseteq \mathcal F$.
    \end{itemize}
    The sequence $(\mathcal F_n)_{n\in \bbN}$ is called a
    \emph{filtration}.
  \end{definition}

  A sequence ${\mathbf X}= (X_n)_{n\in \bbN}$ of random variables is
  called a \emph{process}.
  \begin{definition}
    Let $(\Omega,\mathcal F, (\mathcal F_n)_{n\in \bbN}, \Pr)$ be a
    filtered space and $ {\mathbf X}=(X_n)_{n\in \bbN}$ be a process.
    Then $ {\mathbf X}$ is \emph{adapted} to
    $\left(\mathcal F_n\right)_{n\in \bbN}$ if for all $n\in \bbN$,
    $X_n$ is $\mathcal F_n$-measurable.  If furthermore for all
    $n\in \bbN$, $\Es\left(|X_n|\right)<\infty$ and
    $\Es\left(X_{n+1}\mid \mathcal F_n \right)\leq X_{n}$ a.s. then
    ${\mathbf X}$ is a \emph{supermartingale}.
  \end{definition}

  Note that there is always a filtration $(\mathcal F_n)_{n\in \bbN}$
  such that $ {\mathbf X}$ is \emph{adapted} to
  $(\mathcal F_n)_{n\in \bbN}$, for instance defining $\mathcal{F}_n$ as
  the smallest $\sigma$-algebra such that all $X_i$ with $i \le n$ are
  measurable.

  \begin{proposition}
    Let $(\Omega,\mathcal F, (\mathcal F_n)_{n\in \bbN}, \Pr)$ be a
    filtered space, $(X_n)_{n \in \bbN}$ be a non negative
    supermartingale. Then
    $X_\infty\eqdef \lim_{n \rightarrow \infty}X_n$ exists almost surely
    and $\Es(X_\infty)\leq \Es(X_0)$.
  \end{proposition}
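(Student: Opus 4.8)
The plan is to establish this as the classical martingale convergence theorem, via Doob's upcrossing inequality. First I would fix two reals $a<b$ and count, for each $n$, the number $U_n([a,b])$ of \emph{upcrossings} of the interval $[a,b]$ by the finite sequence $X_0,\dots,X_n$; formally this is defined through interleaved stopping times $\sigma_1\le\tau_1\le\sigma_2\le\dots$, where $\sigma_k$ is the first time after $\tau_{k-1}$ at which the process is $\le a$ and $\tau_k$ the first time after $\sigma_k$ at which it is $\ge b$. The analytic tool driving everything is the upcrossing inequality $(b-a)\,\Es(U_n([a,b]))\le \Es\big((X_n-a)^-\big)$, and proving it is the core of the argument.

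To obtain the upcrossing inequality I would use the discrete \emph{buy low, sell high} transform: define the predictable, $\{0,1\}$-valued process $C_m$ equal to $1$ exactly on the indices $m$ lying in some interval $(\sigma_k,\tau_k]$, i.e.\ while an upcrossing is in progress. Since $C$ is predictable, bounded and non-negative and $(X_n)$ is a supermartingale, the transform $(C\cdot X)_n \eqdef \sum_{m=1}^{n} C_m\,(X_m-X_{m-1})$ is again a supermartingale starting at $0$, hence $\Es\big((C\cdot X)_n\big)\le 0$. On the other hand each \emph{completed} upcrossing contributes at least $b-a$ to $(C\cdot X)_n$, while the last, possibly incomplete, upcrossing contributes at worst $-(X_n-a)^-$ (since at its start the process is $\le a$); thus $(C\cdot X)_n \ge (b-a)\,U_n([a,b]) - (X_n-a)^-$ pointwise, and taking expectations yields the inequality. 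This transform computation, together with the bookkeeping on the incomplete upcrossing, is the technical heart and the step I expect to be the main obstacle.

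Next I would exploit non-negativity of $(X_n)$. For $a\ge 0$ one has $(X_n-a)^- = \max(a-X_n,0)\le a$, so $\Es(U_n([a,b]))\le a/(b-a)$ uniformly in $n$; letting $n\to\infty$ by monotone convergence gives $\Es(U_\infty([a,b]))<\infty$, whence $U_\infty([a,b])<\infty$ almost surely for each fixed pair $a<b$. Taking the countable union over rational pairs $0\le a<b$, almost surely every such interval is upcrossed only finitely often, which precludes $\liminf_n X_n < \limsup_n X_n$; therefore $X_\infty\eqdef\lim_n X_n$ exists almost surely as an element of $[0,+\infty]$.

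Finally I would identify the limit and bound its expectation. Iterating the supermartingale inequality and taking expectations gives $\Es(X_n)\le\Es(X_0)$ for every $n$. Since $X_n\ge 0$, Fatou's lemma yields $\Es(X_\infty)=\Es(\liminf_n X_n)\le\liminf_n\Es(X_n)\le\Es(X_0)$. As $\Es(X_0)<\infty$, this bound simultaneously forces $X_\infty<\infty$ almost surely and delivers the claimed inequality $\Es(X_\infty)\le\Es(X_0)$, completing the argument.
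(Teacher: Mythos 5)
Your proof is correct: it is the standard proof of Doob's convergence theorem for non-negative supermartingales, via the upcrossing inequality obtained from the ``buy low, sell high'' predictable transform, followed by Fatou's lemma (which applies since $\Es(X_0)<\infty$ by the integrability built into the definition of a supermartingale). The paper itself states this proposition without proof, as a recalled classical fact from martingale theory, so there is no in-paper argument to compare against; your write-up simply supplies the textbook proof, and all the steps --- including the bookkeeping for the incomplete final upcrossing and the bound $(X_n-a)^-\le a$ coming from non-negativity --- are sound.
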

\fi

\subsection{Proofs of results of Section~\ref{subsec:biased}}
\label{app:A}

\ifARXIV
  Before going to the proof of Proposition~\ref{lemma:correction}, we
  state a useful lemma.
\else
  We first state a useful lemma proved in a simple way.
\fi

\begin{lemma}
  \label{lemma2}
  $\Pr(\rho) > 0$ and  $\last(\rho)\notin \Av_{\calC}(F)$ imply
  $\Pr'(\rho)>0$.
\end{lemma}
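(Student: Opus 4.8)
The plan is to write $\rho = s_0 s_1 \cdots s_n$ and argue by a backward induction along $\rho$ that every state $s_i$ lies in $S \setminus \Av_{\calC}(F)$; once this is established, Eqn.~(\ref{eq:restriction}) transfers positivity of each individual transition from $\calC$ to $\calC'$, and taking the product yields $\Pr'(\rho) > 0$. To begin, I would unfold the hypothesis $\Pr(\rho) > 0$: it means precisely that $\Pt(s_i, s_{i+1}) > 0$ (equivalently $s_i \to_{\calC} s_{i+1}$) for every $i < n$.

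The one genuinely structural observation, which I would isolate as the main step, is that the complement of the avoid set is backward-closed along edges: if $s \to_{\calC} s'$ and $s' \notin \Av_{\calC}(F)$, then $s \notin \Av_{\calC}(F)$. Indeed, taking the edge to $s'$ and then reaching $F$ from $s'$ is one way for $s$ to reach $F$, so $\Pr(\varrho^{\calC,s} \models \Diamond F) \geq \Pt(s,s') \cdot \Pr(\varrho^{\calC,s'} \models \Diamond F) > 0$. Starting from $\last(\rho) = s_n \notin \Av_{\calC}(F)$ (the second hypothesis) and applying this observation to the edges $s_{n-1} \to_{\calC} s_n$, then $s_{n-2} \to_{\calC} s_{n-1}$, and so on, I would conclude that $s_i \in S \setminus \Av_{\calC}(F)$ for all $i \le n$; in particular no state of $\rho$ equals $s_-$, so $\rho$ is a genuine path of $\calC'$.

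Finally, for each $i < n$ both endpoints $s_i, s_{i+1}$ lie in $S \setminus \Av_{\calC}(F)$ and satisfy $\Pt(s_i,s_{i+1}) > 0$, so Eqn.~(\ref{eq:restriction}) gives $\Pt'(s_i, s_{i+1}) > 0$; hence $\Pr'(\rho) = \prod_{i<n} \Pt'(s_i, s_{i+1}) > 0$ (with the case $n=0$ trivial, the empty product being $1$). The only non-routine ingredient is the backward-closure property of $S \setminus \Av_{\calC}(F)$; everything else is a direct unwinding of the definitions.
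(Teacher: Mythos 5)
Your proposal is correct and follows essentially the same route as the paper: an induction along the path that propagates membership in $S \setminus \Av_{\calC}(F)$ backwards from $\last(\rho)$ (the paper asserts this closure step without justification inside its inductive step, whereas you spell out why $s \to_{\calC} s'$ with $s' \notin \Av_{\calC}(F)$ forces $s \notin \Av_{\calC}(F)$), and then invokes Eqn.~(\ref{eq:restriction}) on each transition to conclude $\Pr'(\rho) > 0$. The only difference is cosmetic: the paper phrases it as a forward induction on prefixes $\rho = \rho' s$, while you separate the backward-closure pass from the final product, which amounts to the same argument.
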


\ifARXIV
  \begin{proof}
    The proof is done by induction.  When $\rho$ is reduced to
    $s_0$, we immediately get $\Pr(\rho) = \Pr'(\rho)=1$.
    Assume now that $\rho=\rho' s$ with $\Pr(\rho) > 0$ and
    $\last(\rho)\notin \Av_{\calC}(F)$.  Then $\last(\rho')\notin  \Av_{\calC}(F)$
    and $\Pt(\last(\rho'),s)>0$.  Applying the induction hypothesis to
    $\rho'$, we get $\Pr'(\rho')>0$.  Applying the condition on $\Pt'$,
    $\Pt'(\last(\rho'),s)>0$.  Thus
    $\Pr'(\rho)=\Pr'(\rho')\Pt'(\last(\rho'),s)>0$.
  \end{proof}
\fi

\ifARXIV
  \correction*
  \begin{proof}
    We proceed by a sequence of equalities:
    \begin{small}
      \[
        \begin{array}{c@{\;}l@{\;}c@{\;}l}
                                                                                                               & \Es\big(f_{L',T}\big(\varrho^{\calC',s_0}\big)\big)                                            & = & \Es\big(\big(f_{L',T} \cdot \indic_{\neg \Diamond T}\big)\big(\varrho^{\calC',s_0}\big)
          + \big(f_{L',T} \cdot \indic_{\Diamond T} \big) \big(\varrho^{\calC',s_0}\big)\big)
          \\[5mm]
          =                                                                                                    & \Es \big(\big(f_{L',T} \cdot \indic_{\Diamond T} \big) \big(\varrho^{\calC',s_0}\big)\big)
                                                                                                               &
          =                                                                                                    & {\displaystyle \sum_{\substack{\rho \in (S\setminus \Av_\calC(F))^+\ \text{s.t.}                                                                                                             \\
                  \Pr'(\rho)>0\ \text{and}\
                  \last(\rho) = \first_T(\rho)}}  \hspace*{-1cm}L'(\rho) \cdot {\Pr}'(\rho)}
          \\[5mm]
          =                                                                                                    & {\displaystyle\hspace*{-1cm} \sum_{\substack{\rho \in (S\setminus \Av_\calC(F))^+\ \text{s.t.}                                                                                               \\
                  \Pr'(\rho)>0\ \text{and}\ \last(\rho) = \first_T(\rho)}}  \hspace*{-1cm}L(\rho)
              \cdot
              \frac{\Pr(\rho)}{\Pr'(\rho)} \cdot \Pr'(\rho)}
                                                                                                               &
          =                                                                                                    & {\displaystyle \sum_{\substack{\rho \in (S\setminus \Av_\calC(F))^+\
          \text{s.t.}                                                                                                                                                                                                                                                                                         \\ \Pr'(\rho)>0\ \text{and}\ \last(\rho) = \first_T(\rho)}}
          \hspace*{-1cm}L(\rho) \cdot \Pr(\rho)}                                                                                                                                                                                                                                                              \\[5mm]
          =                                                                                                    & {\displaystyle \hspace*{-1cm}\sum_{\substack{\rho \in (S\setminus \Av_\calC(F))^+\
          \text{s.t.}                                                                                                                                                                                                                                                                                         \\ \Pr'(\rho)>0,\ \Pr(\rho)>0\ \text{and}\ \last(\rho) = \first_T(\rho)}}
              \hspace*{-1cm}L(\rho) \cdot \Pr(\rho)}
                                                                                                               &
          =                                                                                                    & {\displaystyle \sum_{\substack{\rho \in (S\setminus \Av_\calC(F))^+\
          \text{s.t.}                                                                                                                                                                                                                                                                                         \\ \Pr(\rho)>0,\ \text{and}\ \last(\rho) = \first_T(\rho)}}
          \hspace*{-1cm}L(\rho) \cdot \Pr(\rho)}\  {\text{\tiny (by\ Lemma~\ref{lemma2})}}                                                                                                                                                                                                                    \\[5mm]
          =                                                                                                    & {\displaystyle \hspace*{-1cm}\sum_{\substack{\rho \in S^+\
          \text{s.t.}                                                                                                                                                                                                                                                                                         \\ \Pr(\rho)>0,\ \text{and}\ \last(\rho) = \first_T(\rho)}}
          \hspace*{-1.5cm}L(\rho) \cdot \Pr(\rho)}\ {\emph{\tiny(since $\Av_\calC(F)\subseteq \Av_\calC(T)$)}}  &
          =                                                                                                    & \Es\big(f_{L,T}\big(\varrho^{\calC,s_0}\big)\big)
        \end{array}
      \]
    \end{small}
  \end{proof}
\else
  \begin{proof}[Proof of Proposition~\ref{lemma:correction}]
    We proceed by a sequence of equalities:
    \begin{small}
      \[
        \begin{array}{l}
          \Es\big(f_{L',T}\big(\varrho^{\calC',s_0}\big)\big)\quad = \quad
          \Es\big(\big(f_{L',T} \cdot \indic_{\neg \Diamond T}\big)\big(\varrho^{\calC',s_0}\big)
          + \big(f_{L',T} \cdot \indic_{\Diamond T} \big) \big(\varrho^{\calC',s_0}\big)\big)
          \\
          = \quad \Es \big(\big(f_{L',T} \cdot \indic_{\Diamond T} \big) \big(\varrho^{\calC',s_0}\big)\big)
          \quad = \hspace*{-1cm} {\displaystyle \sum_{\substack{\rho \in (S\setminus \Av_\calC(F))^+\ \text{s.t.} \\
                \Pr'(\rho)>0\ \text{and}\
                \last(\rho) = \first_T(\rho)}}  \hspace*{-1cm}  L'(\rho) \cdot {\Pr}'(\rho)}
          = {\displaystyle\hspace*{-1cm} \sum_{\substack{\rho \in (S\setminus \Av_\calC(F))^+\ \text{s.t.}        \\
              \Pr'(\rho)>0\ \text{and}\ \last(\rho) = \first_T(\rho)}}  \hspace*{-1cm}L(\rho)
          \cdot
          \frac{\Pr(\rho)}{\Pr'(\rho)} \cdot \Pr'(\rho)}                                                          \\[1cm]

          = \quad {\displaystyle \hspace*{-1cm}\sum_{\substack{\rho \in (S\setminus \Av_\calC(F))^+\
          \text{s.t.}                                                                                             \\ \Pr'(\rho)>0,\ \Pr(\rho)>0\ \text{and}\ \last(\rho) = \first_T(\rho)}}
            \hspace*{-1cm}L(\rho) \cdot \Pr(\rho)}
          \quad
          = \quad {\displaystyle \sum_{\substack{\rho \in (S\setminus \Av_\calC(F))^+\
          \text{s.t.}                                                                                             \\ \Pr(\rho)>0,\ \text{and}\ \last(\rho) = \first_T(\rho)}}
          \hspace*{-1cm}L(\rho) \cdot \Pr(\rho)}\  {\text{\tiny (by\ Lemma~\ref{lemma2})}}                        \\[1cm]
          = \quad  {\displaystyle \hspace*{-1cm}\sum_{\substack{\rho \in S^+\
          \text{s.t.}                                                                                             \\ \Pr(\rho)>0,\ \text{and}\ \last(\rho) = \first_T(\rho)}}
            \hspace*{-1.5cm}L(\rho) \cdot \Pr(\rho)}\
          {\emph{\tiny(since $\Av_\calC(F)\subseteq \Av_\calC(T)$)}} \quad
          =\quad \Es\big(f_{L,T}\big(\varrho^{\calC,s_0}\big)\big)
        \end{array}
      \]
    \end{small}
  \end{proof}
\fi

\subsection{Proofs of results of Section~\ref{subsec:abstraction}}
\label{app:C}

\ifARXIV
  We target the proof of Proposition~\ref{prop:garantie}.
  Before that we
  establish properties of the biased Markov chain.
\else
  Before going to the proof of Proposition~\ref{prop:garantie}, we state
  a useful lemma.
\fi

\begin{lemma}
  \label{lemma:T-2}Let $\calC'$ be the biased Markov chain of
  $(\calC,F) \stackrel{\alpha}{\hookrightarrow}
    (\calC^\bullet,F^\bullet)$. Then for all
  $s\in S\setminus \Av_{\calC}(F)$,
  \begin{itemize}[nosep]
    \item for all $\rho$ starting from $s$ with $\Pr'(\rho)>0$ and
          $\last(\rho) \ne s_-$,
          $\Pr(\rho)=\Pr'(\rho) \cdot
            \frac{\mu_{F^\bullet}(\alpha(s))}{\mu_{F^\bullet}(\alpha(last(\rho)))}$;
    \item
          $\mu_{\calC,T}(s) =\mu_{F^\bullet}(\alpha(s)) \cdot\mu_{\calC',T}(s)$ and
          $\mu_{\calC,F}(s) =\mu_{F^\bullet}(\alpha(s)) \cdot\mu_{\calC',F}(s)$.
  \end{itemize}
\end{lemma}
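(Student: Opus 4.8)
The plan is to prove the two bullet points in order, using the first to bootstrap the second. For the first item, I would proceed by induction on the length of $\rho$. The base case is $\rho = s$ (a single state), where both $\Pr(\rho)$ and $\Pr'(\rho)$ equal $1$ and the ratio $\frac{\mu_{F^\bullet}(\alpha(s))}{\mu_{F^\bullet}(\alpha(\last(\rho)))}$ equals $1$ since $\last(\rho) = s$. For the inductive step, write $\rho = \rho' s''$ with $\last(\rho') = s'$. Since $\Pr'(\rho) > 0$ and $\last(\rho) = s'' \ne s_-$, the path never visits $s_-$ (because $s_-$ is absorbing in $\calC'$), so in particular $s', s'' \in S \setminus \Av_\calC(F)$. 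By Definition~\ref{def:echprefeff}, $\Pt'(s',s'') = \Pt(s',s'') \cdot \frac{\mu_{F^\bullet}(\alpha(s''))}{\mu_{F^\bullet}(\alpha(s'))}$. Applying the induction hypothesis to $\rho'$ and multiplying by this one-step factor, the intermediate $\mu_{F^\bullet}(\alpha(s'))$ terms telescope, yielding $\Pr(\rho) = \Pr(\rho')\,\Pt(s',s'') = \Pr'(\rho') \cdot \frac{\mu_{F^\bullet}(\alpha(s))}{\mu_{F^\bullet}(\alpha(s'))} \cdot \Pt'(s',s'') \cdot \frac{\mu_{F^\bullet}(\alpha(s'))}{\mu_{F^\bullet}(\alpha(s''))} = \Pr'(\rho) \cdot \frac{\mu_{F^\bullet}(\alpha(s))}{\mu_{F^\bullet}(\alpha(s''))}$, as required.

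For the second item, I would express the reachability probability as a sum over the \emph{first-hitting paths}: writing $\mu_{\calC,T}(s) = \sum_\rho \Pr(\rho)$ where $\rho$ ranges over finite paths from $s$ that first reach $T$ at their last state, and similarly $\mu_{\calC',T}(s) = \sum_\rho \Pr'(\rho)$. Since $T \subseteq F$ and $T$ is absorbing in both chains, the relevant first-hitting paths never pass through $\Av_\calC(F)$ nor through $s_-$, so every such $\rho$ satisfies the hypotheses of the first bullet with $\last(\rho) \in T$. As $T \subseteq F$ and condition (A) of Definition~\ref{definition:abstraction} gives $\alpha(\last(\rho)) \in F^\bullet$, hence $\mu_{F^\bullet}(\alpha(\last(\rho))) = 1$, the first bullet yields $\Pr(\rho) = \Pr'(\rho) \cdot \mu_{F^\bullet}(\alpha(s))$ for each such $\rho$. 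Summing over all first-hitting paths factors out the constant $\mu_{F^\bullet}(\alpha(s))$ and gives $\mu_{\calC,T}(s) = \mu_{F^\bullet}(\alpha(s)) \cdot \mu_{\calC',T}(s)$. The identity for $F$ is obtained by the same argument applied to the set $F$ in place of $T$ (again first-hitting paths terminate at a state of $F$, whose $\alpha$-image lies in $F^\bullet$), noting that $\calC$ and $\calC'$ have the same states and positive transitions within $S \setminus \Av_\calC(F)$.

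I expect the main subtlety to lie not in the calculation but in justifying that one may legitimately restrict to first-hitting paths lying entirely in $S \setminus \Av_\calC(F)$, and that this restriction captures all of the reachability probability. The key facts making this clean are: first-hitting paths to $T$ (or $F$) cannot visit $\Av_\calC(F) \subseteq \Av_\calC(T)$ before hitting the target, nor can they visit $s_-$ (since reaching $s_-$ precludes ever reaching $T$); and the support constraint~\eqref{eq:restriction} together with Lemma~\ref{corollary:unreachtounreach} guarantees that $\mu_{F^\bullet}(\alpha(s')) > 0$ throughout, so all the ratios appearing are well-defined. Once these bookkeeping points are settled, both identities follow by the telescoping argument above.
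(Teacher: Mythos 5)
Your proposal is correct and follows essentially the same route as the paper's proof: an induction on the path length with the telescoping of the $\mu_{F^\bullet}(\alpha(\cdot))$ ratios for the first item, and then a summation over first-hitting paths (which avoid $\Av_\calC(F)$ and $s_-$, and terminate in $F$ so that the denominator equals $1$) for the second. The only difference is that you spell out a few bookkeeping points (condition (A) giving $\mu_{F^\bullet}(\alpha(\last(\rho)))=1$, and why intermediate states stay in $S\setminus\Av_\calC(F)$) that the paper leaves implicit.
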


\ifARXIV
  \begin{proof}
    We establish the first property by induction. Let $\rho$ be a path
    starting from $s$ with $\Pr'(\rho)>0$ and $\last(\rho) \ne s_-$. If
    $\rho$ is the single state $s$ then $\Pr'(\rho)=\Pr(\rho)=1$. Since
    $last(\rho)=s$, the base case is proved.

    Let $\rho=\rho' s''$ with $\Pr'(\rho)>0$ and
    $s' \ne s_-$ with $s'$ denoting $\last(\rho')$.
    \[
      \Pr(\rho)=\Pr(\rho') \cdot \Pt(s',s'')=\Pr(\rho') \cdot
      \Pt'(s',s'') \cdot
      \frac{\mu_{F^\bullet}(\alpha(s'))}{\mu_{F^\bullet}(\alpha(s''))}.
    \]
    It is well-defined due to
    Lemma~\ref{corollary:unreachtounreach}. Applying the induction
    hypothesis,
    we get:
    \[
      \Pr(\rho)=\Pr'(\rho') \cdot
      \frac{\mu_{F^\bullet}(\alpha(s))}{\mu_{F^\bullet}(\alpha(s'))}
      \cdot \Pt'(s',s'') \cdot
      \frac{\mu_{F^\bullet}(\alpha(s'))}{\mu_{F^\bullet}(\alpha(s''))}=
      \Pr'(\rho) \cdot
      \frac{\mu_{F^\bullet}(\alpha(s))}{\mu_{F^\bullet}(\alpha(s''))}.
    \]
    This shows the induction step.

    The paths that reach $T$ (resp. $F$) from $s$ are the same in
    $\calC$ and $\calC'$, and they do not reach $s_-$.  Pick such a path
    $\rho$. Then due to the previous property:
    $\Pr(\rho)=\Pr'(\rho) \cdot \mu_{F^\bullet}(\alpha(s))$.  Summing
    over all such paths establishes the second property.
  \end{proof}
\else

  \begin{proof}
    We establish the first property by induction. Let $\rho$ be a path
    starting from $s$ with $\Pr'(\rho)>0$ and $\last(\rho) \ne s_-$. If
    $\rho$ is the single state $s$ then $\Pr'(\rho)=\Pr(\rho)=1$. Since
    $last(\rho)=s$, the base case is proved. Assume now that
    $\rho=\rho' s''$ with $\Pr'(\rho)>0$ and
    $s' \eqdef \last(\rho') \ne s_-$. Then:
    $\Pr(\rho)=\Pr(\rho') \cdot \Pt(s',s'')=\Pr(\rho') \cdot
      \Pt'(s',s'') \cdot
      \frac{\mu_{F^\bullet}(\alpha(s'))}{\mu_{F^\bullet}(\alpha(s''))}$.
    It is well-defined due to
    Lemma~\ref{corollary:unreachtounreach}. Applying the induction
    hypothesis, we getthe induction step:
    \[
      \Pr(\rho)=\Pr'(\rho') \cdot
      \frac{\mu_{F^\bullet}(\alpha(s))}{\mu_{F^\bullet}(\alpha(s'))}
      \cdot \Pt'(s',s'') \cdot
      \frac{\mu_{F^\bullet}(\alpha(s'))}{\mu_{F^\bullet}(\alpha(s''))}=
      \Pr'(\rho) \cdot
      \frac{\mu_{F^\bullet}(\alpha(s))}{\mu_{F^\bullet}(\alpha(s''))}.
    \]

    The paths that reach $T$ (resp. $F$) from $s$ are the same in
    $\calC$ and $\calC'$, and they do not reach $s_-$.  Pick such a path
    $\rho$. Then due to the previous property:
    $\Pr(\rho)=\Pr'(\rho) \cdot \mu_{F^\bullet}(\alpha(s))$.  Summing
    over all such paths establishes the second property.
  \end{proof}
\fi

While the previous property allows to solve the reachability problem
in $\calC$ using $\calC'$, the next proposition extends it to the reward
reachability problem.

\ifARXIV
\likelihood*
\begin{proof}
\else
\begin{proof}[Proof of Proposition~\ref{prop:garantie}]
  \fi
  By definition, $f_{L',T}$ assigns $0$ to infinite paths not visiting
  $T$. Assume now that $\rho$ is an infinite path visiting $T$.  Then
  $f_{L',T}(\rho) = (L \cdot \gamma) \big(\rho_{\le
        \first_T(\rho)}\big)$. Let $\rho' = \rho_{\le \first_T(\rho)}$. It
  does not visit $s_-$, hence
  $\gamma(\rho') = \frac{\Pr(\rho')}{\Pr'(\rho')} =
    \frac{\mu_{F^\bullet}(\alpha(s))}{\mu_{F^\bullet}(\alpha(\last(\rho')))}$
  by Lemma~\ref{lemma:T-2}. Since $\last(\rho') \in T$,
  $\gamma(\rho') = \mu_{F^\bullet}(\alpha(s))$. This implies the first
  part of the proposition.  The restriction to the case of the
  indicator function is immediate.
\end{proof}

\subsection{Proofs of results of Section~\ref{subsec:lmc}}
\label{app:B}



We first establish that random walks parametrized by
$\frac 1 2 <p <p^+$ are abstractions for a $(p^+,N_0)$- divergent LMC
and give useful information on the decreasing ratio for states $s$
with $\Pt^+(s)<1$.

\ifARXIV
  \propabs*

  \begin{proof}
    We denote $\mu_{\calW^{p},[0,N_0]}$ more simply by
    $\mu^\bullet_{[0,N_0]}$.  We pick $s \in S\setminus (F \cup \Av(F))$
    such that $\alpha(s)=n>N_0$, and we distinguish between two cases.

    \begin{description}
      \item[Case $\Pt^=(s)=1$.] Observe that for all $s'$ such that
            $\Pt(s,s')>0$, $\alpha(s')=n$. Thus:
            \[
              \sum_{s' \notin \Av_{\calC}(F)} \Pt(s,s') \cdot
              \mu^\bullet_{[0,N_0]}(\alpha(s'))= \mu^\bullet_{[0,N_0]}(n)
              \sum_{s' \notin \Av_{\calC}(F)} \Pt(s,s') \leq
              \mu^\bullet_{[0,N_0]}(n)=\mu^\bullet_{[0,N_0]}(\alpha(s))
            \]
      \item[Case $\Pt^=(s)<1$.]
            We can compute:
            \begin{small}
              \[
                \begin{array}{r@{\;}cl}
                  1-h(s) & =    & {\displaystyle 1 -
                      \frac{1}{\mu^\bullet_{[0,N_0]}(\alpha(s))} \left(\sum_{s'
                          \in S \setminus \Av(F)}
                      \mu^\bullet_{[0,N_0]}(\alpha(s')) \cdot \Pt(s,s')
                  \right)}                                                                                                                       \\
                         & \ge  & {\displaystyle 1 -
                      \frac{1}{\mu^\bullet_{[0,N_0]}(n)}
                      \left(\mu^\bullet_{[0,N_0]}(n-1)\hspace*{-1cm} \sum_{s' \in S \setminus
                          \Av(F)\ \text{s.t.}\ \alpha(s') =n-1}\hspace*{-1cm}

                      \Pt(s,s') +
                      \mu^\bullet_{[0,N_0]}(n)\hspace*{-1cm}  \sum_{s' \in S \setminus
                          \Av(F)\ \text{s.t.}\ \alpha(s') =n}\hspace*{-1cm}

                      \Pt(s,s') \right.}
                  \\ & &
                   \multicolumn{1}{r}{ + \displaystyle \left.
                      \mu^\bullet_{[0,N_0]}(n+1)\hspace*{-1cm}  \sum_{s' \in S \setminus
                          \Av(F)\ \text{s.t.}\ \alpha(s') > n}\hspace*{-1cm}
                      \Pt(s,s')\right)} 
                  \\
                         &      & \text{(since $\Pt^+(s) + \Pt^-(s) + \Pt^=(s) =1$ and $\mu^\bullet_{[0,N_0]}(x)$ is non increasing w.r.t. $x$)}
                  \\[0.5cm]
                         & \geq & {\displaystyle 1 - \frac{1}{\mu^\bullet_{[0,N_0]}(n)}\left( \mu^\bullet_{[0,N_0]}(n-1)
                      \cdot \Pt^-(s)+ \mu^\bullet_{[0,N_0]}(n)
                      \cdot \Pt^=(s)  +
                      \mu^\bullet_{[0,N_0]}(n+1) \cdot
                  \Pt^+(s)\right)}                                                                                                               \\
                         & =    & {\displaystyle 1 -\Pt^=(s)- \frac{1}{\mu^\bullet_{[0,N_0]}(n)}\left( \mu^\bullet_{[0,N_0]}(n-1)
                      \cdot \Pt^-(s) +
                      \mu^\bullet_{[0,N_0]}(n+1) \cdot
                  \Pt^+(s)\right)}                                                                                                               \\
                         & =    & {\displaystyle \Pt^+(s) +\Pt^-(s)- \frac{1}{\mu^\bullet_{[0,N_0]}(n)}\left( \mu^\bullet_{[0,N_0]}(n-1)
                      \cdot \Pt^-(s) +
                      \mu^\bullet_{[0,N_0]}(n+1) \cdot
                  \Pt^+(s)\right)}                                                                                                               \\
                         & =    & {\displaystyle  \frac{(\mu^\bullet_{[0,N_0]}(n)- \mu^\bullet_{[0,N_0]}(n-1))
                        \cdot \Pt^-(s) +
                        (\mu^\bullet_{[0,N_0]}(n)-  \mu^\bullet_{[0,N_0]}(n+1)) \cdot
                  \Pt^+(s)}{\mu^\bullet_{[0,N_0]}(n)}}                                                                                           \\
                         & =    & {\displaystyle  (1- \frac{1}{\kappa})
                      \cdot \Pt^-(s) +
                      (1-  \kappa) \cdot
                  \Pt^+(s)}                                                                                                                      \\
                         & =    & {\displaystyle  ( \Pt^-(s)+\Pt^+(s))\left((1- \frac{1}{\kappa})
                      \cdot \frac{\Pt^-(s)}{\Pt^-(s)+\Pt^+(s)} +
                      (1-  \kappa) \cdot
                  \frac{\Pt^+(s)}{\Pt^-(s)+\Pt^+(s)}\right) }                                                                                    \\
                         & =    & {\displaystyle  ( \Pt^-(s)+\Pt^+(s))\left((1- \frac{1}{\kappa})
                      \cdot(1- \frac{\Pt^+(s)}{\Pt^-(s)+\Pt^+(s)} )+
                      (1-  \kappa) \cdot
                  \frac{\Pt^+(s)}{\Pt^-(s)+\Pt^+(s)}\right) }                                                                                    \\
                         & =    & {\displaystyle (2p-1)( \Pt^-(s)+\Pt^+(s))\left( -\frac{1}{1-p}
                      \cdot (1- \frac{\Pt^+(s)}{\Pt^-(s)+\Pt^+(s)} )+
                      \frac{1}{p} \cdot
                  \frac{\Pt^+(s)}{\Pt^-(s)+\Pt^+(s)}\right) }                                                                                    \\
                         & =    & \frac{2p-1}{(1-p)p} {\displaystyle ( \Pt^-(s)+\Pt^+(s))\left( -p
                    \cdot (1- \frac{\Pt^+(s)}{\Pt^-(s)+\Pt^+(s)} )+
                    (1-p) \cdot
                  \frac{\Pt^+(s)}{\Pt^-(s)+\Pt^+(s)}\right) }                                                                                    \\
                         & =    & \frac{2p-1}{(1-p)p} {\displaystyle ( \Pt^-(s)+\Pt^+(s))\left( \frac{\Pt^+(s)}{\Pt^-(s)+\Pt^+(s)} -p
                  \right) }                                                                                                                      \\
                         & =    & \frac{2p-1}{(1-p)p} {\displaystyle ( \Pt^-(s)+\Pt^+(s))\left( -p
                    \cdot (1- \frac{\Pt^+(s)}{\Pt^-(s)+\Pt^+(s)} )+
                    (1-p) \cdot
                  \frac{\Pt^+(s)}{\Pt^-(s)+\Pt^+(s)}\right) }                                                                                    \\
                         & \geq & \frac{2p-1}{(1-p)p} {\displaystyle ( \Pt^-(s)+\Pt^+(s))\left(p^+-p
                    \right) > 0}
                  \\
                         &      & \text{(since $\frac{1}{2} < p < p^+$)}
                \end{array}
              \]
            \end{small}
    \end{description}
    This concludes the proof of monotony, implying that
    $(\calC,F) \stackrel{\alpha}{\hookrightarrow} (\calW^{p},[0,N_0])$
    is an abstraction.
  \end{proof}

\else

  \begin{proof}[Proof of Proposition~\ref{proposition:layered1}]
    We denote $\mu_{\calW^{p},[0,N_0]}$ more simply by
    $\mu^\bullet_{[0,N_0]}$.  We pick $s \in S\setminus (F \cup \Av(F))$
    such that $\alpha(s)=n>N_0$, and we distinguish between two
    cases. If $\Pt^=(s)=1$, since $\alpha(s')=n$ for all $s'$
    s.t. $\Pt(s,s')>0$, we easily infer that
    $\sum_{s' \notin \Av_{\calC}(F)} \Pt(s,s') \cdot
      \mu^\bullet_{[0,N_0]}(\alpha(s'))\ge
      \mu^\bullet_{[0,N_0]}(\alpha(s))$. Otherwise, we can compute (all
    details are given in~\cite{ARXIV}):
    \begin{small}
      \[
        \begin{array}{rcl}
          1-h(s) & \ge  & {\displaystyle 1 - \frac{1}{\mu^\bullet_{[0,N_0]}(n)}\left( \mu^\bullet_{[0,N_0]}(n-1)
              \cdot \Pt^-(s)+ \mu^\bullet_{[0,N_0]}(n)
              \cdot \Pt^=(s)  +
              \mu^\bullet_{[0,N_0]}(n+1) \cdot
          \Pt^+(s)\right)}                                                                                                               \\
                 &      & \text{(since $\Pt^+(s) + \Pt^-(s) + \Pt^=(s) =1$ and $\mu^\bullet_{[0,N_0]}(x)$ is non increasing w.r.t. $x$)}
          \\[0.5cm]
                 & =    & {\displaystyle  (1- \frac{1}{\kappa})
              \cdot \Pt^-(s) +
              (1-  \kappa) \cdot
          \Pt^+(s)}                                                                                                                      \\
                 & =    & \frac{2p-1}{(1-p)p} {\displaystyle ( \Pt^-(s)+\Pt^+(s))\left( -p
            \cdot (1- \frac{\Pt^+(s)}{\Pt^-(s)+\Pt^+(s)} )+
            (1-p) \cdot
          \frac{\Pt^+(s)}{\Pt^-(s)+\Pt^+(s)}\right) }                                                                                    \\
                 & \geq & \frac{2p-1}{(1-p)p} {\displaystyle ( \Pt^-(s)+\Pt^+(s))\left(p^+-p
            \right) > 0} \quad \text{(since $\frac{1}{2} < p < p^+$)}
        \end{array}
      \]
    \end{small}
    This concludes the proof of monotony, implying that
    $(\calC,F) \stackrel{\alpha}{\hookrightarrow} (\calW^{p},[0,N_0])$
    is an abstraction.
  \end{proof}

\fi

Out of the above proof, a stronger condition than monotony happens to
be satisfied.

\begin{corollary}
  \label{coro:stronger}
  Let $(\calC, \lambda)$ be a $(p^+,N_0)$-divergent LMC.  We
  define $\alpha$ as the restriction of $\lambda$ to
  $S \setminus \Av(F)$, and we let $\frac 1 2<p< p^+$.  Then the
  monotony condition satisfied by the abstraction
  $(\calC,F) \stackrel{\alpha}{\hookrightarrow} (\calW^{p},[0,N_0])$
  can be strengthened as follows. For all $s \in S\setminus (F \cup \Av(F))$
  such that $\alpha(s)=n>N_0$ and $\Pt^=(s)<1$:
  $1-h(s) \ge \frac{2p-1}{(1-p)p}\cdot (\Pt^-(s)+\Pt^+(s)) \cdot
    (p^+-p)> 0$, where $h(s)$ is the decreasing ratio at $s$, see
  page~\pageref{h}.
\end{corollary}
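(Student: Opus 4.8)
The statement is exactly the lower bound that is produced \emph{inside} the proof of Proposition~\ref{proposition:layered1} in the case $\Pt^=(s)<1$, so the plan is not to prove anything new but to isolate and re-derive that particular chain of inequalities. Throughout I would write $\mu^\bullet_{[0,N_0]}$ for $\mu_{\calW^{p},[0,N_0]}$, set $n \eqdef \alpha(s)>N_0$, and abbreviate the relative proportion of level-increasing successors by $q \eqdef \frac{\Pt^+(s)}{\Pt^-(s)+\Pt^+(s)}$, which is well defined since $\Pt^=(s)<1$ forces $\Pt^-(s)+\Pt^+(s)>0$.

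First I would expand the decreasing ratio $h(s)$ from its definition (page~\pageref{h}) and split the defining sum $\sum_{s'\in S\setminus\Av(F)}\Pt(s,s')\cdot\mu^\bullet_{[0,N_0]}(\alpha(s'))$ according to the level of the successor $s'$. By the LMC constraint $\lambda(s)-\lambda(s')\le 1$, every successor satisfies $\alpha(s')\in\{n-1,n\}\cup[n+1,\infty[$, so the three groups carry total weights $\Pt^-(s)$, $\Pt^=(s)$ and $\Pt^+(s)$ respectively. Since $\mu^\bullet_{[0,N_0]}$ is non-increasing, each successor of level at least $n+1$ contributes at most $\mu^\bullet_{[0,N_0]}(n+1)$, which gives the upper bound $\sum \le \mu^\bullet_{[0,N_0]}(n-1)\,\Pt^-(s)+\mu^\bullet_{[0,N_0]}(n)\,\Pt^=(s)+\mu^\bullet_{[0,N_0]}(n+1)\,\Pt^+(s)$ and hence a lower bound on $1-h(s)$.

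Next I would pass to closed forms. Since $n>N_0$, the reach probability has the geometric form $\mu^\bullet_{[0,N_0]}(m)=\kappa^{\,m-N_0}$ for $m\ge N_0$ (Proposition~\ref{prop:rw}), whence $\frac{\mu^\bullet_{[0,N_0]}(n-1)}{\mu^\bullet_{[0,N_0]}(n)}=\frac1\kappa$ and $\frac{\mu^\bullet_{[0,N_0]}(n+1)}{\mu^\bullet_{[0,N_0]}(n)}=\kappa$. Substituting these ratios and using $\Pt^+(s)+\Pt^-(s)+\Pt^=(s)=1$ to eliminate $\Pt^=(s)$ reduces the bound to $1-h(s)\ge (1-\tfrac1\kappa)\Pt^-(s)+(1-\kappa)\Pt^+(s)$. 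Factoring out $\Pt^-(s)+\Pt^+(s)$, rewriting the bracket in terms of $q$, and finally substituting $\kappa=\frac{1-p}{p}$ collapses it algebraically to $\frac{2p-1}{(1-p)p}(q-p)$. The $(p^+,N_0)$-divergence hypothesis yields $q\ge p^+$, so $q-p\ge p^+-p$; as $\tfrac12<p<p^+$, every factor of $\frac{2p-1}{(1-p)p}(\Pt^-(s)+\Pt^+(s))(p^+-p)$ is positive, which is exactly the claimed inequality.

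The only genuinely delicate point is the direction of the monotonicity estimate: to keep $1-h(s)$ bounded below I must replace $\mu^\bullet_{[0,N_0]}$ on every successor of level strictly above $n$ by the single largest admissible value $\mu^\bullet_{[0,N_0]}(n+1)$, which is legitimate precisely because $\mu^\bullet_{[0,N_0]}$ is non-increasing; a careless estimate here would reverse the inequality. Everything after that is routine algebra, together with the observation that the clean geometric form $\kappa^{\,m-N_0}$ is available only in the regime $m\ge N_0$, which is why the corollary is confined to $n>N_0$.
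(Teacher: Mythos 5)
Your derivation is correct and is essentially identical to the paper's own argument: the corollary is extracted verbatim from the $\Pt^=(s)<1$ case of the proof of Proposition~\ref{proposition:layered1}, and you reproduce exactly that chain — bound the successor sum by levels using monotonicity of $\mu^\bullet_{[0,N_0]}$, substitute the geometric ratios $1/\kappa$ and $\kappa$, eliminate $\Pt^=(s)$, and reduce the bracket to $\frac{2p-1}{(1-p)p}\bigl(\frac{\Pt^+(s)}{\Pt^-(s)+\Pt^+(s)}-p\bigr)$ before invoking divergence. Nothing further is needed.
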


Before turning to the proof of Theorem~\ref{theorem:cnsdec}, we first
establish a sufficient condition to be an attractor in a Markov chain.

\begin{lemma}
  \label{lemma:csdec}
  Let $\calC=(S,\Pt)$ be a Markov chain, $s_0\in S$ and $R\subseteq S$
  s.t. for all $s\in S$,
  $\Pr\left(\varrho^{\calC,s} \models \Diamond R\right)>0$.  Assume
  that for every $\delta>0$, there exists a finite set
  $S_\delta\subseteq S$ and $m_\delta \in \bbN$ such that
  $\Pr\left(\bigwedge_{i \ge m_\delta} X^{\calC, s_0}_i\in
    S_\delta\right)>1-\delta$.  Then
  $\Pr\left(\varrho^{\calC,s_0} \models \Diamond R\right)=1$.
\end{lemma}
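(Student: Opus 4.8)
The plan is to prove that the complementary event $B \eqdef \{\varrho^{\calC,s_0} \not\models \Diamond R\}$ has probability $0$. The intuition is that the trapping hypothesis forces the path, with probability arbitrarily close to $1$, to remain forever inside a \emph{finite} set; on such a finite set the positive-reachability assumption yields a \emph{uniform} lower bound for hitting $R$ within a bounded number of steps, and a geometric (Borel--Cantelli flavoured) argument then amplifies this into almost-sure reachability.

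First I would fix $\delta>0$ and take the finite set $S_\delta$ and time $m_\delta\in\bbN$ given by the hypothesis, so that the event $A_\delta \eqdef \{X^{\calC,s_0}_i \in S_\delta \text{ for all } i \ge m_\delta\}$ satisfies $\Pr(A_\delta)>1-\delta$. The key observation is that for each $s\in S$ we have $\Pr(\varrho^{\calC,s}\models\Diamond R) = \lim_{k\to\infty}\Pr(\varrho^{\calC,s}\models\Diamond_{\le k}R)>0$, so some finite horizon $k_s$ already gives $\Pr(\varrho^{\calC,s}\models\Diamond_{\le k_s}R)>0$. Because $S_\delta$ is finite, I can then choose a single horizon $k_\delta \eqdef \max_{s\in S_\delta}k_s$ and a single constant $p_\delta \eqdef \min_{s\in S_\delta}\Pr(\varrho^{\calC,s}\models\Diamond_{\le k_\delta}R)>0$; thus from \emph{every} $s\in S_\delta$, the path reaches $R$ within $k_\delta$ steps with probability at least $p_\delta$. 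This uniformity over the finite trap is the crucial ingredient.

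Next I would cut time into blocks of length $k_\delta$ after $m_\delta$, setting $t_n \eqdef m_\delta + n\,k_\delta$, and let $G_n$ be the event that $X_i\in S_\delta$ and $X_i\notin R$ for all $m_\delta\le i\le t_n$. By the Markov property at time $t_n$, conditioning on $X_{t_n}=s\in S_\delta\setminus R$ and bounding the block event ``stay in $S_\delta$ and avoid $R$ on $(t_n,t_{n+1}]$'' by the larger event ``avoid $R$ on $(t_n,t_{n+1}]$'' (of probability $\le 1-p_\delta$ uniformly in $s$), a straightforward induction gives $\Pr(G_{n+1})\le (1-p_\delta)\,\Pr(G_n)$, hence $\Pr(G_n)\le(1-p_\delta)^n$. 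Since on $A_\delta$ the path stays in $S_\delta$ from time $m_\delta$ onward and on $B$ it never meets $R$, we have $A_\delta\cap B\subseteq G_n$ for every $n$, so letting $n\to\infty$ yields $\Pr(A_\delta\cap B)=0$. Therefore $\Pr(B)\le\Pr(A_\delta\cap B)+\Pr(A_\delta^{\,c}) < \delta$, and since $\delta>0$ was arbitrary, $\Pr(B)=0$, which is the claim.

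The main obstacle I anticipate is the bookkeeping in the block induction: one must make sure the Markov-property decomposition at $t_n$ cleanly separates the ``avoid $R$ for the next $k_\delta$ steps'' event (uniformly bounded by $1-p_\delta$) from the past, and that the containment-in-$S_\delta$ requirement in $G_n$ is genuinely needed (the uniform bound $p_\delta$ is available only at states of $S_\delta$, which is exactly why the trapping hypothesis is indispensable and why the inclusion $A_\delta\cap B\subseteq G_n$ must carry both the containment and the avoidance conditions). Everything else reduces to the routine geometric estimate above.
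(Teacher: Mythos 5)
Your proof is correct and follows essentially the same route as the paper's: finiteness of $S_\delta$ gives a uniform horizon and a uniform positive probability of hitting $R$ from any state of $S_\delta$, a block-wise Markov argument yields geometric decay of the probability of avoiding $R$ while trapped in $S_\delta$, and letting $\delta\to 0$ concludes. The only cosmetic difference is that the paper phrases the geometric estimate as a conditional probability given that the path stays in $S_\delta$ forever (and uses shortest paths to $R$ rather than bounded-horizon reachability probabilities), whereas you work with intersections of events, which is arguably cleaner.
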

\begin{proof}
  Fix some $\delta>0$.  Let $\ell \in \bbN$ be the maximal length over
  $s \in S_\delta$ of a shortest path from $s$ to $R$ and $p_{\min}>0$
  the minimal probability of these paths.  Let $k\in \bbN$. Then
  $\Pr\left( \bigwedge_{m_\delta\leq i \leq m_\delta+k\ell}
    X^{\calC,s_0}_i\notin R \mid \bigwedge_{m_\delta\leq i}
    X^{\calC,s_0}_i\in S_\delta\right)\leq (1-p_{\min})^k$.

  Letting
  $k$ go to $\infty$, one gets
  $\Pr\left(\bigwedge_{m_\delta\leq i} X^{\calC,s_0}_i\notin R \mid
    \bigwedge_{ m_\delta\leq i} X_i\in S_\delta\right)=0$ implying
  $\Pr\left(\varrho^{\calC,s_0} \models \Diamond R \mid
    \bigwedge_{m_\delta\leq i} X^{\calC,s_0}_i\in S_\delta\right)=1$.
  Thus
  $\Pr\left(\varrho^{\calC,s_0} \models \Diamond
    R\right)>1-\delta$. Since $\delta$ is arbitrary, one gets
  $\Pr\left(\varrho^{\calC,s_0} \models \Diamond R\right)=1$.
\end{proof}

Then using both martingale theory and the previous lemma we establish
another sufficient condition based on a non negative state function
non increasing on average (i.e. the expected next value).  A similar
proof for recurrence of irreducible Markov chains can be found
in~\cite{FMM1995}.

\ifARXIV
\theoremattractor*
\begin{proof}
\else
\begin{proof}[Proof of Theorem~\ref{theorem:cnsdec}]
  \fi
  Since we are interested in reachability of $R$, w.l.o.g. we assume
  that all $s\in R$ are absorbing and thus
  $\sum_{s'\in S}\Pt(s,s')\cdot \mathcal L(s')= \mathcal L(s)$.

  We fix some initial state $s_0$ and consider the random sequence of
  states $(X^{\calC,s_0}_n)_{n\in \bbN}$, which we simply write
  $(X_n)_{n \in \bbN}$.  Define $\mathcal F_n$ the $\sigma$-algebra
  generated by $(X_m)_{m\leq n}$ and $Y_n= \mathcal L(X_n)$.  Due to
  the inequation
  $\sum_{s'\in S}\Pt(s,s') \mathcal L(s')\leq \mathcal L(s)$ for all
  $s \in S$ and the memoryless property of Markov chain
  $\Es\left(Y_{n+1}\mid \mathcal F_n\right)=\Es\left(Y_{n+1}\mid
    X_n\right)= \sum_{s'\in S}\Pt(X_n,s') \cdot \calL (s') \leq
    \calL(X_n)=Y_n$. Thus $(Y_n)_{n\in \bbN}$ is a
  supermartingale. Consider the limit $Y_\infty$ of this supermartingale: it satisfies
  $\Es\left(Y_\infty\right) \leq \mathcal L(s_0)<\infty$.

  We use Lemma~\ref{lemma:csdec} to conclude that $R$ is an
  attractor. Towards a contradiction assume that the sufficient
  condition of Lemma~\ref{lemma:csdec} is not satisfied. There is some
  $\delta>0$ such that for all finite set $S^*$ and $m\in \bbN$,
  $\Pr\left(\bigvee_{m\leq i} X_i\notin S^*\right)\geq \delta$.  For
  every $n \in \bbN$, choose $S^*_n= \mathcal L^{-1}([0,n])$. The
  event
  $E_1=\left\{\bigwedge_{m\in \bbN}\bigvee_{m\leq i} X_i\notin
    S^*_n\right\}$ is the limit of decreasing events with probability
  larger than or equal to $\delta$. So $\Pr(E_1)\geq \delta$. Consider
  the event $E_2=\{Y_\infty\geq n\}$: then $E_1\subseteq E_2$. Thus
  $\Pr\left(E_2\right)\geq \delta$.  Now, by Markov's inequality
  applied to the random variable $Y_\infty$,
  $\Es\left(Y_\infty\right)\geq n \cdot \Pr\left(Y_\infty\geq
    n\right)\geq n\delta$. Since this is true for all $n$,
  $\Es\left(Y_\infty\right)=\infty$, which is a contradiction. The
  sufficient condition of Lemma~\ref{lemma:csdec} is then satisfied,
  which implies that $R$ is an attractor.
\end{proof}

The next proposition shows that choosing $\calW^{p}$ as an abstraction
with $1/2<p<p^+$ ensures decisiveness of $\calC'$.

\ifARXIV
  \propdecisive*
  \begin{proof}
    From Proposition~\ref{proposition:layered1}, $(\calW^p,[0,N_0])$
    is a $\alpha$-abstraction of $(\calC,F)$, so $\calC'$ is
    well-defined.  We exhibit some $N_1 \ge N_0$ s.t.
    $\alpha^{-1}([0,N_1]) \cup \{s_-\}$ is a finite attractor of
    $\calC'$, which implies decisiveness of $\calC'$ w.r.t. $T$
    (thanks to Lemma~{3.4} of~\cite{ABM07}).

    To do so, we apply Theorem~\ref{theorem:cnsdec} to the Markov chain
    $\calC$, using the layered function $\calL$, which coincides with
    $\alpha$ on $S \setminus \Av(F)$ and extended by $\calL(s_-)=0$ as
    the Lyapunov function. It remains to show the inequation on
    $\mathcal{L}$.

    Let $s \in S'\setminus \{s_-\}$ with $\alpha(s)=n>N_0$.
    \begin{description}
      \item[Case $\Pt^=(s)=1$.] We compute:
            \[
              \begin{array}{rcl}
                \sum_{s' \in S'} \Pt'(s,s') \cdot \calL(s') & =   & \sum_{s' \in
                  S\setminus
                  \Av(F)}
                \Pt'(s,s')
                \cdot
                \calL(s')                                                                                                     \\
                                                            & =   & n \cdot \sum_{s' \in S\setminus \Av(F)} \Pt'(s,s')        \\
                                                            &     & { \text{\small(since for every $s' \in S$, $\Pt'(s,s')>0$
                implies $\alpha(s')=\alpha(s)$)}}                                                                             \\
                                                            & \le & n = \mathcal{L}(s)
              \end{array}
            \]
      \item[Case $\Pt^=(s)<1$.] We compute:
            \begin{flalign*}
                 & \calL(s) - \sum_{s' \in S'}\Pt'(s,s') \cdot \calL(s')        \\
              =~ & \sum_{s' \in S'}\Pt' (s,s') \cdot (\calL(s) - \calL(s'))     \\
              =~ & \sum_{k \ge n+1} \sum_{s' \in S'\ \text{s.t.}\ \calL(s') =k}
              (n-k)\Pt' (s,s') + \sum_{s' \in S'\ \text{s.t.}\ \calL(s') = n-1}
              \Pt' (s,s') +
              n\Pt' (s,s_-)                                                     \\
              =~ & \sum_{k \ge n+1} \;\sum_{s' \in S'\ \text{s.t.}\ \calL(s') =k} \hspace{-0.3cm}
              - \kappa^{k-n} (k-n)\Pt (s,s') + \hspace{-0.5cm} \sum_{s' \in S'\ \text{s.t.}\
                \calL(s') = n-1} \hspace{-0.5cm} \kappa^{-1}\Pt (s,s') + n (1-h(s))
            \end{flalign*}
            Observe that $\lim_{x \to +\infty} x \kappa^{x} =0$. So let
            $B = \sup_{x \ge 0} x \kappa^{x} \ge \kappa$. Using
            Corollary~\ref{coro:stronger},
            \begin{flalign*}
                   & \calL(s) - \sum_{s' \in S'}\Pt (s,s') \cdot \calL(s')                      \\
              \ge~ & -B \Pt^+(s) + \frac{1}{\kappa} \Pt^-(s)                                    
              + n
              \frac{2p-1}{(1-p)p}\cdot (\Pt^-(s)+\Pt^+(s)) \cdot (p^+-p)                        \\
              \ge~ & -B\Pt^+(s) + n  \frac{2p-1}{(1-p)p}\cdot (\Pt^-(s)+\Pt^+(s)) \cdot (p^+-p) \\
              \ge~ & \Pt^+(s)(-B + n  \frac{2p-1}{(1-p)p}\cdot (p^+-p))
            \end{flalign*}
            It is then sufficient to define $N_1$ such that
            $-B + N_1 \frac{2p-1}{(1-p)p}\cdot (p^+-p)\geq 0$.
    \end{description}
    The condition of
    Theorem~\ref{theorem:cnsdec} holds for all states. Thus,
    $\calL^{-1}([0,N_1]) = \alpha^{-1}([0,N_1]) \cup \{s_-\}$ is then a
    finite attractor of $\calC'$, which concludes the proof.
  \end{proof}
\else
  \begin{proof}[Proof of Proposition~\ref{prop:layereddecisive}]
    From Proposition~\ref{proposition:layered1}, $(\calW^p,[0,N_0])$
    is a $\alpha$-abstraction of $(\calC,F)$, so $\calC'$ is
    well-defined.  We exhibit some $N_1 \ge N_0$ s.t.
    $\alpha^{-1}([0,N_1]) \cup \{s_-\}$ is a finite attractor of
    $\calC'$, which implies decisiveness of $\calC'$ w.r.t. $T$
    (thanks to Lemma~{3.4} of~\cite{ABM07}).

    To do so, we apply Theorem~\ref{theorem:cnsdec} to the Markov chain
    $\calC$, using the layered function $\calL$, which coincides with
    $\alpha$ on $S \setminus \Av(F)$ and extended by $\calL(s_-)=0$ as
    the Lyapunov function. It remains to show the inequation on
    $\mathcal{L}$.

    Let $s \in S'\setminus \{s_-\}$ with $\alpha(s)=n>N_0$. If
    $\Pt^=(s)=1$, it is easy to get that $\sum_{s' \in S'} \Pt'(s,s')
      \cdot \calL(s') \le n = \mathcal{L}(s)$. If $\Pt^=(s)<1$, we
    compute:
    \[
      \begin{array}{l}
        \calL(s) - \sum_{s' \in S'}\Pt'(s,s') \cdot \calL(s')  \quad=\quad
        \sum_{s' \in S'}\Pt' (s,s') \cdot \left(\calL(s) - \calL(s')\right) \\
        =\quad \sum_{k \ge n+1} \sum_{\begin{array}{@{}c@{}} {\scriptstyle s'
                                          \in S' \text{s.t.}} \\[-4pt]
                                          {\scriptstyle \calL(s') =k}\end{array}}
        (n-k)\Pt' (s,s') + \sum_{\begin{array}{@{}c@{}}
                                     {\scriptstyle s' \in S'\
                                     \text{s.t.}} \\[-4pt] {\scriptstyle \calL(s') =n-1}\end{array}}
        \Pt' (s,s') +
        n\Pt' (s,s_-)                                                       \\
        =\quad \sum_{k \ge n+1} \sum_{\begin{array}{@{}c@{}}
                                          {\scriptstyle s' \in S'\
                                          \text{s.t.}} \\[-4pt] {\scriptstyle \calL(s') =k}\end{array}}
        - \kappa^{k-n} (k-n)\Pt (s,s') + \sum_{\begin{array}{@{}c@{}}
                                                   {\scriptstyle s' \in S'\
                                                   \text{s.t.}} \\[-4pt]
                                                   {\scriptstyle \calL(s')
                                                   =n-1}\end{array}}
        \frac{1}{\kappa}\Pt (s,s') + n (1-h(s))
      \end{array}
    \]
    Since $\lim_{x \to +\infty} x \kappa^{x} =0$, we let
    $B = \sup_{x \ge 0} x \kappa^{x} \ge \kappa$. We get after some
    calculation, and using Corollary~\ref{coro:stronger}:
    \[
      \begin{array}{c} \calL(s) - \sum_{s' \in S'}\Pt (s,s') \cdot \calL(s') \quad
        \ge\quad \Pt^+(s)(-B + n  \frac{2p-1}{(1-p)p}\cdot (p^+-p))
      \end{array}
    \]
    It is then sufficient to define $N_1$ such that
    $-B + N_1 \frac{2p-1}{(1-p)p}\cdot (p^+-p)\geq 0$.
    The condition of
    Theorem~\ref{theorem:cnsdec} holds for all states. By the theorem,
    $\calL^{-1}([0,N_1]) = \alpha^{-1}([0,N_1]) \cup \{s_-\}$ is then a
    finite attractor of $\calC'$, which concludes the proof.
  \end{proof}
\fi

This theorem shows that the existence of a \emph{Lyapunov} state
function $\calL$ for some set $R$ ensures that $R$ is an attractor and
that the expected time to reach it with an explicit upper bound (given
in the proof) depending on the value of $\calL$ for the initial state.

\ifARXIV
  \theoremfoster*
  \begin{proof}
    W.l.o.g. we assume that all states in $R$ are absorbing.  Pick some
    $s\in S$. Define $X_n(s)$ as the random state at time $n$ when
    starting from $s$ (that is, $X_n^{\calC,s}$) and $T_{s,R}$ the
    random time (in $\bbN \cup \{\infty\}$) to reach $R$ from $s$.
    Observe that:
    $\Es\left(T_{s,R}\right)=\sum_{n\in \bbN} \Pr\left(X_n(s)\notin
      R\right)$.

    On the other hand, the inequality satisfied by $\calL$ can be
    rewritten as follows. For $Y$ random variable over $ S \setminus R$,
    $\Es(\calL(X_{1}(Y))-\calL(Y))\leq -\varepsilon$.

    Let $n\in \bbN$. Since states of $R$ are absorbing,
    \begin{align*}
           & \Es\left(\calL(X_{n+1}(s))-\calL(X_{0}(s))\right)                                                                  \\
      =    & \sum_{k\leq n} \Es\left(\calL(X_{k+1}(s))-\calL(X_{k}(s))\cdot{\bf 1}_{X_k(s)\notin R}\right)                      \\
      =    & \sum_{k\leq n} \Es\left(\calL(X_{k+1}(s))-\calL(X_{k}(s))\mid X_k(s)\notin R\right).\Pr\left(X_k(s)\notin R\right) \\
      \leq & -\varepsilon\sum_{k\leq n}\Pr\left(X_k(s)\notin R\right)
    \end{align*}
    Since $\calL$ is nonnegative and $\Es(\calL(X_{0}(s))) = \calL(s)$,
    one gets:
    $\sum_{k\leq n}\Pr(X_k(s)\notin R) \leq
      \frac{\calL(s)}{\varepsilon}$.  Letting $n$ go to $\infty$, one gets
    $\Es(T_{s,R})\leq \frac{\calL(s)}{\varepsilon}$, which concludes the
    proof.
  \end{proof}
\else
  \begin{proof}[Proof of Theorem~\ref{foster}]
    W.l.o.g. we assume that all states in $R$ are absorbing.  Pick some
    $s\in S$. Define $X_n(s)$ as the random state at time $n$ when
    starting from $s$ (that is, $X_n^{\calC,s}$) and $T_{s,R}$ the
    random time (in $\bbN \cup \{\infty\}$) to reach $R$ from $s$.
    Observe that:
    $\Es\left(T_{s,R}\right)=\sum_{n\in \bbN} \Pr\left(X_n(s)\notin
      R\right)$.

    On the other hand, the inequality satisfied by $\calL$ can be
    rewritten as follows. For $Y$ random variable over $ S \setminus R$,
    $\Es(\calL(X_{1}(Y))-\calL(Y))\leq -\varepsilon$.

    Let $n\in \bbN$. Since states of $R$ are absorbing,
    \[
      \begin{array}{l}
        \Es\left(\calL(X_{n+1}(s))-\calL(X_{0}(s))\right)     \quad=\quad
        \sum_{k\leq n} \Es\left(\calL(X_{k+1}(s))-\calL(X_{k}(s))\cdot{\bf
        1}_{X_k(s)\notin R}\right)                                                                                \\
        =\quad     \sum_{k\leq n}
        \Es\left(\calL(X_{k+1}(s))-\calL(X_{k}(s))\mid X_k(s)\notin R\right) \cdot \Pr\left(X_k(s)\notin R\right) \\
        \leq\quad -\varepsilon\sum_{k\leq n}\Pr\left(X_k(s)\notin R\right)
      \end{array}
    \]
    Since $\calL$ is nonnegative and $\Es(\calL(X_{0}(s))) = \calL(s)$,
    one gets:
    $\sum_{k\leq n}\Pr(X_k(s)\notin R) \leq
      \frac{\calL(s)}{\varepsilon}$.  Letting $n$ go to $\infty$, one gets
    $\Es(T_{s,R})\leq \frac{\calL(s)}{\varepsilon}$, which concludes the
    proof.
  \end{proof}
\fi

The next proposition shows that choosing $\calW^{p}$ as an abstraction
with $\frac 1 2<p<p^+$ ensures decisiveness of $\calC'$ and finite expected
time for statistical model checking due to the previous theorem.

\ifARXIV
  \propdecisivefinite*
  \begin{proof}
    Let
    $ \hat{p}=\inf_{s \in \lambda^{-1}\left(]N_0,\infty[\right)}
      \Pt^+(s)$.  Due to Proposition~\ref{prop:layereddecisive}, we
    already know that $\calC'$ is decisive w.r.t. $T$. It remains to
    establish that the expected time to reach
    $T \cup \Av_{\calC'}(T) = T \cup \{s_-\}$ is finite. For every
    $s\in S\setminus \Av(F)$ with $\alpha(s)>N_0$,
    $0<\hat{p} \le \Pt^+(s)$, hence $\Pt^=(s)<1$. Therefore, using
    Corollary~\ref{coro:stronger}, for all
    $s \in S' \setminus \{s_-\} = S \setminus \Av_{\calC}(F)$ with
    $\alpha(s)>N_0$,
    \[
      \begin{array}{lcl}
        \calL(s) - \sum_{s' \in S'}\Pt (s,s') \cdot \calL(s')
         & \ge & \Pt^+(s) \cdot \left(-B + n  \cdot \frac{2p-1}{(1-p)p}\cdot (p^+-p)\right) \\
         & \ge & \hat{p} \cdot \left(-B
        + n  \cdot \frac{2p-1}{(1-p)p}\cdot (p^+-p)\right)
      \end{array}
    \]
    Let $N_1\geq N_0$ be such that
    $\hat{p} \cdot \left(-B + N_1 \cdot \frac{2p-1}{(1-p)p}\cdot (p^+-p)
      \right)\geq 1$ and
    $R= \calL^{-1}([0,N_1]) = \alpha^{-1}([0,N_1]) \cup \{s_-\}$. Then
    the condition of Theorem~\ref{foster} holds with
    $\varepsilon=1$. Applying it, the expected time to reach $R$
    from $s \in S \setminus \Av(F)$ with $\alpha(s)>N_1$ is finite and
    bounded by $\calL(s)=\alpha(s)$.

    It remains to establish that the expected time to reach $T$ from
    every state is finite. We fix an initial state $s_0 \in S'$ and we
    consider the infinite random sequence $(\gamma(n))_{n\in \nat}$,
    defined inductively as follows:
    $\gamma(0)=\min \left\{k \mid X_k(s_0)\in R\right\}$ and
    $\gamma(n+1)=\min \left\{k>\gamma(n) \mid X_k(s_0)\in R\right\}$;
    those are the successive times of visits in $R$.  Since $R$ is an
    attractor, this sequence is defined almost everywhere.  Let
    $h_{\max}=\max\left\{\calL(s')\mid \exists s \in R\ \text{s.t.}\
      \Pt'(s,s')>0\right\}$ the maximal level that can be reached in one
    step from $R$. Due to the previous paragraph, for all $n$ and
    $s\in R$,
    $\Es\left(\gamma(n+1)-\gamma(n)\mid X_{\gamma(n)}=s\right) \leq
      1+h_{\max}$ and $\Es\left(\gamma(0)\right) \leq \calL(s_0)$.

    Define 
    $\tilde{T} = T \cup \Av_{\calC'}(T) = T \cup \{s_-\}$,
    $\tau^{\calC',s_0,\tilde{T}}$ the (random) time to reach $\tilde{T}$
    from $s_0$ in $\calC'$, $\ell_{\max}$ as the maximal length of a shortest path
    from $s\in R$ to $T\cup \Av_{\calC'}(T)$ and $p_{\min}$ the
    minimal probability of these paths.  Then:
    \[
      \begin{array}{lcl}
        \multicolumn{3}{l}{\Es\left(\tau^{\calC',s_0,\tilde{T}}\right)} \\
        & =    & {\displaystyle
            \Es(\gamma(0))+ \sum_{n\in \nat}
            \Es\left(\gamma(n+1)-\gamma(n)\mid
            \bigwedge_{m\leq n} X_{\gamma(m)}\notin \tilde{T} \right) \cdot
            \Pr\left( \bigwedge_{m\leq n}
        X_{\gamma(m)}\notin \tilde{T} \right)}                                                                                                \\
                                                    & =    & {\displaystyle\Es(\gamma(0))+ \sum_{n\in \nat}
            \Es\left(\gamma(n+1)-\gamma(n)\mid
            X_{\gamma(n)}\notin \tilde{T} \right) \cdot
            \Pr\left(
        X_{\gamma(n)}\notin \tilde{T} \right)}                                                                                                \\
                                                    & \leq & {\displaystyle\calL(s_0)+ (1+h_{\max})\sum_{n\in \nat}
        \Pr\left(
        X_{\gamma(n)}\notin \tilde{T} \right)}                                                                                                \\
                                                    & =    & {\displaystyle\calL(s_0)+ (1+h_{\max})\sum_{n\in \nat}\sum_{0\leq j<\ell_{\max}}
        \Pr\left(
        X_{\gamma(n\ell_{\max}+j)}\notin \tilde{T} \right)}                                                                                   \\
                                                    & \leq & {\displaystyle\calL(s_0)+ (1+h_{\max})\ell_{\max}\sum_{n\in \nat}
        \Pr\left(
        X_{\gamma(n\ell_{\max})}\notin \tilde{T} \right)}                                                                                     \\
                                                    & \leq & {\displaystyle\calL(s_0)+ (1+h_{\max})\ell_{\max}\sum_{n\in \nat}
        (1-p_{\min})^n<\infty}
      \end{array}
    \]

  \end{proof}
\else
  \begin{proof}[Proof of Proposition~\ref{prop:layereddecisivefinite}]
    Let
    $ \hat{p}=\inf_{s \in \lambda^{-1}\left(]N_0,\infty[\right)}
      \Pt^+(s)$.  Due to Proposition~\ref{prop:layereddecisive}, we
    already know that $\calC'$ is decisive w.r.t. $T$. It remains to
    establish that the expected time to reach
    $T \cup \Av_{\calC'}(T) = T \cup \{s_-\}$ is finite. For every
    $s\in S\setminus \Av(F)$ with $\alpha(s)>N_0$,
    $0<\hat{p} \le \Pt^+(s)$, hence $\Pt^=(s)<1$. Therefore, using
    Corollary~\ref{coro:stronger}, for all
    $s \in S' \setminus \{s_-\} = S \setminus \Av_{\calC}(F)$ with
    $\alpha(s)>N_0$,
    \[
      \begin{array}{lcl}
        \calL(s) - \sum_{s' \in S'}\Pt (s,s') \cdot \calL(s')
         & \ge & \Pt^+(s) \cdot \left(-B + n  \cdot \frac{2p-1}{(1-p)p}\cdot (p^+-p)\right) \\
         & \ge & \hat{p} \cdot \left(-B
        + n  \cdot \frac{2p-1}{(1-p)p}\cdot (p^+-p)\right)
      \end{array}
    \]
    Let $N_1\geq N_0$ be such that
    $\hat{p} \cdot \left(-B + N_1 \cdot \frac{2p-1}{(1-p)p}\cdot (p^+-p)
      \right)\geq 1$ and
    $R= \calL^{-1}([0,N_1]) = \alpha^{-1}([0,N_1]) \cup \{s_-\}$. Then
    the condition of Theorem~\ref{foster} holds with
    $\varepsilon=1$. Applying it, the expected time to reach $R$
    from $s \in S \setminus \Av(F)$ with $\alpha(s)>N_1$ is finite and
    bounded by $\calL(s)=\alpha(s)$.

    It remains to establish that the expected time to reach $T$ from
    every state is finite. We fix an initial state $s_0 \in S'$ and we
    consider the infinite random sequence $(\gamma(n))_{n\in \nat}$,
    defined inductively as follows:
    $\gamma(0)=\min \left\{k \mid X_k(s_0)\in R\right\}$ and
    $\gamma(n+1)=\min \left\{k>\gamma(n) \mid X_k(s_0)\in R\right\}$;
    those are the successive times of visits in $R$.  Since $R$ is an
    attractor, this sequence is defined almost everywhere.  Let
    $h_{\max}=\max\left\{\calL(s')\mid \exists s \in R\ \text{s.t.}\
      \Pt'(s,s')>0\right\}$ the maximal level that can be reached in one
    step from $R$. Due to the previous paragraph, for all $n$ and
    $s\in R$,
    $\Es\left(\gamma(n+1)-\gamma(n)\mid X_{\gamma(n)}=s\right) \leq
      1+h_{\max}$ and $\Es\left(\gamma(0)\right) \leq \calL(s_0)$.

    Define 
    $\tilde{T} = T \cup \Av_{\calC'}(T) = T \cup \{s_-\}$,
    $\tau \eqdef \tau^{\calC',s_0,\tilde{T}}$ the (random) time to reach $\tilde{T}$
    from $s_0$ in $\calC'$, $\ell_{\max}$ as the maximal length of a shortest path
    from $s\in R$ to $T\cup \Av_{\calC'}(T)$ and $p_{\min}$ the
    minimal probability of these paths.  Then:
    \[
      \begin{array}{lcl}
        \Es\left(\tau\right) & =    &
        \Es(\gamma(0))+ \sum_{n\in \nat}
        \Es\left(\gamma(n+1)-\gamma(n)\mid
        \bigwedge_{m\leq n} X_{\gamma(m)}\notin \tilde{T} \right) \cdot
        \Pr\left( \bigwedge_{m\leq n}
        X_{\gamma(m)}\notin \tilde{T} \right)                                                            \\
                             & =    & \Es(\gamma(0))+ \sum_{n\in \nat}
        \Es\left(\gamma(n+1)-\gamma(n)\mid
        X_{\gamma(n)}\notin \tilde{T} \right) \cdot
        \Pr\left(
        X_{\gamma(n)}\notin \tilde{T} \right)                                                            \\
                             & \leq & \calL(s_0)+ (1+h_{\max})\sum_{n\in \nat}
        \Pr\left(
        X_{\gamma(n)}\notin \tilde{T} \right)                                                            \\
                             & =    & \calL(s_0)+ (1+h_{\max})\sum_{n\in \nat}\sum_{0\leq j<\ell_{\max}}
        \Pr\left(
        X_{\gamma(n\ell_{\max}+j)}\notin \tilde{T} \right)                                               \\
                             & \leq & \calL(s_0)+ (1+h_{\max})\ell_{\max}\sum_{n\in \nat}
        \Pr\left(
        X_{\gamma(n\ell_{\max})}\notin \tilde{T} \right)                                                 \\
                             & \leq & \calL(s_0)+ (1+h_{\max})\ell_{\max}\sum_{n\in \nat}
        (1-p_{\min})^n<\infty
      \end{array}
    \]

  \end{proof}
\fi

\section{Details on the implementation presented in Section~\ref{subsec:impl}}

\subsection{Data-structure for exact summation}

Algorithm~\ref{algo:approx} heavily relies on the capacity to accurately sum
probabilities of very different magnitudes a large number of times.  Indeed in
early versions of the implementation, we have observed that without refined
dedicated summation algorithms, the program does not terminate.  Some
methods exist to improve the accuracy of summation like Kahan summation
algorithms~\cite{10.1145/363707.363723} but are not sufficient in our setting.
So we propose a data structure with better accuracy when summing up positive
values, at the cost of increased memory consumption and time.

We present our data structure in the context of values in the interval
$[0,1]$, which is sufficient for probabilities. It encodes such a
value $r$ as a table of $512$ integers (each encoded on 64 bits) such
that the content each cell $c[i]$ represents a floating point value
(float) with the content of the cell being the mantissa and the index
of the cell $i$ being the exponent. The value is encoded as the sum of
the floats encoded by the cells, i.e. $r=\sum_{i\leq 512} c[i]2^{-i}$.

\ifARXIV
  \SetKwFunction{exponent}{exponent}%
  \SetKwFunction{mantissa}{mantissa}%
  \SetKwFunction{buildFloat}{buildFloat}%

  We use three functions commonly available in any programming language
  for manipulating floats: $\exponent$, $\mantissa$ and $\buildFloat$
  which respectively extracts the exponent of a float as an integer,
  extracts the mantissa as an integer and builds a float given an
  exponent and a mantissa. Given a float $f$, they satisfy
  $f = \buildFloat(\exponent(f), \mantissa(f))$.

  The addition of a float $x$ to a table $T$ encoding a number is
  performed by Algorithm~\ref{algo:fldata}. The float is broken down
  into its exponent and mantissa i.e., $x= m 2^{-u}$ with
  $ m\leq 2^{52}$ and $1\leq u \leq 512$.  There are two cases : first
  (line 4) if $T[u]=0$ then $T[u]$ is set to $m$.  Otherwise the float
  stored at index $u$ is built ($y = T[u] 2^{-u}$) and $x$ and $y$ are
  added $z=x+y-\textit{err}$. The first subcase corresponds to the
  absence of overflow ($\textit{err}=0$ implying $\exponent(z)=u$)
  during this addition (line 9), then $T[u]=\mantissa(z)$. In case of an
  overflow occurs (line 11) then only the mantissa of the error is
  stored back in the table (i.e., $T[u]= \mantissa(\textit{err})$) and
  Algorithm~\ref{algo:fldata} is called recursively on $z$ (with
  $\exponent(z)=u+1$).  In the worst case there are recursive calls over
  the whole range of $T$.

  \begin{algorithm}
    \SetKwFunction{FRecurs}{add}%
    \SetKwInOut{Input}{input}\SetKwInOut{Output}{output}
    \SetKwProg{Fn}{def}{\string:}{}
    \Fn(){\FRecurs{T,x}}{
      \Input{$T$ a table representing a probability, $x$ a float
        representing a probability}
      \Output{None, the table $T$ is updated inplace.}
      $e := \exponent{x}$\;
      $m := \mantissa{x}$\;
      \uIf{$T[e] = 0$}{
        $T[e] \leftarrow m$\;
      }\Else{
        $y:= \buildFloat(e,T[e])$\;
        $z:= x+y$; \tcc{a numerical error may occur here}
        \uIf{ $\exponent{z}=e$ }{
          $T[e] := \mantissa{z}$\;
        }\Else{
          $T[e] := \mantissa{(z-y)-x}$; \tcc{the error is compensated here}
          $\FRecurs{T,z}$\;
        }    }
    }
    \caption{Data-structure encoding probabilities\label{algo:fldata}
      with accurate summation.}
  \end{algorithm}

\else

  The addition of a float $x$ to a table $T$
  encoding a number is done as follows (more details
  in~\cite{ARXIV}). The float is broken down into its exponent and
  mantissa i.e., $x= m 2^{-u}$ with $ m\leq 2^{52}$ and
  $1\leq u \leq 512$.  There are two cases. First if $T[u]=0$ then
  $T[u]$ is set to $m$. Otherwise the float stored at index $u$ is
  built ($y = T[u] 2^{-u}$) and $x$ and $y$ are added. If this sum is
  at most $2^{52}$, then it is stored in $T[u]$; otherwise there is an
  overflow, and $T[u]$ is assigned $x+y-2^{52}$, while the procedure
  is applied recursively with exponent $u+1$. In the worst case there
  are recursive calls over the whole range of $T$.
\fi

\subsection{Heap with update}\label{subsec:heap}
Algorithm \ref{algo:approx} requires a data structure storing the set
of states that have been visited with the probability and likelihood
of the path reaching them. This data structure maps states to two real
numbers representing the probability and the likelihood of the state.
It requires to support three operations:
\begin{itemize}[nosep]
  \item insertion of a new mapping $s \mapsto (p,l)$ of a state to its
        probability and likelihood in the data structure;
  \item removing and retrieving the mapping with maximal probability;
  \item given a state $s$ updating the probability and likelihood of
        this state.
\end{itemize}
Such a data structure can be implemented with a heap and a hash table,
which points to the node in the heap allowing update.  All operations
are performed in logarithmic time w.r.t. to the number of elements in
the data structure.

\subsection{Implementation of the numerical algorithm for decisive
  Markov chains}
\label{subsec:num}

\ifARXIV
  Algorithm~\ref{algo:algodec} is a specialization of
  Algorithm~\ref{algo:approx} where the function to evaluate is the
  likelihood function of an importance sampling defined via an
  abstraction. As likelihood of an abstraction is a monoidal function of
  the path, the algorithm merges paths leading to the same state and
  only stores for each state the probability to reach it and its
  likelihood. A natural implementation of Algorithm~\ref{algo:approx}
  would have been using a queue. Using a heap where states are ordered
  by their probability to be reached and merging states as explained
  (and implemented in Algorithm~\ref{algo:algodec}) represents a large
  improvement.  The data structure for the heap is described in
  section~\ref{subsec:heap}. The merge operation is done on line 19, the
  probability are added while a weighted average is taken for the
  likelihood.

  \begin{algorithm}
    \SetKwInOut{Input}{input}\SetKwInOut{Output}{output}
    \SetKwData{Data}{data} \Input{$\calC=(S,\Pt)$ a Markov
      chain, $s_0 \in S$ a state,    $\calC'=(S \setminus \Av_{\calC}(F) \uplus \{s_-\},\Pt')$ a
      biased Markov chain of $(\calC,T,F)$, $\varepsilon>0$ a precision}
    \Output{An interval of width $\varepsilon$}
    \Data{$H$ an ordered mapping between $S$ and
      $([0;1]\times \mathbb{R}^+)$.  $p_{\text{fail}}$ and
      $p_{\text{succ}}$ two data structures encoding float with exact
      summation.}\;
    $H := \{ s_0 \rightarrow (1.0,1.0)\} $;
    $p_{\text{fail}}:=0$; $p_{\text{succ}}:=0$; $e:=0$\;
    \While{
    $H \neq \emptyset \wedge (1.0 - (p_{\text{fail}} +
      p_{\text{succ}}) > \varepsilon)$ } {
    $ s\rightarrow (w,L) := \textsf{pop}\_\textsf{max}(H)$\;
    \For { $s'$ s.t. $\Pt'(s,s') >0$ } {
      $L' := \frac{\Pt(s,s')}{\Pt'(s,s')} \cdot L$ ;
      $w' := w \cdot \Pt'(s,s')$\;
      \lIf{$s' \in T$}{ $\textsf{add}(p_{\text{succ}}, w' )$; $\textsf{add}(e, w'L' )$
      }\lElseIf{$s' \in T_- \cup \{s_-\}$}{
        $\textsf{add}(p_{\text{fail}}, w' )$ }
      \Else{ \uIf{
          $H[s']\neq \bot$}{ $(w'',L'') := H[s']$;
          $\textsf{update}(H, s' \rightarrow ( w'+w'', \frac{w' \cdot L'+ w''
              \cdot L''}{w'+w''}))$}
        \lElse{
          $\textsf{insert}(H, s' \rightarrow ( w', L'))$ } } } }
    $\Return([e-\varepsilon/2,e+\varepsilon/2])$
    \caption{Algorithm solving the \EvalEV problem\label{algo:algodec}}
  \end{algorithm}

  \begin{proposition}
    Algorithm~\ref{algo:algodec} terminates when applied on a decisive Markov chain.
  \end{proposition}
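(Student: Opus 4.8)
The plan is to reduce the statement to its unspecialised counterpart, Proposition~\ref{coro:algo:approx}, the only genuinely new ingredient being that the fair FIFO queue is replaced by a heap ordered by reach‑probability. First I would record the two structural invariants that make the specialisation in Algorithm~\ref{algo:algodec} sound. \emph{Mass conservation:} when a state $s$ with mass $w$ is popped, its successors carry probabilities $w\cdot\Pt'(s,s')$ summing to $w$, and the parts landing in $T$ (resp.\ in $\{s_-\}\cup\Av_{\calC'}(T)$) are transferred to $p_{\text{succ}}$ (resp.\ $p_{\text{fail}}$); hence at every iteration $p_{\text{succ}}+p_{\text{fail}}+\sum_{t\in H}w_t=1$, so $p_{\text{succ}}+p_{\text{fail}}$ is non‑decreasing and the heap mass $M:=1-(p_{\text{succ}}+p_{\text{fail}})$ is non‑increasing. \emph{Soundness of merging:} since the evaluated likelihood is \emph{monoidal}, storing a single pair (total reach‑probability, weighted‑average likelihood) per state loses no information, so the quantities $e$, $p_{\text{succ}}$, $p_{\text{fail}}$ agree with those of the unmerged run. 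Consequently it suffices to prove that, under decisiveness of $\calC'$ w.r.t.\ $T$, the loop test $M>\varepsilon$ fails after finitely many iterations.

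I would then mimic the mass‑level argument of Proposition~\ref{coro:algo:approx}: decisiveness gives $\lim_d \Pr\big(\varrho^{\calC',s_0}\models\Diamond_{\le d}(T\cup\{s_-\})\big)=1$, so fix $d$ with this probability $>1-\varepsilon$. It then remains to show that the heap is \emph{fair} in the sense of the footnote to Algorithm~\ref{algo:approx}, namely that every state placed in $H$ is popped after finitely many iterations. Indeed, fairness forces every undecided prefix of length $\le d$ to be fully expanded after finitely many rounds, whence $p_{\text{succ}}+p_{\text{fail}}\ge\Pr\big(\Diamond_{\le d}(T\cup\{s_-\})\big)>1-\varepsilon$, contradicting the loop guard.

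The crux is thus the fairness of the max‑probability heap, and this is precisely where decisiveness enters — it fails in general, since for a non‑decisive chain a lineage $u_0\to u_1\to\cdots$ with $\Pt'(u_k,u_{k+1})=1-\epsilon_k$ and $\sum_k\epsilon_k<\infty$ keeps its masses bounded below and starves any lighter waiting state. Assume for contradiction that the run is infinite (so the heap is non‑empty at each step and there are infinitely many pops) and that some state $s$ enters $H$ with mass $w_0>0$ and is never popped. Its mass only grows through merges, so every subsequent pop has mass $\ge w_0$; and since the heap mass is $\le 1$, at most $\lceil 1/w_0\rceil$ states have mass $\ge w_0$ at any instant. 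Setting $\Psi:=M-w_s$, this quantity is non‑negative and non‑increasing (as $M$ decreases and $w_s$ increases), so the total mass ever leaked to $p_{\text{succ}}+p_{\text{fail}}$ together with the mass ever fed into $s$ during the waiting period is at most $\Psi\le 1$. I would conclude that a strictly positive amount of probability must then circulate forever among undecided states without reaching $T\cup\{s_-\}$ and without reaching $s$; decisiveness forbids this, because it propagates to every undecided‑reachable state (from each state of $H$ the walk reaches $T\cup\{s_-\}$ almost surely), so no positive chunk of probability can remain among undecided states indefinitely — equivalently, along every infinite undecided path the one‑step escape probabilities sum to $\infty$. This contradiction gives fairness, and hence termination.

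The step I expect to be the main obstacle is turning ``a positive amount of mass circulates forever'' into a rigorous statement in the presence of \emph{merging and re‑insertion}, where states leave $H$ and may re‑enter, and then coupling this heap dynamics cleanly with the measure‑zero of infinite undecided runs provided by decisiveness. The bookkeeping above (mass conservation and the monovariant $\Psi$) is routine; it is this final coupling between the combinatorial behaviour of the priority queue and the probabilistic semantics of $\calC'$ that carries the real content of the proof.
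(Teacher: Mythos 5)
Your overall strategy is the same as the paper's: argue by contradiction that each finite-depth prefix $\text{Tr}_{\le d}$ of the computation tree is eventually fully expanded, reduce this to fairness of the max-probability heap, and exploit the two invariants you record — mass conservation, and the fact that a state $s$ waiting forever in $H$ with mass $w_0>0$ forces every subsequent pop to have mass $\ge w_0$. Up to that point your bookkeeping matches the paper's proof, and your identification of heap fairness as the crux is correct.

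The gap is exactly where you flag it, and it is a genuine one: the passage from ``every pop has mass $\ge w_0$ forever'' to a contradiction with decisiveness. Your proposed bridge — a positive chunk of mass circulates forever among undecided states, and decisiveness forbids this because along every infinite undecided path the one-step escape probabilities sum to $\infty$ — is a statement about individual infinite paths, whereas the heap mass at any instant is spread over an evolving frontier of finite paths that split and re-merge; neither the monovariant $\Psi$ nor the bound $\lceil 1/w_0\rceil$ on the number of heavy states ties the circulating mass to a single lineage, so the per-path consequence of decisiveness cannot be invoked directly. The missing idea (and the one the paper uses) is a second truncation depth: by decisiveness choose $d'\ge d$ such that the decided (successful or lost) paths of length at most $d'$ carry probability $>1-w_0$. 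Each vertex of the finite tree $\text{Tr}_{\le d'}$ contributes its mass to the heap once and is expanded at most once, so there is a round $r$ after which no pop ever again expands a frontier path of length $\le d'$. But every frontier path of length $>d'$ extends an undecided path of length $d'$, so the total frontier mass carried by such paths is $<w_0$, while the pop at round $r+1$ has mass $\ge w_0$: it must therefore include a frontier path of length $\le d'$, a contradiction. With this argument in place of your final step, the rest of your reduction goes through as in Proposition~\ref{coro:algo:approx}.
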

  \begin{proof}
    Assume that there exists some decisive Markov chain $\calC'$ (not
    necessarily obtained by importance sampling) on which
    Algorithm~\ref{algo:algodec} does not terminate.  We will establish
    that for all $d\in \nat$, the execution visits all vertices of
    $\text{Tr}_{\le d}$, the prefix of depth $d$ of $\text{Tr}$ the
    computation tree of $\calC'$. Since (by decisiveness of $\calC'$)
    there exists some $d$ such that the sum of the probability of the
    successful and lost paths of length at most $d$ is greater or equal
    than $1-\varepsilon$ implying termination, a contradiction.

    Towards a contradiction, assume that there exists some $d$ such that
    at least one vertex of $\text{Tr}_{\le d}$ is not visited. This
    implies that there exists some vertex $s$ of $\text{Tr}_{\le d}$
    that has been inserted in $H$ but not visited. Let $w$ be the weight
    (i.e. the probability of the path that has reached $s$) associated
    with $s$ when inserted in $H$. Observe that this weight can only be
    increased later.  Consider $d'\geq d$ such that the sum of the
    probabilities of the successful and lost paths of length at most
    $d'$ is larger than $1-w$. Since $\text{Tr}_{\le d'}$, there exists
    some round $r$ such that the execution does not visit anymore a
    vertex of $\text{Tr}_{\le d'}$.  Since the vertex $s$ belongs to $H$
    with weight larger than or equal to $w$, some vertex in
    $\text{Tr}_{\le d'}$ has to be selected in round $r+1$, which yields
    a contradiction.
  \end{proof}

\else

  Algorithm~\ref{algo:approx} can be specialized to
  likelihood functions that are ``monoidal'' functions of the path, in
  the sense that paths leading to the same state can be merged while
  only retaining the probability to reach it together with a weighted
  average (over the paths) of the likelihood. Furthermore, using a
  heap where states are ordered by their probability to be reached
  (and merging them) represents a large improvement. The data
  structure for the heap is described in
  section~\ref{subsec:heap}. One can show that the heap management
  policy is fair, and that the corresponding algorithm terminates on
  decisive Markov chains (see~\cite{ARXIV}).

\fi

\end{document}